\newtheorem{theorem}{Theorem}[section]
\newtheorem{lemma}[theorem]{Lemma}
\newtheorem{corollary}[theorem]{Corollary}
\newtheorem{proposition}[theorem]{Proposition}
\newtheorem{definition}[theorem]{Definition}
\newtheorem{remark}[theorem]{Remark}
\newtheorem{claim}[theorem]{Claim}
\newtheorem{conjecture}[theorem]{Conjecture}
\newcommand{\Id}{\mathbb{I}}
\newcommand{\K}{\cS_{\textnormal{sym}}}
\renewcommand{\leq}{\leqslant}
\renewcommand{\geq}{\geqslant}
\DeclareMathOperator{\vrad}{\mathrm{vrad}}
\DeclareMathOperator{\vol}{\mathrm{vol}}
\DeclareMathOperator{\conv}{\mathrm{conv}}
\DeclareMathOperator{\Span}{\mathrm{span}}
\DeclareMathOperator{\tr}{\mathrm{tr}}
\DeclareMathOperator{\diag}{\mathrm{Diag}}
\DeclareMathOperator{\spec}{\mathrm{sp}}
\newcommand{\N}{\mathbf{N}}
\newcommand{\R}{\mathbf{R}}
\newcommand{\C}{\mathbf{C}}
\newcommand{\Hone}{\cM_{n}^{sa,1}}
\newcommand{\Hzero}{\cM_{n}^{sa,0}}
\DeclareMathOperator{\E}{\mathbf{E}}
\renewcommand{\P}{\mathbf{P}}
\newcommand{\mL}{\mathcal{L}}
\newcommand{\mU}{\mathcal{U}}
\newcommand{\mF}{\mathcal{F}}
\newcommand{\SEP}{\cS_0}
\newcommand{\st}{\  : \ }
\newcommand{\be}{\begin{equation}}
\newcommand{\ee}{\end{equation}}
\newcommand{\bl}{\begin{lemma}}
\newcommand{\el}{\end{lemma}}
\newcommand{\eprop}{\end{proposition}}
\newcommand{\bprop}{\begin{proposition}}
\newcommand{\e}{\varepsilon}
\newcommand{\s}{\sigma}
\newcommand{\cD}{\mathcal{D}}
\newcommand{\cH}{\mathcal{H}}
\newcommand{\cK}{\mathcal{K}}
\newcommand{\cM}{\mathcal{M}}
\newcommand{\cS}{\mathcal{S}}
\newcommand{\cU}{\mathcal{U}}
\newcommand{\bC}{\mathbf{C}}
\newcommand{\iy}{\infty}
\begin{document}

\title{Entanglement thresholds for random induced states}
\keywords{Entanglement, quantum states, random quantum states}

\author{Guillaume Aubrun}
%\address{Institut
%Camille Jordan, Universit\'{e} Claude Bernard Lyon 1}
%\email{aubrun@math.univ-lyon1.fr}
%
\author{Stanis{\L}aw J. Szarek}
%\address{Case Western Reserve University and Universit\'e Pierre et Marie Curie-Paris 6}
%\email{szarek@cwru.edu}
%
\author{Deping Ye}
%\address{Carleton University and The Fields Institute}
%\email{deping.ye@mun.ca}

\begin{abstract}

For a random quantum state on $\cH=\C^d \otimes \C^d$ obtained by
partial tracing a random pure state on $\cH \otimes \C^s$, we
consider the question whether it is typically separable or
typically entangled. For this problem, we show the existence of a
sharp threshold $s_0=s_0(d)$
 of order roughly $d^3$.
More precisely, for any $\e>0$ and for $d$ large enough, such a
random state is entangled with very large probability when $s \leq
(1-\e)s_0$,
 and separable with very large probability  when $s \geq (1+\e) s_0$.
One consequence of this result is as follows:  for a system of $N$
identical particles in a random pure state, there is a threshold
$k_0 = k_0(N)\sim N/5$ such that
 two subsystems of $k$ particles each typically share entanglement if  
$k>k_0$,  and  typically do not share entanglement if $k<k_0$. Our
methods work  also for multipartite systems and for ``unbalanced''
systems such as $\C^{d_1} \otimes \C^{d_2} $, ${d_1} \neq {d_2} $.
The arguments rely on random matrices, classical convexity,
high-dimensional probability and geometry of Banach spaces; some
of the auxiliary results may be of reference value.

\end{abstract}

\maketitle

\section{Introduction}

In recent years, random constructions have become a very fruitful
tool in quantum information theory. The study of random channels
and random states has particularly intensified since the
influential paper \cite{Hayden2006}. The most successful
achievement of the probabilistic method in quantum information
theory is arguably Hastings's proof that suitably chosen random
channels provide a counterexample to the additivity conjecture for
classical capacity of quantum channels \cite{Hasting2009}.

In this paper, we address the most fundamental question one can
ask about a random state: {\em is it entangled?} Detecting and
exploiting entanglement, first discovered in the 1930's
\cite{EPR1},   is a central problem in quantum information and
quantum computation at least since Shor's work \cite{Sh1} on
integer factoring. However, the structure of the set of entangled
quantum states is still not well-understood. The well-known
Peres--Horodecki positive partial transpose (PPT) criterion
\cite{Ho1, Pe1} is a necessary condition for (lack of)
entanglement, but this condition is sufficient only for
qubit-qubit and qubit-qutrit systems \cite{St1, Wo1}.

We consider here a family of random states that are known as
random induced states. These are mixed states on $\cH$ obtained
after partial tracing, over some ancilla space $\cH_a$, a
uniformly distributed pure state on $\cH \otimes \cH_a$.
This leads to a 
natural family of probability measures on the set of states on
$\cH$ (see \cite{Zycz2,Ben1}), where $\dim \cH_a$, the dimension
of the environment,  is a parameter. Indeed, if all that we know
about the system $\cH \otimes \cH_a$ are the dimensions of the
factors $\cH$ and $\cH_a$, and that it is isolated from the rest
of the environment, the corresponding random induced state is a
reasonable model for, or at least a reasonable first guess about
the state of the system $\cH$.

Of course, the induced state $\rho$ being random, we cannot expect
to be able to tell what $\rho$ is. However, we may be able to
infer some properties of $\rho$ if they are generic (that is,
occur with probability close to $1$) for a given random model. For
specificity, consider $\cH=\C^d \otimes \C^d$ and let us focus on
the question {\em ``Is a random state entangled?''} As it turns
out, the answer depends in a crucial and rather precise way on the
size of the environment with low-dimensional environments leading
typically to entangled states and high-dimensional environments
leading to separable states.

In the special case $\dim \cH_a=\dim \cH =d^2$, we are led to the
uniform distribution on the set of states (i.e.,  uniform with
respect to the usual Hilbert--Schmidt volume). As was shown in
\cite{AS1},  the proportion of states (again, measured with respect to
the Hilbert--Schmidt volume) that are separable is extremely small
in large dimensions. This means that when $\dim \cH_a = d^2$,
random induced states are typically entangled. This was extended
to the case when $\dim \cH_a$ is slightly larger than $d^2$ in
\cite{Deping2010,Aubrun2011}.

On the other hand, it was proved in \cite{Hayden2006} that random
induced states on $\C^d \otimes \C^d$ are typically separable when
$\dim \cH_a$ is proportional to $d^4$. In the present paper, we
bridge the gap between these estimates and show that the threshold
$s_0(d)$ between separability and entanglement occurs at order
$d^3$ (more precisely, we obtain the inequalities $cd^3 \leq
s_0(d) \leq Cd^3 \log^2 d$, where $C,c>0$ are universal constants,
independent of the dimensions involved). More specifically, we
show that for any $\e>0$, the following holds if  $d$ is large
enough. When the environment dimension $s=\dim \cH_a$ is smaller
than $(1-\e)s_0$, the
 random induced state is entangled with overwhelming probability.
When the environment dimension exceeds $(1+\e)s_0$, the  random
induced state is separable with overwhelming probability.

The heuristics behind the consequence stated in the abstract are 
now as follows. If we have a system of $N$ particles (with $D$
levels each) which is in a random pure state, and two subsystems
of $k$ particles each, then the ``joint state''  of the subsystems
is modeled by a random induced  state on $\C^d \otimes \C^d$ with
$d=D^k$  and $s = D^{N-2k}$. In particular, the relation $k=N/5$,
or $N=5k$, corresponds exactly to $s=d^3$. The reason for the
threshold effect is that passing from $k$ to $k-1$ increases $s$
by a factor of $D^2\geq 4$, while -- as stated above --  the
transition from generic  entanglement to generic  separability
takes place when $s$ is increased only slightly (the simultaneous
decrease of   $d$ by a factor of $D\geq 2$ only amplifies the
effect).

Our method of proof is geometric and uses tools from
high-dimensional convexity, a field also known as asymptotic
geometric analysis.  This has become a fruitful approach to study
the geometry of quantum states in large dimension; recent
contributions include, for instance, \cite{AubrunSzarekWerner2011,
SzarekWerner2011,fhs}. Asymptotic geometric analysis strives to
understand properties of geometric structures in high dimension,
which -- because of the ``central limit theorem-like'' effects --
are believed to be relatively easier to  pinpoint than those in
low dimension. The phenomenon of concentration of measure plays a
central role in the whole theory.

While our starting point is an estimation of the Hilbert--Schmidt
volume of the set $\cS$ of separable states on $\C^d \otimes
\C^d$, the relevant geometric parameter is not the
volume itself, but rather the mean width of the dual set
$\cS^\circ$. We estimate it in an indirect way, using the
$MM^*$-estimate, a general result from asymptotic geometric
analysis.

Our method also applies to the case of multipartite systems
$(\C^d)^{\otimes k}$. In the asymptotic limit,  when $k$ is fixed
and $d$ goes to infinity, we show that the threshold for
separability vs. entanglement occurs for $s$ of order $d^{2k-1}$
(up to logarithmic factors). However, to keep the exposition as
simple as possible, we focus on the bipartite case $k=2$ in most
of the paper.

Finally, we also show that ``one half'' of our main result
(entanglement is generic for small $\dim \cH_a$) can be proved in
a more elementary way by working with the densities of induced
measures. Similar arguments appeared already in the papers
\cite{Deping2009,Deping2010}.

 This paper is organized as follows. In Section
\ref{section:notation}, we introduce some mathematical background,
necessary notation, state our main theorem and present a high-level 
overview of the proof.  A rigorous proof is found in Sections
 \ref{section:threshold} and \ref{section:estimate-s0}. 
 To not to obscure the structure of the proof, some general (but 
technically involved) mathematical tools are collected and/or developed in
several Appendices, some of which may be of independent interest
and/or reference value. 
In Section \ref{section:multipartite},
we extend our results to the case of multipartite systems. In
Section \ref{section:alternative-argument}, we will provide
another, more elementary, proof of our result, which is valid only
in the regime where entanglement is generic. Miscellaneous remarks
and loose ends are addressed in Section \ref{misc}. 

General references for concepts related to
quantum information theory are \cite{Ben1, Nie1}, for
those related to asymptotic geometric analysis \cite{MS, Pisier1989, ledoux},
and for those related to random matrices \cite{agz, ds-handbook}.  
A high-level non-technical overview of the
results of this paper and of a related article \cite{Aubrun2011}
can be found in \cite{AubrunSzarekYeShort}.

\section{Notation, background and the statement of the main theorem}
 \label{section:notation}

The letters $C,c,c_0,...$ denote absolute numerical constants
(independent of the instance of the problem) whose values may
change from place to place. When $A,B$ are quantities depending on
the dimension (and perhaps some other parameters), the notation $A
\lesssim B$ means that there exists an absolute constant $C>0$
such that the inequality $A \leq CB$ holds in every dimension.
Similarly $A \simeq B$ means both $A \lesssim B$ and $B \lesssim
A$. As usual, $A\sim B$ means that  $A/B\to 1$ as the dimension
(or some other relevant parameter) tends to $\infty$, while $A =
o(B)$ means that $A/B\to 0$.

There are various dimensions appearing repeatedly in this paper,
and we will stick to the following notational scheme. We will work
in a complex Hilbert space $\cH=\C^n$, with $n=d^k$, so that we
may identify $\C^n$ with $(\C^d)^{\otimes k}$ ($k=2$ in most of
the paper). We will always assume that $d\geq 2$. The set of
states on $\C^n$ has real dimension $m=n^2-1$. We also consider an
ancilla space $\C^s$, and pure states in $\C^n \otimes \C^s$.
Occasionally the unit sphere in $\C^n \otimes \C^s$ will be
identified with the unit sphere $S^{m-1} \subset \R^m$, with
$m=2ns$. General results from convex geometry will also be stated
in $\R^m$ (except in Appendices \ref{app:wasserstein} and
\ref{app:marcenko-pastur-vs-GUE}, where we consider  $\R^{\dim
\cH}$,  and so $\R^n$ is more appropriate).

\subsection{Basic facts from convex geometry}\label{geometry}

A {\em convex body}  $K$ in a real finite-dimensional vector space
(usually identified with $\R^m$) is a convex compact set with
non-empty interior. We denote by $|\cdot|$ the Euclidean norm. A
convex body $K$ is {\em symmetric} if $K=-K$. The convex bodies we
consider may be non-symmetric, but we usually ``arrange'' that the
origin belongs to the interior of $K$. The {\em gauge}  associated to
such $K$ is the function $\|\cdot\|_K$ defined for $x \in \R^m$ by
\[ \|x\|_K := \inf \{ t \geq 0 \st x \in tK \} .\]
If $K$  is symmetric, $\|\cdot\|_K$ is a norm 
and $K$ is precisely the unit ball in that norm. 
However,  in general, we may have $\|x\|_K \neq \|-x\|_K$ when $K$ is
non-symmetric. 

By $\vol$ we denote the Lebesgue measure on $\R^m$. The {\em volume
radius} of a convex body $K \subset \R^m$ is defined as
\[ \vrad (K) := \left( \frac{\vol K}{\vol B_2^m} \right)^{1/m} ,\]
where $B_2^m$ denotes the unit Euclidean ball. In words,
$\vrad(K)$ is the radius of the Euclidean ball with same volume as
$K$. The same notation will be used if $K$ ``lives'' in an
$m$-dimensional linear or affine subspace of a larger space.

If $K \subset \R^m$ is a convex body with origin in the  interior,
the {\em polar} of $K$ is the convex body $K^\circ$ defined as
\[ K^\circ := \{ y \in \R^m \st \langle x,y \rangle \leq 1 \ \ \hbox{\rm for all  } \ x \in K \} .\] 
A basic result from convex analysis is that $(K^\circ)^\circ = K$ 
(this is the {\em bipolar theorem}, a baby version of the Hahn--Banach theorem). 

The {\em inradius} of a convex body $K$ is the largest radius $r$ of a
Euclidean ball contained in $ K$.
 Similarly, the {\em outradius} of $K$ is the smallest $R>0$ such that $K$ is contained
 in a ball of radius $R$.  In most cases the optimal balls will be centered at the origin,
 then $r$ and $R$ can be  equivalently defined as the ``best'' constants  for which
 $R^{-1}|\cdot| \leq \|\cdot\|_K \leq r^{-1}|\cdot|$.

If $u$ is a vector from the unit sphere $S^{m-1}$, the {\em support
function} of $K$ in the direction $u$ is $h_K(u):=\max _{x\in K}
\langle x, u\rangle = \|u\|_{K^\circ}$. Note that $h_K(u)$ is the
distance from the origin to the hyperplane tangent to $K$ in the
direction $u$. The {\em mean width}\footnote{It would have been
geometrically more precise to call this quantity the mean {\em
half}-width.} of $K$ is then defined as
\begin{equation}\label{mean:width:definition} w (K):=\int _{S^{m-1}}
h_K(u)\,d\s(u)=\int _{S^{m-1}} \|u\|_{K^\circ} d\s(u), 
\end{equation} where $\,d\s(u)$ is the normalized spherical
measure on the sphere $S^{m-1}$ (this definition makes sense for
any bounded set $K$).

The Urysohn's inequality (see, e.g., \cite{Pisier1989}) is a
fundamental result which compares the volume radius and the mean
width: for any convex body $K \subset \R^m$, we have
\begin{equation} \label{urysohn}
\vrad(K) \leq w(K) .
\end{equation}

\noindent It is often convenient to consider the Gaussian variant
of  mean width 
\[ w_G(K) := \E \|G\|_{K^\circ}, \]
where $G$ is a standard Gaussian vector in $\R^m$, i.e., a random
vector with independent $N(0,1)$ coordinates in any orthonormal
basis. One checks, by passing to polar coordinates, that for every
convex body $K \subset \R^m$
\begin{equation} \label{gammaN}
w_G(K) = \gamma_m \, w(K),
\end{equation}
where \be \label{gammaNbounds}
 \gamma_m := \E |G| = \frac{\sqrt{2}\Gamma((m+1)/2)}{\Gamma(m/2)},  \quad
\sqrt{m-1}\leq \gamma_m \leq \sqrt{m}, \ee
 is a constant depending only on $m$.

Note that we may extend the definition of the {\em Gaussian mean width} 
to all bounded sets $K \subset \R^m$ through the formula
\begin{equation} \label{defn-mean-width}
w_G(K) = \E \sup_{x \in K} \langle G,x \rangle. \end{equation} The
Gaussian mean width of $K$ is an intrinsic parameter and does not
depend on the ambient dimension: if $K$ lives in a subspace $E
\subset \R^m$, the formula \eqref{defn-mean-width}  gives the same
value whether $G$ is a standard Gaussian vector in $\R^m$, or a
standard Gaussian vector in $E$.

\subsection{Concentration of measure} \label{sec:concentration}

The phenomenon of {\em concentration of measure} 
plays a central role in our proofs.
We first recall the statement of L\'evy's lemma. In the statements
of Lemmas \ref{lemma:Levy} and \ref{lemma:Levy-local}, $\P$
 stands for the uniform measure on the sphere $S^{m-1}$,
normalized so that $\P(S^{m-1})=1$.

\begin{lemma}[L\'{e}vy's lemma \cite{Levy1951,MS}] \label{lemma:Levy}
 If $f: S^{m-1} \rightarrow \R$ is an $L$-Lipschitz
function, then for every $\e>0$,
\begin{equation*} \P(\left\{|f-M| >\e\right\}) \leq  C_1 \exp(-c_1m\e ^2/L^2),
\end{equation*}
 where $M$ is any central value of $f$, and
$C_1, c_1> 0$ are absolute constants.
\end{lemma}

By a {\em central value} of a random variable $X$ we mean either the
expectation or the median, or more generally  any number $M$ such
that $\P(X \geq M) \geq 1/4$ and $\P(X \leq M) \geq 1/4$. Any two
central values for the function $f$ appearing in L\'evy's lemma
differ by at most $C_2L/\sqrt{m}$.

While we will be {\em mostly} interested in concentration of functions 
on the sphere, the phenomenon appears also in many other contexts, 
for example in the Gaussian setting. A recent fairly comprehensive 
reference is the monograph \cite{ledoux}.

We will also consider situations in which a function $f$ has a
(possibly) large Lipschitz constant, while the restriction of $f$
to a large subset has a small Lipschitz constant. The following 
extension of L\'evy's lemma handles such a case. The trick behind
this lemma appeared in \cite{AubrunSzarekWerner2011} and
implicitly in \cite{Hasting2009}.

\begin{lemma}[L\'{e}vy's lemma, local version] \label{lemma:Levy-local}
Let $\Omega \subset S^{m-1}$ be a subset of measure larger than
$3/4$. Let $f : S^{m-1} \to \R$ be a function such that the
restriction of $f$ to $\Omega$ is $L$-Lipschitz. Then, for every
$\e>0$,
\begin{equation*}
\P(\{|f(x)-M_f| >\e\}) \leq  \P(S^{m-1} \setminus \Omega) + C_1
\exp(-c_1m\e ^2/L^2),\end{equation*} where $M_f$ is  the median of
$f$, and $C_1, c_1> 0$ are absolute constants.
\end{lemma}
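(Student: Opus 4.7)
The plan is to reduce Lemma \ref{lemma:Levy-local} to the global version (Lemma \ref{lemma:Levy}) via a Lipschitz extension argument. The difficulty the statement addresses is that $f$ need not be Lipschitz on all of $S^{m-1}$, only on the ``good set'' $\Omega$; outside $\Omega$ the function may oscillate wildly, which would ruin any direct application of concentration of measure. But if we can replace $f$ by a globally Lipschitz surrogate $g$ that agrees with $f$ on $\Omega$ and whose median is close to $M_f$, the loss will be controlled by $\P(\Omega^c)$.

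First I would produce such a surrogate by the McShane--Whitney formula: set
\[ g(x) := \inf_{y \in \Omega} \bigl( f(y) + L\, d(x,y) \bigr), \qquad x \in S^{m-1}, \]
where $d$ is the geodesic (or chord) distance on the sphere. Standard verification shows that $g$ is $L$-Lipschitz on the whole sphere and coincides with $f$ on $\Omega$ (using that $f|_\Omega$ is $L$-Lipschitz, one checks that the infimum at a point $x \in \Omega$ is attained at $y=x$).

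Next I would argue that $M_f$ is a central value of $g$ in the sense following Lemma \ref{lemma:Levy}. Since $f=g$ on $\Omega$ and $\P(\Omega) \geq 3/4$,
\[ \P(g \leq M_f) \;\geq\; \P(\{f \leq M_f\} \cap \Omega) \;\geq\; \P(f \leq M_f) - \P(\Omega^c) \;\geq\; \tfrac12 - \tfrac14 \;=\; \tfrac14, \]
and similarly $\P(g \geq M_f) \geq 1/4$. Therefore Lemma \ref{lemma:Levy}, applied to the globally $L$-Lipschitz function $g$ with the central value $M_f$, yields
\[ \P\bigl( |g - M_f| > \e \bigr) \;\leq\; C_1 \exp\!\bigl( -c_1 m \e^2 / L^2 \bigr). \]

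Finally, I would combine these ingredients by splitting on whether $x \in \Omega$:
\[ \P\bigl( |f - M_f| > \e \bigr) \;\leq\; \P\bigl( \{|f - M_f| > \e\} \cap \Omega \bigr) + \P(\Omega^c)
\;=\; \P\bigl( \{|g - M_f| > \e\} \cap \Omega \bigr) + \P(\Omega^c), \]
which is bounded by $C_1 \exp(-c_1 m \e^2/L^2) + \P(S^{m-1}\setminus \Omega)$, as desired. The only slightly delicate step is the extension/median comparison; once one sees that the hypothesis $\P(\Omega) \geq 3/4$ is precisely what guarantees $M_f$ remains a central value of $g$, the rest is routine.
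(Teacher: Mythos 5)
Your proposal is correct and follows essentially the same route as the paper: extend $f|_\Omega$ by the McShane--Whitney infimum formula to obtain a globally $L$-Lipschitz function agreeing with $f$ on $\Omega$, check that $M_f$ remains a central value of the extension because $\P(\Omega)>3/4$ together with the median property of $M_f$ forces both one-sided probabilities to exceed $1/4$, apply the global L\'evy lemma to the extension, and finish by splitting the event $\{|f-M_f|>\e\}$ according to membership in $\Omega$. The only cosmetic difference is that the paper bounds $\P(\Omega\cap\{f\geq M\})\geq 1/4$ via inclusion--exclusion while you subtract $\P(\Omega^c)$; these are the same estimate.
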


In Lemma \ref{lemma:Levy-local}, the median can be replaced by
another quantile (up to changes in the numerical constants).
However, in general, it cannot be replaced by the mean (we do not
assume any regularity of $f$ outside $\Omega$, therefore the
expectation may even fail to be well-defined).  Still, more often
than not, {\em some} information about global regularity of  $f$
is available and concentration around the mean can be inferred;
see the comment at the end of the proof and the remark following
Lemma \ref{lem:lipschitz-estimate} below.

\begin{proof}[Proof of Lemma \ref{lemma:Levy-local}] 
The key point is that in any metric space $X$, it is possible to
extend any $L$-Lipschitz function $h$ defined on a subset $Y$
without increasing the Lipschitz constant. Use, e.g., the formula
\[ \tilde{h}(x) = \inf_{y \in Y} \left[ h(y) + L\, {\rm dist} (x,y) \right].\]

By applying this fact to $X=S^{m-1}$, $Y=\Omega$ and
$h=f_{|\Omega}$, we obtain a function $\tilde{f} : S^{m-1} \to \R$
which is $L$-Lipschitz and coincides with $f$ on $\Omega$.
Moreover, if $M=M_f$ is the median of $f$, then
\[ \P( \{\tilde{f} \geq M \}) \geq \P(\Omega \cap \{ \tilde{f} \geq M \})
= \P(\Omega \cap \{ f \geq M \} ) \geq 1/4. \] Similarly, $\P(
\{\tilde{f} \leq M \}) \leq 1/4$. Hence $M$ is a central value for
$\tilde{f}$. By L\'evy's lemma,
\[ \P(\{|\tilde{f}-M| >\e\}) \leq C_1 \exp(-c_1m\e ^2/L^2). \]
Therefore,
\begin{eqnarray*} 
\P(\{|f-M| >\e\}) & \leq & \P(\{f \neq \tilde{f}\}) + \P(\{|\tilde{f}-M| >\e\}) \\
& \leq & \P(S^{m-1} \setminus \Omega) + C_1 \exp(-c_1m\e ^2/L^2),
\end{eqnarray*}
as claimed. Note that if we know, for example, that $f$ is
``reasonably bounded,'' then we can infer that the median $M$ and
the mean $\E f$ of $f$ do not differ very much (e.g., $|M-\E f|
\lesssim \|f-M\|_\iy\,  \P(S^{m-1} \setminus \Omega)$) and deduce
{\em a posteriori} concentration around the mean.
\end{proof}

\subsection{Quantum states}

Throughout the paper, we consider a (finite dimensional) complex
Hilbert space $\cH$, equipped with a norm which we will also
denote by $|\cdot|$.

A {\em quantum state} on $\cH$ is a positive trace one operator on
$\cH$. We use $\cD=\cD(\cH)$ to denote the set of all quantum
states on $\cH$. The extreme points of this set are {\em pure} states,
in particular if $\cH=\bC^n$, then
\[ \cD(\C^n) = \conv \{ |\psi\rangle \langle \psi |  \st  \psi \in \C^n, |\psi|=1 \}. \]
Above (and, when convenient, in what follows) we use Dirac's
bra-ket notation: $|\psi\rangle$ is a column vector, $ \langle
\psi |  = |\psi\rangle^\dagger$ is a row vector and $|\psi\rangle
\langle \psi |$ is their outer product, the orthogonal projection
onto $\C  \psi$. The set $\cD(\C^n)$ is contained in the (real)
space  $\cM_n^{sa}$ of $n\times n$ self-adjoint matrices, endowed
with the Hilbert--Schmidt inner product $\langle A,B \rangle = \tr
(AB)$. Whenever considering a geometric invariant (inradius, mean
width, \ldots) of a set of matrices, it will be tacitly understood
that the Hilbert--Schmidt Euclidean structure is used. This
applies also to spaces of not-necessarily-self-adjoint and/or
rectangular matrices; in that case $\langle A,B \rangle = \tr
(AB^\dagger)$. We will also occasionally use the Schatten $p$-norm
$\| A \|_p = \big( \tr (A^\dagger A)^{p/2} \big)^{1/p}$. The limit
case $\|\cdot \|_\iy$ coincides with the operator norm
$\|\cdot\|_{op}$ (from the category of  normed spaces), while
 $\|\cdot\|_{2}=\|\cdot\|_{HS}$ is the Hilbert--Schmidt (or Frobenius) norm.

For every dimension $n$, we introduce now a family of probability
distributions on $\cD(\C^n)$ which plays a central role in this
paper. These probability measures are known as \emph{induced
measures} and can be described as follows. Fix a positive integer $s$ and
let  $|\psi \rangle \langle \psi |$ be a random pure state on the
Hilbert space $\C^n \otimes \C^s$, where $\psi$ is a random unit
vector  uniformly distributed on the sphere
 in $\C^n \otimes \C^s$. Then consider the partial trace
of $|\psi \rangle \langle \psi |$ over $\C^s$; the resulting state
is a random state on $\C^n$ and we denote by $\mu_{n,s}$ its
distribution.

When $s \geq n$, the probability measure $\mu_{n,s}$ has a density
with respect to the Lebesgue measure on $\cD(\C^n)$ which has a
simple form \cite{Zycz2}
\begin{equation} \label{eq:formula-density} \frac{d\mu_{n,s}}{d\! \vol}(\rho)
= \frac{1}{Z_{n,s}} (\det \rho)^{s-n} ,\end{equation} where
$Z_{n,s}$ is a normalization factor. Note that formula
\eqref{eq:formula-density} allows to define the measure
$\mu_{n,s}$ (in particular) for every real $s \geq n$, while the
partial trace construction makes sense only for integer values of
$s$.

In the important special case when $s=n$, the density of the
measure $\mu_{n,n}$ is constant. A random state distributed
according to $\mu_{n,n}$ is uniformly distributed on $\cD(\C^n)$
(i.e., uniformly with respect to the Lebesgue measure).

It is important to consider the case when the Hilbert space $\cH$
itself carries a tensor product structure. For simplicity we will
largely focus on the bipartite balanced case where $\cH = \C^d
\otimes \C^d$ (multipartite Hilbert spaces are considered in
Sections \ref{section:multipartite} and
\ref{section:alternative-argument}, while some remarks on
extensions to the unbalanced setting $\cH = \C^{d_1} \otimes
\C^{d_2}$, $d_1\neq d_2$ are given in Section \ref{unbalanced}).

A quantum state on $\C^d \otimes \C^d$ can be either separable or
entangled, and this dichotomy is fundamental in quantum theory. By
definition \cite{Wer1}, a state $\rho$ is {\em separable} if it can be
written as a convex combination of product states (i.e. states of
the form $\rho_1 \otimes \rho_2$, where $\rho_1,\rho_2$ are states
on $\C^d$). If we denote by $\cS(\C^d \otimes \C^d)$ $\subset
\cD(\C^d \otimes \C^d)$ the subset of separable states, an
equivalent description is the following
\[ \cS(\C^d \otimes \C^d)= 
\conv \{ | \psi_1 \otimes \psi_2 \rangle \langle \psi_1 \otimes \psi_2 | \st
\psi_1,\psi_2 \in \C^d, |\psi_1|=|\psi_2|=1 \}.\]

	A state which is not separable is called {\em entangled}. We denote
$n=d^2$ the (complex) dimension of the space $\C^d \otimes \C^d$,
which we identify with $\C^n$. The {\em affine} hyperplane
\[ \Hone =\{A\in \cM_{n}^{sa} \; : \;  \ \tr\,(A)=1\}, \]
contains the set $\cD = \cD(\C^n)$ of states and its subset $\cS=
\cS(\C^n)$ of separable states; they are both of full (real)
dimension $m=n^2-1$.

We want to consider $\Hone$ as a vector space where the role of
the origin is played by the {\em maximally mixed state} $\Id/n$, where
$\Id$ denotes the identity matrix. One way to formalize this point
of view is to work with the {\em linear} hyperplane
\[ \Hzero = \{A\in \cM_{n}^{sa}\  \; : \;  \ \tr\,(A)=0\} = \Hone - \Id/n \]
and with the translated   convex body
\[ \SEP = \cS - \Id/n = \{ \rho - \Id/n   \; : \;   \rho \in \cS \} \subset \Hzero.\]
Similarly, we denote $\cD_0=\cD-\Id/n$ (in the sequel, we will use
analogous notation also for other sets). The geometry of the sets
$\cS$ and $\cD$ plays a central role in our argument. Estimates on
some known geometric parameters associated to these convex bodies
are gathered in Table \ref{table-radii}.

\begin{table}[htbp]
\caption{Radii of $\cD$ and $\cS$ for $\C^d \otimes \C^d$, where
$n=d^2$. All these parameters are translation invariant, so their
values for $\cD_0$ and $\SEP$ are respectively the same. In each
row the quantities increase from left to right.}
\label{table-radii}
\begin{tabular}{|c|c|c|c|c|}
  \hline
   & inradius & volume radius & mean width & outradius\\
  \hline
  $\vphantom{\displaystyle \sum}  \cD(\C^d \!\otimes\! \C^d)$ &
  $= \!1/\!\sqrt{n(n\!-\!1)}$ &  $\simeq n^{-1/2}$ & $\simeq n^{-1/2}$ & 
  $=\! \sqrt{(n\!-\!1)/n}$ \\
  \hline
  $\vphantom{\displaystyle \sum} \cS(\C^d \!\otimes\! \C^d)$ &
  $= \!1/\!\sqrt{n(n\!-\!1)}$  & $\simeq n^{-3/4}$ & $\simeq n^{-3/4}$ & 
  $=\! \sqrt{(n\!-\!1)/n}$ \\
  \hline
\end{tabular}
\end{table}

The volume of $\cD$ was computed exactly in \cite{Zycz1} and it
was noted in \cite{Sz1} that the mean width and the volume radius
have the same order. The remarkable fact that $\cD$ and $\cS$ have
the same inradius was proved in \cite{BarnumGurvits2002}. (An
alternative argument is based on a dual formulation given in
\cite{SzarekWerner2008}, a proof of which was provided by H.-J.
Sommers, see \cite{OWR}.) Sharp bounds on the volume radius of
$\cS$ were given in \cite{AS1} (the ratio $\vrad(\cS)/\vrad(\cD)$
is estimated in Theorem 1 in \cite{AS1}). The estimate for the
mean width of $\cS$ does not appear explicitly in \cite{AS1},  but
follows from the argument since the upper bound on the volume
radius was obtained via Urysohn's inequality \eqref{urysohn}.
Finally, the calculation of the outradii is easy: they are
attained on pure states. Note that the inradii and the outradii of
$\cD$ and $\cS$ are attained on balls centered at $\Id/n$, which
is the only point invariant under isometries of each of these
bodies.

It is easily checked that $\cD_0(\C^n)^\circ = -n\cD_0(\C^n)$ 
(we recall that  the polar operation $^\circ$ 
is performed in the space $\Hzero$ of trace zero matrices).  
This is a consequence of the fact that the cone of positive
matrices is self-dual (cf. more general comments in the second paragraph
of Section \ref{section:estimate-s0}).
 We deduce immediately from Table \ref{table-radii}
that the mean width of $\cD_0(\C^n)^\circ$ is of order $\sqrt{n}$.
The situation is not so simple for $\cS$, and estimating the mean
width of $\SEP(\C^d \otimes \C^d)^\circ$ is the main technical
difficulty in our argument.

\subsection{Threshold for entanglement vs separability}
\label{results}

The main result of the paper is the following theorem.

\begin{theorem}
\label{threshold:separability} There are effectively computable
absolute constants $C,c>0$ and a function $s_0(d)$ satisfying
\[ cd^{3} \leq s_0(d) \leq Cd^{3} \log^2 d ,  \]
such that if $\rho$ is a random state on $\C^d \otimes \C^d$
distributed according to the measure $\mu_{d^2,s}$, then, for any
$\e>0$,
\begin{itemize}
 \item [(i)] if $s \leq (1-\e)s_0(d)$, we have
 \[ \P(\rho \textnormal{ is separable}) \leq 2  \exp(-c(\e)d^3), \]
 \item [(ii)] if $s \geq (1+\e)s_0(d)$ we have
 \[ \P(\rho \textnormal{ is entangled}) \leq 2 \exp(-c(\e)s ), \]
\end{itemize}
where $c(\e)$ is a positive constant depending only on $\e$.
\end{theorem}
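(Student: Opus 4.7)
My approach is to reduce the dichotomy to a single scalar: $\rho$ is separable iff $X(\psi):=\|\rho-\Id/n\|_{\SEP}\leq 1$, where $\|\cdot\|_{\SEP}$ is the gauge of the centred separable body. I would show that $\E X$ crosses $1$ at some $s_0(d)$ with $cd^3\leq s_0(d)\leq Cd^3\log^2 d$, and that $X$ concentrates sharply around $\E X$. Write $\rho=\operatorname{tr}_{\C^s}|\psi\rangle\langle\psi|$. The traceless part $\tilde\rho:=\rho-\Id/n\in\Hzero$ has Hilbert--Schmidt norm $\asymp 1/\sqrt{s}$ for $s\geq n$ (from the standard identity $\E\operatorname{tr}(\rho^2)=(n+s)/(ns+1)$). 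Using a Wasserstein-type comparison between the Wishart law of $\tilde\rho$ and the Gaussian model on $\Hzero$ (of the sort presumably developed in one of the appendices referenced in the intro), I would replace $\tilde\rho$ by $\frac{1}{n\sqrt{s}}G$, where $G$ is a standard Gaussian on $\Hzero$ (real dimension $m=n^2-1$), yielding
\[
\E X \;\approx\; \frac{1}{n\sqrt{s}}\,\E\|G\|_{\SEP} \;=\; \frac{w_G(\SEP^\circ)}{n\sqrt{s}}.
\]

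The heart of the proof is then a two-sided estimate on $w_G(\SEP^\circ)$. The table tells us $w(\SEP)\simeq d^{-3/2}$ and $\vrad(\SEP)\simeq d^{-3/2}$. For the upper bound, apply the $MM^*$-estimate $w(\SEP)\,w(\SEP^\circ)\lesssim \log m\simeq \log d$ (after placing $\SEP$ in a suitable position, or using a non-symmetric version) to obtain $w(\SEP^\circ)\lesssim d^{3/2}\log d$, and hence, via \eqref{gammaN}--\eqref{gammaNbounds}, $w_G(\SEP^\circ)=\gamma_m\,w(\SEP^\circ)\lesssim d^{7/2}\log d$. For the lower bound, combine Urysohn's inequality \eqref{urysohn} with the Bourgain--Milman inverse Santal\'o inequality: $w(\SEP^\circ)\geq \vrad(\SEP^\circ)\gtrsim 1/\vrad(\SEP)\gtrsim d^{3/2}$, giving $w_G(\SEP^\circ)\gtrsim d^{7/2}$. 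Substituting,
\[
\frac{c\,d^{3/2}}{\sqrt{s}}\;\leq\;\E X\;\leq\;\frac{C\,d^{3/2}\log d}{\sqrt{s}},
\]
so $\E X=1$ is crossed at some $s_0(d)$ with $cd^3\leq s_0(d)\leq Cd^3\log^2 d$, as claimed.

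Finally I would upgrade the expectation to high probability. For a standard Gaussian $G$ on $\Hzero$, the Gaussian concentration inequality endows $\|G\|_{\SEP}$ with a sub-Gaussian tail around its mean, the variance proxy being the square of the outradius of $\SEP^\circ$, namely $1/r(\SEP)^2=n(n-1)\simeq d^4$. Transferred to $\tilde\rho\approx G/(n\sqrt{s})$, this becomes $\P(|X-\E X|>t)\leq 2\exp(-c\,t^2 s)$: in regime (i) we have $s\lesssim d^3$ and $\E X\geq 1+c(\e)$, so $\P(\rho\in\cS)\leq 2\exp(-c(\e)^2 s)\leq 2\exp(-c(\e)d^3)$; in regime (ii) we have $\E X\leq 1-c(\e)$, so $\P(\rho\notin\cS)\leq 2\exp(-c(\e)^2 s)$, matching the theorem. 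The \textbf{main obstacle} will be quantitative control of the Wasserstein/Gaussian approximation of $\tilde\rho$: the gauge $\|\cdot\|_{\SEP}$ has Lipschitz constant $\sqrt{n(n-1)}\simeq d^2$ in Hilbert--Schmidt norm, so the Wasserstein distance between the induced measure and its Gaussian surrogate must be controlled on a scale finer than $1/d^2$ for the substitution in $\E X$ to be valid. A secondary subtlety is ensuring $MM^*$ applies to $\SEP$ in its natural (local-unitary-invariant) position---possibly via symmetrization or a non-symmetric variant---but once these are settled, the remainder is a routine application of the convexity tools summarized above.
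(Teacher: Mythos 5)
Your high-level strategy matches the paper's: define $s_0$ through $w(\SEP^\circ)$, bound that mean width above and below, then combine a Wasserstein-type comparison with concentration. Your lower bound via Urysohn plus the inverse Santal\'o (Bourgain--Milman) inequality is a legitimate alternative to the paper's elementary Cauchy--Schwarz argument ($\gamma_m \leq (\E\|G\|_K)^{1/2}(\E\|G\|_{K^\circ})^{1/2}$), and gives the same order. But there are two genuine gaps on the other side of the argument.

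First, the upper bound. The $MM^*$-estimate $w(K)w(K^\circ)\lesssim \log m$ is \emph{not} a property of an arbitrary symmetric convex body --- it requires $K$ to be in $\ell$-position, and the $\ell$-position can distort widths arbitrarily. You write ``after placing $\SEP$ in a suitable position, or using a non-symmetric version'' as if this were routine, but it is precisely the crux of the paper. Since we need to bound $w(\SEP)w(\SEP^\circ)$ for $\SEP$ in its \emph{given} (physical) position, one must show that $\ell$-position does not move the body much; the paper proves that the isometry group of the symmetrization $\K=-\SEP\cap\SEP$ forces the $\ell$-position map to respect the decomposition $\Hzero=E\oplus F$ (Lemmas \ref{lem:irreducible} and \ref{lem:unique-ell}), and then shows the contribution from $F$ is negligible (Claim \ref{claim}, using the inradius/outradius bounds \eqref{bounds}). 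Without this, your inequality $w(\SEP)w(\SEP^\circ)\lesssim\log d$ is unjustified.

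Second, the concentration. A Wasserstein comparison controls expectations of Lipschitz functionals, not tail probabilities; it does not transfer the sub-Gaussian concentration of $\|G\|_{\SEP}$ (for a true Gaussian $G$) to $\|\tilde\rho\|_{\SEP}$ for the induced measure. Concentration must be proved directly for $\tilde\rho=\operatorname{tr}_{\C^s}|\psi\rangle\langle\psi|-\Id/n$, which the paper does via L\'evy's lemma on $S^{2ns-1}$. There is an additional wrinkle you do not address: the map $A\mapsto AA^\dagger$ on the Hilbert--Schmidt sphere is not globally Lipschitz, so one needs the local version of L\'evy's lemma (Lemma \ref{lemma:Levy-local}) together with the small-probability estimate on $\|A\|_\infty$. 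Finally, your passage from a bound $\exp(-c(\e)s)$ to $\exp(-c(\e)d^3)$ in part (i) is backwards: in regime (i) one only knows $s\leq(1-\e)s_0(d)$, which allows $s$ to be much smaller than $d^3$ (e.g., $s=d^2$), so $\exp(-cs)$ can be \emph{weaker} than $\exp(-cd^3)$. The paper closes this gap separately via the monotonicity Lemma \ref{lem:monotonicity} and the alternative (density) argument of Section \ref{section:alternative-argument}, as explained in Section \ref{sharpness}.
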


Theorem \ref{threshold:separability} asserts that, for fixed $d$,
the character of the induced state changes sharply from generic
entanglement to generic separability as $s$, the dimension of the
ancilla, increases. If we knew that the threshold function
$s_0(\cdot)$ was regular enough, an analogous statement with the
roles of $d$ and $s$ exchanged would immediately follow. 
The following is a result
in that direction which can be deduced with relatively little
effort. (A statement in a similar language -- but much less precise
--- appears in \cite{KZM}.)

\begin{corollary} \label{particles}
Consider a system of $N$ identical particles (qudits)  in a random
pure state. Then there is a threshold $k_0 = k_0(N)\sim N/5$ such
that
 two subsystems of $k$ particles each typically share entanglement if
$k>k_0$,  and  typically do not share entanglement if $k<k_0$.
\end{corollary}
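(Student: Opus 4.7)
The approach is a direct reduction to Theorem \ref{threshold:separability}. First, fix any partition of the $N$ qudits into three disjoint groups $A,B,C$ with $|A|=|B|=k$ and $|C|=N-2k$. If $|\psi\rangle$ is a Haar-random pure state on $(\C^D)^{\otimes N}$, unitary invariance of the Haar measure implies that the distribution of $\rho_{AB}=\tr_C|\psi\rangle\langle\psi|$ does not depend on the particular partition. From the very definition of the induced measures, $\rho_{AB}$ is distributed according to $\mu_{d^2,s}$ on $\cD(\C^d\otimes\C^d)$ with
$$d=D^k,\qquad s=D^{N-2k}.$$
So the question for Corollary \ref{particles} reduces to comparing $D^{N-2k}$ with the threshold value $s_0(D^k)$.

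Next, I plug into the sandwich $cd^3\leq s_0(d)\leq Cd^3\log^2 d$ from Theorem \ref{threshold:separability}. Writing $\log^2(D^k)=k^2\log^2 D$ and absorbing $D$-dependent factors into constants yields
$$\frac{D^{N-5k}}{C'k^2}\;\leq\;\frac{s}{s_0(d)}\;\leq\;\frac{D^{N-5k}}{c'}$$
for absolute constants $c',C'>0$ depending only on $D$. A unit decrease of $k$ (with $N$ fixed) multiplies this ratio by $D^{5}\geq 32$, so the ratio sweeps past $1$ very rapidly as $k$ varies. Concretely: if $k\geq N/5+k_1$ for a suitable integer constant $k_1=k_1(D)$, then $s/s_0(d)\leq 1/2$, and part (i) of Theorem \ref{threshold:separability} (with $\e=1/2$) gives $\P(\rho_{AB}\text{ separable})\leq 2\exp(-c(1/2)\,d^3)$; symmetrically, if $k\leq N/5-k_2\log N$ for a suitable $k_2=k_2(D)$, then $s/s_0(d)\geq 2$, and part (ii) gives $\P(\rho_{AB}\text{ entangled})\leq 2\exp(-c(1/2)\,s)$.

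Finally, define $k_0(N)$ to be, say, the smallest $k\in\{1,\dots,\lfloor N/2\rfloor\}$ for which the marginal of two $k$-particle blocks is entangled with probability at least $1/2$; the estimates above locate $k_0(N)$ in a window of length $O(\log N)$ about $N/5$, which gives $k_0(N)\sim N/5$. The only real obstacle is the logarithmic gap in the known two-sided bounds for $s_0(d)$, which prevents pinning $k_0(N)$ down to $O(1)$ accuracy. This is harmless for the asymptotic statement, because the multiplicative jump of at least $32$ in $s/s_0(d)$ per unit change of $k$ confines the transition to a window of width $O(\log N)=o(N)$.
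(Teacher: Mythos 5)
Your setup — reducing to $\mu_{d^2,s}$ with $d=D^k$, $s=D^{N-2k}$ — matches the paper's, and your ``window'' computation correctly shows that entanglement is typical for $k\geq N/5+O(1)$ and separability is typical for $k\leq N/5 - O(\log N)$. But this leaves a gap of width $O(\log N)$ in $k$ inside which you have no control over $p_k:=\pi_{D^k,D^{N-2k}}$, and the Corollary's claim is not merely that some $k_0\sim N/5$ exists but that for \emph{every} $k>k_0$ the marginal is typically entangled and for \emph{every} $k<k_0$ it is typically separable. Your $k_0$ (smallest $k$ with $p_k\leq 1/2$) does not do that job: for $k$ just above $k_0$ you have established nothing about $p_k$ (you invoke no monotonicity in $k$), and for $k$ just below $k_0$ you only know $p_k>1/2$, which is not ``typically separable.'' The uncertainty does not disappear by observing that the jump in $s/s_0(d)$ per unit $k$ is $\geq D^5$; the $O(\log N)$ width comes from the $\log^2 d$ slack in the two-sided bound for $s_0(d)$, not from the jump size, so the jump alone cannot compress the window.

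The paper's proof closes exactly this gap, and this is the genuinely new idea you are missing. It does \emph{not} try to locate $s_0(d)$ more precisely; instead it exploits the sharp $(1\pm\e)s_0(d)$ formulation of Theorem~\ref{threshold:separability} \emph{in the variable $s$ for a fixed $d$} together with the coupling Lemma~\ref{lem:monotonicity} ($\pi_{d_2,s}\leq\pi_{d_1,s}$ for $d_1\leq d_2$). Concretely: from $p_k\leq 1-\delta$, Lemma~\ref{lem:monotonicity} gives $\pi_{D^{k+1},D^{N-2k}}\leq 1-\delta$; then, since for fixed $d=D^{k+1}$ at most one power $D^i$ can fall in the narrow interval $((1-\e)s_0(d),(1+\e)s_0(d))$, two successive divisions of $s$ by $D$ land below $(1-\e)s_0(d)$, giving $p_{k+1}=\pi_{D^{k+1},D^{N-2k-2}}<\delta$. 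This ``$p_k\leq 1-\delta\Rightarrow p_{k+1}<\delta$'' step makes $\{k:p_k>1-\delta\}$ an initial segment and $\{k:p_k<\delta\}$ its complement from $k_0+1$ onward, which is exactly the sharp threshold. Your coarse window argument then serves only to show $k_0\sim N/5$, as a second and easier step. To repair your write-up, you should add the sharpness step (Lemma~\ref{lem:monotonicity} plus the one-$i$-in-the-gray-zone observation for the fixed-$d$ threshold in $s$), rather than rely on the $cd^3\leq s_0(d)\leq Cd^3\log^2 d$ sandwich alone.
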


We next describe the threshold function $s_0(d)$ appearing in the
main theorem. Let $G$ denote the standard Gaussian vector in the
space $\Hzero$, which we will also call a $\text{GUE}^0$ random
matrix. One may represent $G$ as follows. Start from an $n \times
n$ $\text{GUE}$ random matrix $G'$, which is the standard Gaussian
vector in the space $\cM_{n}^{sa}$ (see \cite{agz,ds-handbook}).
Then $G$ can be realized as $G'-\frac{\tr G'}{n} \Id$  (a
conditional expectation of $G'$). Equivalently, one may realize
$G$ by conditioning $G'$ to be of trace $0$. In this notation,
$s_0$ is defined as follows.

\begin{definition}
For every integer $d$, we define $s_0=s_0(d)$ by the formula
\begin{equation} \label{eq:def-s0} s_0(d) := \left(\frac{\E \|G\|_{\SEP}}{d^2} \right)^2
\sim w(\SEP^\circ)^2,\end{equation} where $G$ is a
$\textnormal{GUE}^0$ matrix of size $d^2 \times d^2$. 
(The relation ``$\sim$'' is  justified by \eqref{gammaN} and \eqref{gammaNbounds}.)
\end{definition}

\subsection{Overview of the proof} \label{overview}

Our proof of Theorem \ref{threshold:separability} consists of two
largely independent parts
\begin{itemize}
 \item showing that $s_0$ defined by \eqref{eq:def-s0} 
 is indeed a sharp threshold for the separability of random states,
 \item proving that $d^3 \lesssim s_0 \lesssim d^3 \log^2 d$.
\end{itemize}
The details of the two parts will %subsequently 
be dealt with  
in Sections  \ref{section:threshold} and  \ref{section:estimate-s0} respectively 
(except for some fine points regarding the probability estimate in
part (i) of the Theorem, which are clarified in Section \ref{sharpness}). 
To not to obscure the structure of the proof, some general (but 
technically involved) auxiliary results are relegated to appendices. 
The heuristic behind deducing Corollary \ref{particles} from Theorem
\ref{threshold:separability} was explained in the Introduction. A
rigorous argument requires two additional simple observations
(Lemmas \ref{lem:monotonicity} and \ref{lem:monotonicity2}) and is
sketched in Section \ref{proof:particles}.

And here is a ``high level''  overview of the argument. 
We first note tautological equivalences

\smallskip 
\quad $\rho$ is separable  $\iff$ $\rho \in \cS$  $\iff$ $\rho - \Id/n \in \SEP$  $\iff$ 
$\|\rho - \Id/n\|_{\SEP} \leq 1$.

\smallskip \noindent 
This means that Theorem \ref{threshold:separability} asserts that 
if $s$ is {\em noticeably  larger} than $s_0$ (that is, if $s \geq (1+\e)s_0$),  
then $\|\rho - \Id/n\|_{\SEP} \leq 1$ with probability close to $1$, and if 
$s$ is  noticeably  {\em smaller} than $s_0$, then  $\|\rho - \Id/n\|_{\SEP} > 1$ 
with probability close to $1$.  

The strategy is now to show that the function  $\rho \to \|\rho - \Id/n\|_{\SEP}$ 
is sufficiently regular 
(which is relatively straightforward) and that whenever $s$ is noticeably  larger 
than $s_0$,  then the median (or the expected value) of $\|\rho - \Id/n\|_{\SEP}$ 
is noticeably smaller than $1$.  The concentration phenomenon 
(Lemmas \ref{lemma:Levy} and \ref{lemma:Levy-local}) then implies that 
$\|\rho - \Id/n\|_{\SEP} \leq 1$ with probability close to $1$, as needed. 
Similarly, if $s$ is  noticeably  {\em smaller} than $s_0$, 
we need to establish that  the expected value of $\|\rho - \Id/n\|_{\SEP}$ 
is noticeably larger than $1$.

The problem thus reduces to figuring out the dependence of 
the expected value of $\|\rho - \Id/n\|_{\SEP}$ on the ancilla dimension $s$, 
which is implicit in the definition of the random state $\rho=\rho_{n,s} 
=\tr_{\C^s} | \psi \rangle \langle \psi |$.  
This turns out to be not so easy, partly because of the non-linear dependence 
of $\rho$ on $\psi$
(a random vector uniformly distributed on the unit sphere in $\C^n
\otimes \C^s$). However, it turns out (see Proposition \ref{prop:MP-vs-GUE} below) 
that, for all practical purposes, 
$\rho_{n,s} - \Id/n$ is equivalent to $A_{n,s}:= \frac{1}{n\sqrt{s}}G$, where 
$G$ is a random matrix distributed according to the standard Gaussian 
measure in the space of  $n\times n$ self-adjoint matrices 
with vanishing trace ($\text{GUE}^0$,
defined above in Section \ref{results}).  

This simplifies matters significantly since, first,  the dependence of 
$A_{n,s}$ on $s$ is {\em very} straightforward and, second, 
because the expected value of  $\|G\|_{\SEP}$ has geometric meaning: 
as explained in Section \ref{geometry}, it is explicitly related to 
$w(\SEP^\circ)$, the mean width of the polar of $\SEP$. 
However, estimating $w(\SEP^\circ)$ directly is still hard. 
The approach which succeeds is to proceed through a duality
argument, the idea being that for a  
``well-balanced''  convex body $K$,  its mean width $w(K)$ 
and the mean width of its polar, $w(K^\circ)$,  are approximately reciprocal, 
and that good estimates for $w(\SEP)$ exist in the literature. 

Some of the steps indicated above (for example, showing 
that the set $\SEP$ is well-balanced in the needed sense) 
are quite involved by themselves; we will try to convey 
additional heuristic arguments  clarifying such steps once  
precise statements are formulated and once appropriate 
notation is available.

\section{Proof that $s_0$ is a threshold for separability} 
\label{section:threshold}

Let $\rho = \rho_{n,s}$ be a random state on $\C^n$ with
distribution $\mu_{n,s}$. The first step is to approximate
$\rho-\Id/n$ by $\frac{1}{n\sqrt{s}}G$, where $G=G_n$ is an 
$n\times n$ $\text{GUE}^0$ random matrix.  
(This step would 
not be necessary if we were able to define the threshold dimension $s_0$ 
via the expected value -- or median -- of 
$\| \rho- {\Id}/{n}\|_{\SEP}$ rather than in terms of 
 the expected value of $\| G\|_{\SEP}$.)  
Here is a precise statement.

\begin{proposition} \label{prop:MP-vs-GUE}
Denote by $c_{n,s}$ (resp. $C_{n,s}$) the largest (resp. smallest)
constant such that for every convex body $K \subset \cM^{sa,0}_n$
containing $0$ in its interior, if $\rho$ is a random state on
$\C^n$ distributed according to $\mu_{n,s}$, and if $G$ is a
standard Gaussian vector in $\cM^{sa,0}_n$, we have
\begin{equation}
\label{eq:Cns} \frac{c_{n,s}}{n\sqrt{s}} \E \| G\|_{K} \leq \E
\left\| \rho- \frac{\Id}{n} \right\|_K \leq
\frac{C_{n,s}}{n\sqrt{s}} \E \| G\|_{K} .\end{equation} Then
 \[ \lim_{n,\frac{s}{n} \to \iy} C_{n,s} = \lim_{n,\frac{s}{n} \to \iy} c_{n,s} = 1 .\]
\end{proposition}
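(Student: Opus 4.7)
The plan is to use the Ginibre representation $\rho = WW^\dagger/\tr(WW^\dagger)$, where $W$ is an $n\times s$ matrix of i.i.d.\ standard complex Gaussian entries, and compare the centered Wishart matrix with a $\textnormal{GUE}^0$ matrix. Setting $M := WW^\dagger - s\Id$ and $M_0 := M - (\tr M /n)\Id$ for its trace-zero projection, an algebraic check gives
\[ \rho - \frac{\Id}{n} = \frac{M_0}{\tr(WW^\dagger)} = \frac{1}{ns\,T}\,M_0, \qquad T := \frac{\tr(WW^\dagger)}{ns}. \]
By $1$-homogeneity of $\|\cdot\|_K$ one obtains $\|\rho - \Id/n\|_K = (n\sqrt{s}\,T)^{-1}\|M_0/\sqrt{s}\|_K$, so that the proposition reduces to showing
\[ \E\!\left[\frac{1}{T}\,\|M_0/\sqrt{s}\|_K\right] = (1+o(1))\,\E\|G\|_K \qquad \text{uniformly in } K, \]
as $n \to \infty$ and $s/n \to \infty$.

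First I would dispose of the scalar factor $T$. It is an average of $ns$ i.i.d.\ unit-mean exponential variables, so $\E T = 1$ and $\textnormal{Var}\,T = 1/(ns)$; classical concentration yields $\P(|T-1|>\e) \leq e^{-c\,ns\,\e^{2}}$ and $\P(T<1/2)\leq e^{-c\,ns}$. Combined with Cauchy--Schwarz and the crude bound $\|M_0/\sqrt{s}\|_K \leq |M_0/\sqrt{s}|/r(K)$ (where $r(K)$ is the inradius of $K$), this allows one to split the expectation on the event $\{|T-1|\leq \e\}$ and its complement and to absorb the $1/T$ factor at a multiplicative cost $1+o(1)$, reducing the claim to the uniform comparison
\[ \E\|M_0/\sqrt{s}\|_K = (1+o(1))\,\E\|G\|_K. \]

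The core of the argument, deferred to Appendices~\ref{app:wasserstein} and~\ref{app:marcenko-pastur-vs-GUE}, is exactly this uniform comparison. Entry-wise, $M/\sqrt{s}$ is a normalized sum of $s$ i.i.d.\ rank-one contributions whose first two moments match those of the Gaussian ensemble, so a quantitative central limit theorem (e.g.\ Lindeberg swapping, or an explicit coupling in which the columns of $W$ are replaced one at a time by their Gaussian analogues) yields a strong matrix-level approximation $M_0/\sqrt{s} \approx G$. The delicate point is that naive $L^{2}$-Wasserstein control of $\E \|\cdot\|_K$ costs a factor $1/r(K)$, which is unbounded as $K$ varies. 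The remedy, set up in Appendix~\ref{app:wasserstein}, is a ``mean-width Wasserstein'' distance on probability measures on $\cM_n^{sa,0}$ which by construction dominates $\bigl|\E\|X\|_K - \E\|Y\|_K\bigr|$ uniformly in $K$ at the correct scale. Appendix~\ref{app:marcenko-pastur-vs-GUE} then estimates this distance between $M_0/\sqrt{s}$ and $G$, exploiting the unitary invariance of both laws together with the entry-wise CLT to show it tends to $0$ when $s/n \to \infty$.

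The main obstacle is precisely this last uniformity. For a \emph{fixed} $K$, convergence $\E\|M_0/\sqrt{s}\|_K \to \E\|G\|_K$ is an essentially immediate consequence of the Wishart-to-$\textnormal{GUE}$ central limit theorem, but the supremum over arbitrary convex bodies $K \subset \cM^{sa,0}_n$ forces one to introduce a distance on matrix distributions tailored to control gauges of all convex bodies simultaneously. Once this appendix-level machinery is in place, combining the uniform Wishart-versus-$\textnormal{GUE}^0$ estimate with the concentration of $T$ yields $c_{n,s},\, C_{n,s} \to 1$ in the regime $n \to \infty$, $s/n \to \infty$.
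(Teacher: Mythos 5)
Your algebraic reduction to $M_0 := WW^\dagger - \frac{\tr(WW^\dagger)}{n}\Id$ and the normalization factor $T$ is correct (indeed $\rho - \Id/n = M_0/(nsT)$), and you rightly identify the central obstacle: obtaining $\E\|M_0/\sqrt{s}\|_K \approx \E\|G\|_K$ \emph{uniformly} over all convex bodies $K \subset \cM_n^{sa,0}$, since a naive coupling or Wasserstein control costs an uncontrolled factor $1/r(K)$. However, the mechanism you propose for overcoming this does not work and is not what the paper's appendices do. There is no ``mean-width Wasserstein distance on measures on $\cM_n^{sa,0}$'' constructed anywhere; writing one down that ``by construction dominates $|\E\|X\|_K - \E\|Y\|_K|$ uniformly in $K$'' merely renames the problem, because showing that such a distance between $M_0/\sqrt{s}$ and $G$ is small is exactly the uniform comparison you set out to prove. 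Likewise, Lindeberg swapping or a column-by-column Gaussian coupling would give control in an $L^2$ or operator-norm metric, but converting that into a uniform statement over all gauges $\|\cdot\|_K$ reintroduces the $1/r(K)$ factor; there is no known way to make that route work.

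The missing idea is \emph{majorization}. Both $M_0$ (equivalently $\rho - \Id/n$) and $G$ are unitarily invariant ensembles, so for any $K$ one has $\E\|X\|_K = \E\,\phi(\spec X)$ where $\phi(x) = \int_{\cU(n)} \|U\diag(x)U^\dagger\|_K\,dU$ is a \emph{convex, permutation-invariant} gauge on $\R^{n,0}$. For such $\phi$ one has the uniform inequality $\phi(x) \leq \delta(x,y)\,\phi(y)$, where $\delta(x,y)$ is the smallest $c>0$ with $x \prec c\,y$ (majorization); this bound holds with no dependence on $K$. Appendix~\ref{app:wasserstein} is precisely about this: it relates majorization of spectra to $\iy$-Wasserstein proximity of the empirical spectral distributions \emph{on $\R$} (not on matrix space), culminating in Lemma~\ref{lemma:convergence-to-Z}, which shows $\delta(x,y)\to 1$ when both $\nu_x$ and $\nu_y$ converge to a fixed compactly-supported law in $\iy$-Wasserstein distance. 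Appendix~\ref{app:marcenko-pastur-vs-GUE} then invokes the convergence of the rescaled spectra of both $G_n$ and $\rho_{n,s}-\Id/n$ to the semicircle law (including convergence of extreme eigenvalues, which is what makes the $\iy$-Wasserstein convergence go through), and controls the expectations of the resulting majorization ratios. Your $T$-splitting is also an unnecessary detour: the paper works with $\spec(\sqrt{ns}(\rho-\Id/n))$ directly, and the trace normalization is absorbed into that spectral convergence statement. Without the majorization step, your sketch restates but does not solve the uniformity problem.
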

%\noindent 
For the record, 
let us clarify what double-indexed limits mean (here and later).
The statement
\[ \lim_{n,\frac{s}{n} \to \iy} C_{n,s} =1 \]
is supposed to signify the following: for any sequences $(n_k), (s_k)$
such that both $(n_k)$ and $(s_k/n_k)$ tend to infinity, we have $
\displaystyle  \lim_{k \to \iy} C_{{n_k},{s_k}} =1$.

A rigorous proof of Proposition \ref{prop:MP-vs-GUE} is given in Appendix
\ref{app:marcenko-pastur-vs-GUE}; here we restrict ourselves to  
some heuristic comments.  First, both $\rho = \rho_{n,s}$ and $G=G_n$ are 
well-known ensembles in  Random Matrix Theory. They are both 
invariant under conjugation with a unitary matrix, and their 
asymptotic spectral properties have been thoroughly studied. 

The behavior of $G_n$ for large $n$ is governed by the famous 
{\em Wigner's semi-circle law}. 
On the other hand, (appropriately normalized) $\rho_{n,s}$  
is known as the Wishart ensemble and, when $n,s \to \infty$ with the ratio 
$s/n \to \beta$ for some $\beta >0$, the limiting spectral distribution 
 is given by the  {\em Marchenko--Pastur law}. 
 However, in the asymptotic regime that is relevant here 
($n,s/n \to \infty$),   the limiting spectral distribution is also a 
(non-centered, that's why we subtract ${\Id}/{n}$) semi-circle law. 

Having noticed that 
$\rho - \Id/n$ and $\frac{1}{n\sqrt{s}}G$ 
have the same limiting spectral distribution, 
we need to deduce that this implies their asymptotic equivalence in the sense of 
\eqref{eq:Cns}.  This is done in two steps.
 First, we point out that  known results about convergence 
 to the semicircle law can be subsumed  in the language 
 of the so-called $\infty$-{\em Wasserstein distance}  
  (in random matrix theory, such results are usually stated in a rather weak form).  
 Next we show that this (combined with unitary invariance) 
 implies that the expectations of the gauges
$\|\cdot\|_K$ must be asymptotically the same for both ensembles; 
this part of the argument is based on  Appendix \ref{app:wasserstein} 
and on the so-called {\em majorization theory}. 
We emphasize that the latter step is delicate since there are no 
uniform assumptions on continuity of the gauge $\|\cdot\|_K$.

\begin{remark}
While the formulation of Proposition \ref{prop:MP-vs-GUE} focuses
on the regime when $n$ and $s/n$ tend to infinity, the proof can
be adapted to other situations. For example, one can show that,
for any $\alpha>0$,
\[ 0 < \inf_{s \geq \alpha n} c_{n,s} \leq \sup_{s \geq \alpha n} C_{n,s}< +\iy . \]
This allows to establish a threshold phenomenon even for
properties
 -- in place of separability -- for which
$s_0(d)  \simeq d^2$. However, in that case the argument does not
yield the sharp threshold property, i.e., involving arbitrary $\e
> 0$.  See Section \ref{other} for more comments on related
issues.
 \end{remark}

We now return to the proof of assertions (i) and (ii) of 
Theorem \ref{threshold:separability}. 
Applying Proposition \ref{prop:MP-vs-GUE} for $K=\SEP$ and using
the definition \eqref{eq:def-s0} of $s_0(d)$, we obtain that (when
$d$ and $s/d^2$ tend to infinity)
\begin{equation}
\label{MPorder} \E \left\| \rho- \frac{\Id}{n} \right\|_{\SEP}
\sim \sqrt{\frac{s_0(d)}{s}} .
\end{equation}
Since a state $\rho$ is separable when $\|\rho-\Id/n\|_{\SEP} \leq
1$ and entangled when $\|\rho-\Id/n\|_{\SEP} > 1$, this suggests
that separability is typical when $s > s_0(d)$ and entanglement is
typical when $s < s_0(d)$. This will be made rigorous through the
next proposition; its proof is based on concentration of measure 
(reviewed in Section \ref{sec:concentration}).

\begin{proposition} \label{prop:concentration}
Let $s\geq n$, let $K \subset \cD(\C^n)$ be a convex body with
inradius $r$, and let $\rho$ be a random state with distribution
$\mu_{n,s}$. Let $M$ be the median of $\|\rho-\Id/n\|_{K_0}$, with
$K_0=K-\Id/n$. Then, for every $\eta>0$,
\[ \P \left( \left| \Big\|\rho - \frac{\Id}{n} \Big\|_{K_0} -M \right|
\geq \eta \right) \lesssim \exp(-cs)+\exp(-cn^2sr^2\eta^2) .\]
\end{proposition}

\begin{proof}[Proof]% of Proposition \ref{prop:concentration}]
Let $\rho$ be a random state on $\C^n$ with distribution
$\mu_{n,s}$. By definition, $\rho$ has the same distribution as
\[ \tr_{\C^s} | \psi \rangle \langle \psi |, \]
where $\psi$ is uniformly distributed on the unit sphere in $\C^n
\otimes \C^s$. Equivalently, $\rho$ has the same distribution as
$AA^\dagger$, where $A$ is an $n \times s$ matrix uniformly
distributed on the Hilbert--Schmidt sphere $S_{HS}$ (this is not 
immediately obvious, but can be verified by a straightforward 
calculation; also,  $S_{HS}$ can be
identified with the real sphere $S^{2ns-1}$). Consider the
function $f : S_{HS} \to \R$  defined by
\begin{equation} \label{eq:define-function} 
f(A) = \left\| AA^\dagger - \frac{\Id}{n} \right\|_{K} .
\end{equation}

\begin{lemma} \label{lem:lipschitz-estimate}
For every $t>0$, denote by $\Omega_t$ the subset
\[ \Omega_t = \{ A \in S_{HS} \st \|A\|_{\iy} \leq t \} .\]
Then the Lipschitz constant of the restriction of $f$ to
$\Omega_t$ is bounded by $2t/r$.
\end{lemma}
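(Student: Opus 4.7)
The plan is to reduce the Lipschitz estimate on $f$ to two independent ingredients: a bound on the gauge $\|\cdot\|_{K_0}$ by the Hilbert--Schmidt norm (via the inradius), and a quadratic-form estimate for the map $A \mapsto AA^\dagger$ on $\Omega_t$. Let me outline each in turn.

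First I would write, for $A,B \in \Omega_t$ and with $K_0 = K - \Id/n$,
\[ |f(A) - f(B)| \;=\; \Big| \|AA^\dagger - \Id/n\|_{K_0} - \|BB^\dagger - \Id/n\|_{K_0} \Big| \;\leq\; \max\!\big( \|\Delta\|_{K_0}, \|-\Delta\|_{K_0} \big), \]
where $\Delta := AA^\dagger - BB^\dagger$ and the last inequality uses that the gauge of a convex body containing $0$ is subadditive. Since the inradius of $K$ (equivalently of $K_0$) is $r$, and since the optimal inscribed ball is centered at $\Id/n$ (this is the standard situation for all the bodies we care about --- it is the only point fixed by conjugation with local unitaries, cf. the discussion following Table \ref{table-radii}), we have $B(0,r) \subset K_0$ and therefore $\max(\|x\|_{K_0}, \|-x\|_{K_0}) \leq r^{-1} |x|$ for every $x \in \cM_n^{sa,0}$. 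This gives
\[ |f(A) - f(B)| \;\leq\; \frac{1}{r}\, \|\Delta\|_{HS}. \]

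Next I would bound $\|\Delta\|_{HS}$ using the algebraic identity
\[ AA^\dagger - BB^\dagger \;=\; A(A-B)^\dagger + (A-B) B^\dagger, \]
together with the standard submultiplicativity $\|XY\|_{HS} \leq \|X\|_{op}\|Y\|_{HS}$. This yields
\[ \|\Delta\|_{HS} \;\leq\; \big(\|A\|_{op} + \|B\|_{op}\big)\, \|A-B\|_{HS} \;\leq\; 2t\, \|A-B\|_{HS}, \]
where the last step uses that $A,B \in \Omega_t$, i.e., $\|A\|_\iy, \|B\|_\iy \leq t$. Combining with the previous display gives the claimed bound $2t/r$ on the Lipschitz constant of $f|_{\Omega_t}$.

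There is no real obstacle here; the only subtle point to flag is the non-symmetry of $K_0$, which is handled by observing that the inscribed Euclidean ball of radius $r$ is centered at the origin of $\cM_n^{sa,0}$, so the gauge is controlled by $r^{-1}|\cdot|$ in both directions simultaneously. The Lipschitz constant is measured with respect to the ambient Hilbert--Schmidt distance on $S_{HS}$, and since this is dominated by (and in fact agrees, up to a bounded factor, with) the intrinsic geodesic distance, the same constant serves for either convention used in the concentration argument.
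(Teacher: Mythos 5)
Your proof is correct and takes essentially the same route as the paper: you factor the Lipschitz estimate through the $1/r$-Lipschitz bound on the gauge of $K_0$ and the identity $AA^\dagger - BB^\dagger = A(A-B)^\dagger + (A-B)B^\dagger$ combined with $\|XY\|_{HS} \leq \|X\|_{op}\|Y\|_{HS}$, just as the paper does. The only difference is one of care rather than substance: you make explicit, via $\max(\|\Delta\|_{K_0},\|-\Delta\|_{K_0})$, the point that the non-symmetric gauge is still $1/r$-Lipschitz because the inscribed ball sits at the origin of $\Hzero$ --- a fact the paper invokes without comment.
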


\begin{remark}
In particular, taking $t=1$, one obtains that the global Lipschitz
constant of $f$ is bounded by $2/r$. This implies that any two
central values for $f$ differ by at most $C/(r\sqrt{ns})$.
\end{remark}

\begin{proof}
The function $f$ is the composition of several operations:
\begin{itemize}
 \item the map $A \mapsto \|A\|_{K}$, which is $1/r$-Lipschitz with respect to
 the Hilbert--Schmidt norm.
 \item the map $A \mapsto A-\Id/n$, which is an isometry for the Hilbert--Schmidt norm,
 \item the map $A \mapsto AA^\dagger$, whose Lipschitz constant can be estimated
 by the following chain of inequalities
\end{itemize}
\begin{eqnarray}
\nonumber \|AA^\dagger-BB^\dagger\|_{2} &
\leq & \| A(A^\dagger-B^\dagger) + (A-B)B^\dagger \|_{2} \\
\label{eq:lipschitz-constant} &
\leq & \| A \|_{\iy} \| A^\dagger - B^\dagger \|_{2} + \|A-B\|_{2} \|B^{\dagger}\|_{\iy} \\
\nonumber & \leq & (\|A\|_{\iy} + \|B\|_{\iy}) \|A-B\|_{2} \, .
\end{eqnarray}
In particular, if $A,B \in \Omega$, we obtain $\|AA^\dagger -
BB^\dagger\|_{2} \leq 2t \|A-B\|_{2}$.
\end{proof}

We now apply Lemma \ref{lemma:Levy-local}, and we obtain that for
every $\eta>0$,
\[ \P( |f-M| \geq \eta) \lesssim \P(S_{HS} \setminus \Omega_t)
+ \exp(-c_1ns\eta^2 (2t/r)^{-2}) .\]

If we choose $t=3/\sqrt{n}$, then $\P(S_{HS} \setminus \Omega_t)
\lesssim \exp (-cs)$ (this follows from an elementary net argument, as
explained in Lemma 6  and  Appendix B of \cite{AubrunSzarekWerner2011}) 
and the result follows. 
\end{proof}

We shall now show how to use Propositions \ref{prop:MP-vs-GUE}
and \ref{prop:concentration}  to conclude the proof  of assertions 
(i) and (ii) of  Theorem
\ref{threshold:separability}. For any integers $d$ and $s$, we
define $\pi_{d,s}$ as the probability that a random state on $\C^d
\otimes \C^d$ with distribution $\mu_{d^2,s}$ is separable. We
first show that $\pi_{d,s}$ is decreasing with respect to $d$.

\begin{lemma} \label{lem:monotonicity}
Let $s,d_1,d_2$ be integers such that $d_1 \leq d_2$. Then
\[ \pi_{d_2,s} \leq \pi_{d_1,s}. \]
\end{lemma}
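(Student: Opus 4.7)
The plan is to build a coupling between the random states on $\C^{d_2} \otimes \C^{d_2}$ and on $\C^{d_1} \otimes \C^{d_1}$ (both with ancilla $\C^s$) such that, pathwise, separability of the former implies separability of the latter; then $\pi_{d_2,s} \leq \pi_{d_1,s}$ follows by comparing probabilities. The main ingredient is that local completely positive operations preserve separability, together with the rotational invariance of the uniform measure on the sphere under orthogonal projection onto a subspace.

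Concretely, I would fix an isometric embedding $\C^{d_1} \hookrightarrow \C^{d_2}$ and let $P_1$ be the corresponding orthogonal projection from $\C^{d_2}$ onto this subspace. Sample $|\psi\rangle$ uniformly on the unit sphere of $\C^{d_2} \otimes \C^{d_2} \otimes \C^s$ and set $\rho_2 = \tr_{\C^s} |\psi\rangle \langle \psi|$, so that $\rho_2 \sim \mu_{d_2^2,s}$. Then define $P := P_1 \otimes P_1 \otimes \Id_{\C^s}$ and, noting that $\|P|\psi\rangle\| > 0$ almost surely, put $|\psi'\rangle := P|\psi\rangle / \|P|\psi\rangle\|$ and $\rho_1 := \tr_{\C^s} |\psi'\rangle \langle \psi'|$.

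The key distributional observation is that $|\psi'\rangle$ is uniformly distributed on the unit sphere of $\C^{d_1} \otimes \C^{d_1} \otimes \C^s$: indeed, the subgroup of unitaries that act as the identity on the orthogonal complement of $\C^{d_1} \otimes \C^{d_1} \otimes \C^s$ acts transitively on the unit sphere of that subspace, and the distribution of $|\psi\rangle$ (and hence of $|\psi'\rangle$) is invariant under this subgroup. Consequently $\rho_1 \sim \mu_{d_1^2,s}$. A direct calculation gives the pointwise identity
\[
\rho_1 \;=\; \frac{(P_1 \otimes P_1)\,\rho_2\,(P_1 \otimes P_1)}{\tr\bigl((P_1 \otimes P_1)\,\rho_2\bigr)} .
\]

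To conclude, assume $\rho_2$ is separable, say $\rho_2 = \sum_i p_i\, \sigma_i \otimes \tau_i$ with $\sigma_i, \tau_i$ states on $\C^{d_2}$. Then
\[
(P_1 \otimes P_1)\,\rho_2\,(P_1 \otimes P_1) \;=\; \sum_i p_i\, (P_1 \sigma_i P_1) \otimes (P_1 \tau_i P_1),
\]
a nonnegative combination of product positive semidefinite operators on $\C^{d_1} \otimes \C^{d_1}$; after normalization it lies in $\cS(\C^{d_1} \otimes \C^{d_1})$. Hence the event $\{\rho_2 \in \cS\}$ is contained in $\{\rho_1 \in \cS\}$, which yields $\pi_{d_2,s} = \P(\rho_2 \in \cS) \leq \P(\rho_1 \in \cS) = \pi_{d_1,s}$.

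There is no serious obstacle; the only point deserving care is the invariance claim for $|\psi'\rangle$, but this is a standard consequence of the transitivity of the stabilizer of the subspace $\C^{d_1} \otimes \C^{d_1} \otimes \C^s$ inside the full unitary group of $\C^{d_2} \otimes \C^{d_2} \otimes \C^s$, together with Haar (equivalently, uniform) invariance of the distribution of $|\psi\rangle$.
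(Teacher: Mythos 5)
Your proof is correct and uses essentially the same coupling as the paper: $\rho_1 = (P_1\otimes P_1)\rho_2(P_1\otimes P_1)/\tr\bigl((P_1\otimes P_1)\rho_2\bigr)$, combined with the observation that local projections preserve separability. The only difference is cosmetic: you justify $\rho_1\sim\mu_{d_1^2,s}$ via invariance of the uniform sphere measure under the stabilizer of the subspace $\C^{d_1}\otimes\C^{d_1}\otimes\C^s$, whereas the paper invokes the Ginibre representation $\rho_2 = GG^\dagger/\tr GG^\dagger$ with $G$ a $d_2^2\times s$ complex Gaussian matrix and notes that restricting the rows again yields a Gaussian matrix; both are standard and correct.
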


\begin{proof}
Identify $\C^{d_1}$ as a subspace of $\C^{d_2}$, and let $Q :
\C^{d_2} \to \C^{d_1}$ be the orthogonal projection. Then,
$\C^{d_1} \otimes \C^{d_1} \subset \C^{d_2} \otimes \C^{d_2}$ is
the range of the projection $P = Q \otimes Q$.
 Let $\rho_2$ be a random state on $\C^{d_2} \otimes \C^{d_2}$
with distribution $\mu_{d_2^2,s}$. Then the state
\begin{equation} \label{eq:coupling} 
\rho_1 := \frac{P \rho_2 P}{\tr P \rho_2 P} 
\end{equation}
is a random state on $\C^{d_1} \otimes \C^{d_1}$ with distribution
$\mu_{d_1^2,s}$ (this is obvious if we realize $\rho_2$ as
${GG^\dagger}/{\tr GG^\dagger}$, where $G$ is a $d_2^2 \times s$
random matrix with i.i.d. $N_\C(0,1)$ entries). The inequality
follows from the fact that, given the relation
\eqref{eq:coupling}, the separability of $\rho_2$ implies the
separability of $\rho_1$ (local operations cannot create
entanglement).
\end{proof}

\begin{remark}
Another natural problem is whether $s_1 \leq s_2$ implies
$\pi_{d,s_1} \leq \pi_{d,s_2}$, i.e. whether the probability that
a random induced state is separable always increases with the
dimension of the environment. We do not know the answer to this
question (see Section \ref{sharpness}).
\end{remark}

\begin{proof}[Proof  of assertions (i) and (ii) of  Theorem \ref{threshold:separability}]
We first address part (i): fix \,$\e\hskip-.5mm>~\hskip-.5mm0$, then for any $s$ and $d$
satisfying the condition $s \leq (1-\e)s_0(d)$, we have to show
that
\begin{equation} \label{dcube}
\pi_{d,s} \leq 2 \exp(-c(\e) d^3 ).
\end{equation}
We start by establishing a slightly different estimate
\begin{equation} \label{sbound} \pi_{d,s} \leq 2  \exp(-c(\e)s),\end{equation}
which is stronger than \eqref{dcube} in the crucial range  $s
\gtrsim d^3$. The case $s =o(d^3)$ of  \eqref{dcube}  can be then
deduced formally using Lemma  \ref{lem:monotonicity} and other
known facts. We relegate the details to Section \ref{sharpness}
since  the qualitative information provided by  \eqref{dcube} and
\eqref{sbound} is essentially the same; the only reason why we
{\sl did not} state the Theorem with the bound  \eqref{sbound} is
that it would give a misleading impression that the genericity of
entanglement wanes as $s$ decreases, while in fact the opposite is
true.

In view of Lemma \ref{lem:monotonicity}, to prove \eqref{sbound}
it is enough to consider the case when, for a given $s$, $d=d_s$
is the minimal integer satisfying $s \leq (1-\e)s_0(d)$. In
particular, we have then
\[ \lim_{s \to \iy} s/d^2 = +\iy \]
and we are in the asymptotic regime described in Proposition
\ref{prop:MP-vs-GUE}. Let $M_s$ (resp. $E_s$) be the median (resp.
the expectation) of $\| \rho - \Id/d^2 \|_{\SEP}$, when $\rho$ is
a random state on $\C^{d} \otimes \C^{d}$ with distribution
$\rho_{d^2,s}$. Applying Proposition \ref{prop:MP-vs-GUE}, we
have, when $s \to \iy$,
\[ E_s \sim \sqrt{\frac{s_0(d)}{s}} \sim \frac{1}{\sqrt{1-\e}}.  \]

We now apply Proposition \ref{prop:concentration} to the convex
body $K=\SEP$ (its inradius is of order $1/n$, cf. Table
\ref{table-radii}). This gives, for any $\eta>0$,
\begin{equation} \label{concentration}
\P \left( \left| \|\rho - \Id/d^2 \|_{\SEP} -M_s \right| \geq \eta
\right) \lesssim \exp(-cs)+\exp(-cs\eta^2) .
\end{equation}

Moreover, we have $|M_s - E_s| \lesssim d/\sqrt{s}$ (cf. the
remark following Lemma \ref{lem:lipschitz-estimate}). Since
$d/\sqrt{s}$ tends to $0$, $M_s$ is also equivalent to
$1/\sqrt{1-\e}$. We now choose $\eta>0$ such that
$1/\sqrt{1-\e}-\eta>1$, and we obtain that for $s$ large enough
\[ \pi_{d,s}  =  \P \left( \|\rho - \Id/d^2 \|_{\SEP} \leq 1 \right)
 \leq  C\exp(-cs)+C\exp(-cs\eta^2). \]
Hence there are  constants $C',c'(\e)>0$ such, for every $s$, 
we have  the inequality $
\pi_{d,s} $ 
$ \leq C' \exp(-c'(\e)s)$  (small values
of $s$ are taken into account by adjusting the constants if
necessary). A priori we may have $C'>2$, but in that case the bound
\eqref{sbound}  
follows with $c(\e) = c'(\e)/\log_2 C'$. This  shows part (i) of
Theorem \ref{threshold:separability}, except for the fine points
related to the difference between the tail estimates \eqref{dcube}
and  \eqref{sbound}, which will be clarified in Section
\ref{sharpness}. The part (ii) is proved in the same way as
\eqref{sbound}.
\end{proof}

\section{Estimation of $s_0$} \label{section:estimate-s0}

This section is devoted to the proof of the inequalities
\[ d^3 \lesssim s_0(d) \lesssim d^3 \log^2 d ,\]
which comprise the first assertion of Theorem \ref{threshold:separability}. 
By formula \eqref{eq:def-s0} defining $s_0$, these inequalities are equivalent to
\begin{equation} \label{inequalities-s0-2} 
d^{7/2} \lesssim \E \|G\|_{\SEP} \lesssim d^{7/2} \log d .
\end{equation}
That is,  we need to estimate  
the Gaussian average value of the gauge $\|\cdot\|_{\SEP}$.

It turns out that evaluating (or even estimating)  $\E \|G\|_{\SEP}$ {\em directly} 
is not easy. This may conceivably be related to the
fact that computing $\|\cdot\|_{\SEP}$ is an NP-hard problem
\cite{Gurvits-NPHard}. Alternatively, we may note that $\E
\|G\|_{\SEP}$ is directly related (via \eqref{gammaN}) to the mean
width of $\SEP^\circ$. Since there is a canonical link between
duality of cones and duality of bases of cones (see
\cite{SzarekWerner2008}, Lemma 1), it follows that any question
about $\SEP^\circ$ is equivalent to a question about the cone of
block-positive matrices and --  via the Choi--Jamio{\l}kowski
isomorphism -- to a question about the notoriously difficult to
study cone of positivity-preserving maps on $\cM_d$, the algebra
of $d\times d$ complex matrices (see \cite{Ben1}, or sections II
and III in \cite{SzarekWerner2008} for details).

The approach which succeeds is to proceed through a duality
argument. First, we estimate $\E \|G\|_{\SEP^\circ}$  
(or, equivalently, the mean width of $\SEP$), which is an
easier task. Then, we use a general theorem saying that for any
``well-balanced'' symmetric convex body  one can deduce the
average of the norm from the average of the dual norm, with a
multiplicative error logarithmic in the dimension. Since these 
 aspects of the theory of high-dimensional convex geometry
require the hypothesis of symmetry, we intoduce
the following {\em symmetrization} of the convex body $\SEP$
\[ \K = - \SEP \cap \SEP .\]
 
We first check that the relevant geometric parameters
are essentially unchanged by this symmetrization procedure.

\begin{proposition} \label{prop:symmetrization}
The convex bodies $\SEP$ and $\K$ have
\begin{enumerate}
\item[(i)] comparable average gauge:
\[
 \E \|G\|_{\SEP} \leq \E \|G\|_{\K} \leq 2\E \|G\|_{\SEP},
\]
\item[(ii)] comparable volume radius:
\[ \frac{1}{2} \vrad(\SEP) \leq \vrad(\K) \leq \vrad(\SEP), \]
\item[(iii)] comparable mean width:
\[ w(\SEP) \simeq w(\K) \simeq n^{-3/4} , \]
\item[(iv)] the same inradius, equal to $(n(n-1))^{-1/2}$. 
However, the outradius of $\K$ is bounded by $1/\sqrt{n}$, 
while the outradius of $\SEP$ is of order $1$.
\end{enumerate}
\end{proposition}

\begin{proof}
We have
\[ \|A\|_{\K} = \max (\|A\|_{\SEP},\|-A\|_{\SEP})  \leq \|A\|_{\SEP} + \|-A\|_{\SEP},\]
from which (i) follows, because the distribution of $G$ is symmetric.

The fact that volume is preserved is less elementary. 
Several results in this direction are due to Rogers--Shephard 
(cf. \cite{Rogers1957});  
they additionally assert that the worst case occurs when the body is a simplex. 
For the present symmetrization, we use 
the following inequality which is a variation on the
Rogers--Shephard inequality
\begin{proposition} \label{prop:rogers-shephard}
If $K \subset \R^m$ is a convex body with center of mass at the
origin, then
\[ \vol (-K \cap K) \geq 2^{-m} \vol (K). \]
\end{proposition}
A proof can be found in \cite{MilmanPajor} (Corollary 3). The
factor $2^{-m}$ is not likely to be sharp; it is again tempting to 
conjecture that  the simplex is the extremal case, but this seems to be unknown. 

We apply Proposition \ref{prop:rogers-shephard} to
$\SEP$ (to check that $0$ is the center of mass of $\SEP$, average over local unitaries) and
obtain
\[ \vrad( \K ) \geq \frac{1}{2} \vrad(\SEP), \]
which shows (ii) (the other inequality is trivial).

For (iii), we already know (cf. Table \ref{table-radii}) that
\[ \vrad(\SEP) \simeq w(\SEP) \simeq n^{-3/4}. \]
We therefore have the following chain of inequalities (the first is trivial, the 
third is (ii) and the last is Urysohn's  inequality \eqref{urysohn})
\[ w(\K) \leq w(\SEP) \simeq \vrad(\SEP) \simeq \vrad(\K) \leq w(\K).\]
Therefore all these quantities are comparable, and (iii) follows.

For (iv), the statement about inradius is trivial. On the other hand, any matrix $A \in \SEP$
satisfies $A \geq -\Id/n$. This implies that any $A \in \K$
satisfies $-\Id/n \leq A \leq \Id/n$, or  $\|A\|_{\iy} \leq 1/n$,
and therefore the outradius of $\K$ is bounded by $1/\sqrt{n}$. 
This completes the proof of Proposition \ref {prop:symmetrization}. 
\end{proof}

The required estimates for $s_0$ follow now from the next lemma

\begin{lemma} \label{lem:product:mean:width}
In the notation of the present section, we have
\[ n^2 \lesssim \E \|G\|_{\K} \cdot \E \|G\|_{\K^\circ} \lesssim n^2 \log n.\]
\end{lemma}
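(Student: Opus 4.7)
My plan is to handle the two inequalities separately. The lower bound is elementary, while the upper bound is the $MM^*$-estimate from asymptotic geometric analysis, which the authors have already flagged as the intended tool.

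For the lower bound, I start from the pointwise duality inequality $\|x\|_{\K}\,\|x\|_{\K^\circ}\geq |x|^2$, valid for any $x$ and any symmetric convex body: choosing $y=x/\|x\|_{\K}\in\K$ in $\|x\|_{\K^\circ}=\sup_{y\in\K}\langle x,y\rangle$ yields $\|x\|_{\K^\circ}\geq |x|^2/\|x\|_{\K}$. Taking $x=G$, extracting a square root, and applying Cauchy--Schwarz gives
\[
\gamma_m=\E|G|\leq\E\bigl(\|G\|_{\K}\,\|G\|_{\K^\circ}\bigr)^{1/2}\leq\bigl(\E\|G\|_{\K}\bigr)^{1/2}\bigl(\E\|G\|_{\K^\circ}\bigr)^{1/2}.
\]
Squaring and invoking $\gamma_m^2\geq m-1$ from \eqref{gammaNbounds}, combined with $m=n^2-1\simeq n^2$, delivers the stated lower bound.

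For the upper bound I will appeal to the $MM^*$-theorem (Pisier, Figiel--Tomczak-Jaegermann): any symmetric convex body $K\subset\R^m$ that is in a ``well-balanced'' position, e.g.\ the $\ell$-position, satisfies $w(K)\,w(K^\circ)\lesssim\log m$. Using $\E\|G\|_{K^\circ}=\gamma_m w(K)$ and $\gamma_m^2\simeq m$, this translates into $\E\|G\|_K\cdot\E\|G\|_{K^\circ}\lesssim m\log m\simeq n^2\log n$, which is exactly the desired bound. To apply the theorem to $\K$ I need to know that $\K$ is, up to an absolute multiplicative factor, in such a position. I will establish this by exploiting the large symmetry group of $\K$: conjugation by local unitaries $U_1\otimes U_2$ with $U_i\in U(d)$, the swap on $\C^d\otimes\C^d$, the full transpose $X\mapsto X^T$, and complex conjugation all preserve the separable cone and hence fix $\K$. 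Under this group the space $\Hzero$ decomposes into only a bounded number of isotypic components (schematically $\cM_d^{sa,0}\oplus\cM_d^{sa,0}$ identified by the swap, together with $\cM_d^{sa,0}\otimes\cM_d^{sa,0}$), so by Schur's lemma any equivariant positive operator is scalar on each component.

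The main obstacle will be precisely this last step: confirming that the given Hilbert--Schmidt position of $\K$ lies within a universal constant of its $\ell$-position. Because the symmetry group does not act irreducibly on $\Hzero$, the $\ell$-ellipsoid of $\K$ is a priori a block-scalar with possibly distinct scalars on the two isotypic pieces, and one must either show these scalars are comparable (by separately estimating the contribution of each component to $w(\K)$ and $w(\K^\circ)$) or absorb the discrepancy through a block-diagonal rescaling whose effect on the mean-width product is controlled because the number of blocks is bounded. Once this verification is in place the $MM^*$-inequality delivers $w(\K)\,w(\K^\circ)\lesssim\log n$, and the upper bound follows.
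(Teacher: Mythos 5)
Your lower-bound argument is correct and is exactly what the paper does (cf.\ the last paragraph of Appendix~\ref{app:ell-position}). For the upper bound you correctly identify the $MM^*$-estimate as the tool and, crucially, the correct obstacle: the isometry group of $\K$ does not act irreducibly on $\Hzero$, so one cannot conclude that $\K$ itself is (close to) the $\ell$-position. But precisely at the step you flag as ``the main obstacle'' the argument stops --- you list two possible resolutions and execute neither, and one of them does not work as stated.

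Your second route --- ``absorb the discrepancy through a block-diagonal rescaling whose effect on the mean-width product is controlled because the number of blocks is bounded'' --- is unsound: having only two or three blocks gives no control on how a block-diagonal map $u = P_E + \lambda P_F$ changes $w(\K)\,w(\K^\circ)$; if $\lambda$ is far from $1$ the product can move by an unbounded factor. Your first route is closer in spirit, but the paper's actual resolution does \emph{not} show that the scalars on the blocks are comparable. Rather, writing the $\ell$-position map as $u=P_E+({\bf 0}_E\oplus v)$ (which follows from Lemma~\ref{lem:unique-ell} combined with Lemma~\ref{lem:irreducible}), it uses the ideal property of the $\ell$-norm to obtain $\ell_{\K}(P_E)\,\ell_{\K^\circ}(P_E)\lesssim n^2\log n$ \emph{regardless of what $v$ is}, and then proves the quantitative Claim~\ref{claim}: $\ell_{\K}(P_F)=o(\ell_{\K}(\Id))$ and $\ell_{\K^\circ}(P_F)=o(\ell_{\K^\circ}(\Id))$, so that the $F$-contribution is negligible and the triangle inequality upgrades the $P_E$ bound to the $\Id$ bound. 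Proving that claim is the concrete content you are missing: it combines the inradius/outradius bounds \eqref{bounds} (giving $\ell_{\K}(P_F)\lesssim n^{3/2}$ and $\ell_{\K^\circ}(P_F)\lesssim 1$ via $\dim F\simeq n$) with $\ell_{\K^\circ}(\Id)\simeq n^{1/4}$ from Lemma~\ref{lem:mean-width-of-K} and $\ell_{\K}(\Id)\gtrsim n^{7/4}$ from your own lower bound. Without that computation the proof is incomplete.
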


Indeed, in view of  \eqref{gammaN} and \eqref{gammaNbounds}, Proposition 
\ref{prop:symmetrization}(iii) implies that 
\begin{equation} \label{eq:mean-width-sym} \E \|G\|_{\K^\circ} \simeq n^{1/4}. 
\end{equation} 
From Lemma \ref{lem:product:mean:width}, we infer that
\[ n^{7/4} \lesssim \E \|G\|_{\K} \lesssim n^{7/4} \log n,   \]
and the inequalities \eqref{inequalities-s0-2} follow from part (i) 
of Proposition \ref{prop:symmetrization} (recall that
$n=d^2$, hence $\log n = 2 \log d$).  Since \eqref{inequalities-s0-2} 
was equivalent to the first assertion of Theorem  \ref{threshold:separability}, 
to conclude the proof of the Theorem it remains to show 
Lemma \ref{lem:product:mean:width}. 

\begin{proof}[Proof of Lemma \ref{lem:product:mean:width}]
The proof makes use of the $\ell$-position of convex bodies, which
is reviewed in Appendix \ref{app:ell-position}; here we just mention 
that a body is in the $\ell$-position if it is isotropic is some precise 
technical sense. (Note, however, that there are different notions of 
isotropy and the $\ell$-position is not the most common one.) 
We also point out that the
first inequality in Lemma \ref{lem:product:mean:width} is
elementary, see the last paragraph in Appendix
\ref{app:ell-position}.

The upper inequality will follow from the $MM^*$-estimate (Proposition
\ref{thm:MMstar}), which
is valid for any symmetric convex body that is in the
$\ell$-position. If we knew that the convex body $\K$ was in the
$\ell$-position, the result would be just an instance of the 
$MM^*$-estimate  (applied with $m=n^2-1$, which implies $\log m <
2 \log n$). However, there are not enough symmetries present to
conclude automatically that $\K$ is in the $\ell$-position.

We proceed as follows. Let $E \subset \Hzero$ be the subspace
spanned by the operators of the form $\sigma_1 \otimes \sigma_2$,
where $\sigma_1$ and $\sigma_2$ are self-adjoint operators with
trace $0$ on $\C^d$. Let  $F$ be the orthogonal complement of $E$
in $\Hzero$. We then have
\[ F = \{ \sigma_1 \otimes \Id \st  \tr \sigma_1 = 0 \} 
\oplus \{ \Id \otimes \sigma_2 \st \tr \sigma_2 = 0 \}
=: F_1 \oplus F_2. \] Clearly $\dim E=(n-1)^2$ and $\dim F=2n-2$.

Let $u:\Hzero \to \Hzero$ be a linear map such that $u(\K)$ is in
the $\ell$-position. By combining Lemma \ref{lem:irreducible} with
Lemma \ref{lem:unique-ell} from the Appendices, we may assume that
$u$ has the form
\[ u = P_{E} + ({\bf 0}_E \oplus v) \]
for some (positive definite operator)  $v : F \to F$, where ${\bf
0}_E$ is the zero operator on $E$. The ideal property of the
$\ell$-norm implies that
\[ \ell_{\K}(P_E)=\ell_{\K}(uP_E) \leq \ell_{\K}(u) ,\]
and similarly for $\ell_{\K^\circ}(P_E)$. By the $MM^*$-estimate 
(Proposition \ref{thm:MMstar}), 
we know that 
\[ \ell_{\K}(u)\ell_{\K^\circ}(u^{-1}) \lesssim n^2 \log n,\] 
and therefore $\ell_{\K}(P_E)\ell_{\K^\circ}(P_E)
\lesssim n^2 \log n$ (note that $u^{-1} = P_E + \big({\bf 0}_E
\oplus v^{-1}\big)$). To deduce similar estimates for $\Id$ in
place of $P_E$ we need the following observation.

\begin{claim} \label{claim}
$\ell_{\K}(P_{F})=o(\ell_{\K}(\Id))$ and
$\ell_{\K^\circ}(P_{F})=o(\ell_{\K^\circ}(\Id))$.
\end{claim}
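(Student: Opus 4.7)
The plan is to estimate $\ell_\K(P_F)$ and $\ell_{\K^\circ}(P_F)$ directly, exploiting the fact that $F=F_1\oplus F_2$ has dimension only $2(n-1)$ and that the gauges of $\K$ and $\K^\circ$ restricted to $F_1$ admit closed-form descriptions in terms of $d\times d$ matrices. By the triangle inequality for the $\ell$-norm and the symmetry between $F_1$ and $F_2$ (swap of tensor factors), it suffices to bound $\ell_\K(P_{F_1})$ and $\ell_{\K^\circ}(P_{F_1})$.

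The first step is to compute these restricted gauges. Identifying $F_1$ with $\cM_d^{sa,0}$ via $\sigma\mapsto \sigma\otimes \Id$, the product identity
\[ \frac{\Id}{n}+\frac{\sigma\otimes \Id}{t} = \Big(\frac{\Id}{d}+\frac{d\sigma}{t}\Big)\otimes \frac{\Id}{d} \]
shows that $(\sigma\otimes \Id)/t\in \SEP$ iff $\tfrac{\Id}{d}+\tfrac{d\sigma}{t}\geq 0$, which after symmetrization gives $\|\sigma\otimes \Id\|_\K=d^2\|\sigma\|_\infty$. For the polar gauge, note that $B\in \K$ forces $\tr_2(B)\in \K_d:=\{X\in \cM_d^{sa,0}:-\Id/d\leq X\leq \Id/d\}$, so
\[ \|\sigma\otimes \Id\|_{\K^\circ} = \sup_{B\in \K}\tr(\sigma\cdot \tr_2 B) \leq \sup_{X\in \K_d}\tr(\sigma X) = \frac{\|\sigma\|_1}{d}, \]
the last equality using that $\sigma$ is traceless (so the trace constraint on $X$ is inactive).

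The second step is a random matrix computation. A direct projection calculation yields $P_{F_1}(G)=\sigma_1\otimes \Id$ with $\sigma_1=\tr_2(G)/d$, and comparing entrywise variances identifies $\sigma_1=\tilde H/\sqrt d$ where $\tilde H$ is a $d\times d$ $\textnormal{GUE}^0$ matrix. Standard estimates (semicircle law plus Gaussian concentration) give $\E\|\tilde H\|_\infty\lesssim \sqrt d$ and $\E\|\tilde H\|_1\lesssim d^{3/2}$, whence from Step 1,
\[ \ell_\K(P_{F_1})\lesssim d^2\cdot 1=d^2,\qquad \ell_{\K^\circ}(P_{F_1})\lesssim d^{-1}\cdot d=1, \]
and the same for $F_2$, so altogether $\ell_\K(P_F)\lesssim d^2$ and $\ell_{\K^\circ}(P_F)\lesssim 1$.

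It remains to lower bound the denominators. The inradius inequality $\|\cdot\|_\K\geq \sqrt n\,|\cdot|$ from \eqref{bounds}, combined with $\E|G|\simeq n$ (since $G$ is a standard Gaussian vector in $\Hzero$ of real dimension $n^2-1$), gives $\ell_\K(\Id)\gtrsim n^{3/2}=d^3$; while Lemma \ref{lem:mean-width-of-K} directly yields $\ell_{\K^\circ}(\Id)\simeq n^{1/4}=\sqrt d$. Hence $\ell_\K(P_F)/\ell_\K(\Id)\lesssim 1/d$ and $\ell_{\K^\circ}(P_F)/\ell_{\K^\circ}(\Id)\lesssim 1/\sqrt d$, both tending to $0$. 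I expect the main obstacle to be the explicit identification of $\|\sigma\otimes \Id\|_\K$ and the dual bound on $\|\sigma\otimes \Id\|_{\K^\circ}$ --- these reduce the separability question on $F_1$ to the positivity of a single $d\times d$ matrix; once in place, everything else is routine assembly of \eqref{bounds}, Lemma \ref{lem:mean-width-of-K}, and classical GUE asymptotics.
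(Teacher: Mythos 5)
Your proof is correct but follows a genuinely different route from the paper's. The paper bounds the numerators crudely from the global inradius and outradius of $\K$ (\eqref{bounds}): $\ell_{\K}(P_F) = w_G((\K\cap F)^\circ) \leq n\gamma_{\dim F} \lesssim n^{3/2}$ and $\ell_{\K^\circ}(P_F) = w_G(P_F\K) \leq n^{-1/2}\gamma_{\dim F} \lesssim 1$, and compensates with the sharp lower bound $\ell_{\K}(\Id) \gtrsim n^{7/4}$, obtained from the already-proved inequality $\ell_{\K}(\Id)\ell_{\K^\circ}(\Id) \gtrsim n^2$ together with Lemma \ref{lem:mean-width-of-K}. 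You go the other way: using the tensor structure of $F_1$ you identify the restricted gauges explicitly, $\|\sigma\otimes\Id\|_{\K} = d^2\|\sigma\|_\infty$ and $\|\sigma\otimes\Id\|_{\K^\circ}\leq \|\sigma\|_1/d$, recognize $P_{F_1}G$ as a rescaled $d\times d$ $\textnormal{GUE}^0$, and invoke classical GUE asymptotics to obtain the sharper numerator $\ell_{\K}(P_F) \lesssim d^2 = n$, which is strong enough to pair with the cruder, outradius-only lower bound $\ell_{\K}(\Id) \gtrsim n^{3/2}$. One small imprecision: the displayed identity $\sup \tr(\sigma X) = \|\sigma\|_1/d$ over traceless $X$ with $\|X\|_\infty \leq 1/d$ is in general only an inequality ``$\leq$'' (the sign-matrix maximizer need not be traceless), but as the upper bound is what you actually use, nothing breaks. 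Finally, your explicit computation of the gauges on $F_1$ is essentially the bipartite ($k=2$) instance of Lemma \ref{lem:section-of-SEP}; the remark following Claim \ref{claim} flags precisely this refinement as the one needed for multipartite and unbalanced systems, so your argument anticipates the machinery of Section \ref{section:multipartite}.
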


Once the claim is proved, using the triangle inequality to bound
$\ell_{\K}(\Id) \leq \ell_{\K}(P_E) + \ell_{\K}(P_{F})$ (and
similarly for $\ell_{\K^\circ}$), we obtain
$\ell_{\K}(\Id)\ell_{\K^\circ}(\Id) \lesssim n^2 \log n$, which is
equivalent to the second inequality in
 the assertion of Lemma \ref{lem:product:mean:width}.
\end{proof}

\begin{proof}[Proof of Claim \ref{claim}] 
We use the estimates on the inradius and the outradius of $\K$ 
(see Proposition \ref{prop:symmetrization}(iv)) to deduce
the following inequalities (recall that $\gamma_m$ 
was defined in \eqref{gammaNbounds})
\[ \ell_{\K}(P_{F}) = \E \|P_{F}G\|_{\K} = 
w_G({(\K \cap F)^\circ}) \leq n \gamma_{\dim F} \lesssim n^{3/2},
\]
\[ \ell_{\K^\circ}(P_{F}) = \E \|P_{F}G\|_{\K^\circ} = w_G(P_F\K)\leq
 n^{-1/2} \gamma_{\dim F} \lesssim 1.
\]
On the other hand, we have $\ell_{\K^\circ}(\Id) \simeq n^{1/4}$
(by equation \ref{eq:mean-width-sym}) and $\ell_{\K}(\Id) \gtrsim
n^{7/4}$ (by the already shown lower estimate from Lemma
\ref{lem:product:mean:width}). This proves Claim \ref{claim} and
concludes the proof of Lemma \ref{lem:product:mean:width}, and hence of
the first assertion of Theorem \ref{threshold:separability}.  
Combined with the arguments in the preceding section,  
this concludes the proof of the Theorem.
\end{proof}
\begin{remark}
In the proof of Claim \ref{claim}, we upper-bounded the mean
widths of the convex bodies $P_{F}K$ and $(K \cap F)^\circ$ by
their outradii. This is sufficient for the present argument, but
these estimates are far from optimal. A more refined analysis is
performed in Lemma \ref{lem:section-of-SEP}; it will be needed to
handle the case of multipartite systems and unbalanced bipartite
systems.
\end{remark}

\section{Extension to the multipartite case} \label{section:multipartite}

 In this section we consider the case of a multipartite system,
 on the Hilbert space $\cH = (\C^d)^{\otimes k}$,
and estimate the threshold for separability of random states in
the asymptotic regime when $k\geq 2$ is fixed and $d$ tends to
$+\iy$. We denote $n=d^k = \dim \cH$ and $m=n^2-1= \dim \cD(\cH)$.
In this section, constants are allowed to depend on $k$; this is
emphasized by writing $\lesssim_k$,  $\simeq_k$,  $o_k(\cdot)$
instead of $\lesssim$, $\simeq$, $o(\cdot)$.

The set of separable states on $\cH$ is the subset of $\cD(\cH)$
defined as
\[ \cS(\cH) \!=\! \conv \{ | \psi_1 \otimes \cdots \otimes \psi_k \rangle \langle \psi_1
\otimes \cdots \otimes \psi_k | : \ \psi_i \in \C^d,
|\psi_i|=1, %\forall 
i=1, \cdots k \}.\]
As in the bipartite case, our argument requires estimates on
geometric parameters associated to $\cS(\cH)$, given in the next
table. The statement about the inradius was proved in
\cite{BarnumGurvits2005}, and the statements about the mean width
and the volume radius were obtained in \cite{AS1}.

\begin{table}[htbp] 
\caption{Radii of $\cS$ for $(\C^d)^{\otimes k}$, where $n=d^k$. }
\label{table-radii-multi}
\begin{tabular}{|r|c|c|c|c|}
  \hline
   & inradius & volume radius & mean width & outradius\\
  \hline
  $\vphantom{\displaystyle \sum} \cS((\C^d)^{\otimes k})$ & $\simeq_k n^{-1}$ &
  $\simeq_k n^{-1+\frac{1}{2k}}$ & $\simeq_k n^{-1+\frac{1}{2k}}$ & $= \sqrt{(n-1)/n}$\\
  \hline
\end{tabular}
\end{table}

The following  is a multipartite version of Theorem
\ref{threshold:separability}.

\begin{theorem}
\label{threshold:multi} For every integer $k \geq 2$, there are
effectively computable positive constants $c_k$ and $C_k$,
depending only on $k$, and a function $s_0(k,d)$ satisfying
\begin{equation} \label{eq:bounds-s0multi}
c_kd^{2k-1} \leq s_0(k,d) \leq C_kd^{2k-1} \log^2 d
,\end{equation} such that if $\rho$ is a random state on
$(\C^d)^{\otimes k}$ distributed according to the measure
$\mu_{d^k,s}$, then for every $\e>0$,
\begin{itemize}
 \item [(i)] If $s \leq (1-\e) s_0(k,d)$, then
 \[ \P(\rho \textnormal{ is separable}) \leq 2 \exp(-c(\e)s) . \]
 \item [(ii)] If $s \geq (1+\e) s_0(k,d)$, then
 \[ \P(\rho \textnormal{ is entangled}) \leq 2 \exp(-c(\e)s) . \]
\end{itemize}
Here $c(\e)$ is a positive constant depending on $\e$ and on $k$.
\end{theorem}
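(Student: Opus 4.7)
The plan is to mimic the bipartite argument, defining
\[ s_0(k,d) := \left( \frac{\E \|G\|_{\cS_0}}{n} \right)^2 \sim w(\cS_0(\cH)^\circ)^2, \]
where $G$ is a $\text{GUE}^0$ matrix of size $n\times n$ with $n=d^k$, and then establishing in sequence: (a) the two-sided bound \eqref{eq:bounds-s0multi}, and (b) the threshold property via the same concentration machinery.

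For (a), I would replay Section \ref{section:estimate-s0}. Introduce the symmetrization $\K = -\cS_0\cap \cS_0$; the bound $\|A\|_\iy \leq 1/n$ for $A\in\K$ still holds by the same argument (every element of $\cS_0$ satisfies $A\geq -\Id/n$), so the outradius of $\K$ is $\leq n^{-1/2}$ and its inradius is $\simeq_k n^{-1}$ by Table \ref{table-radii-multi}. The analogue of Lemma \ref{lem:mean-width-of-K} then reads
\[ \E \|G\|_{\K^\circ} \simeq_k n^{1/(2k)}, \]
obtained exactly as before: by Proposition \ref{prop:rogers-shephard} (the center of mass of $\cS_0$ vanishes by averaging over local unitaries on each tensor factor) the symmetrization loses only a constant in volume radius, while Urysohn \eqref{urysohn} sandwiches $\vrad(\K) \leq w(\K) \leq w(\cS_0)\simeq_k n^{-1+1/(2k)}\simeq_k \vrad(\cS_0)\simeq_k \vrad(\K)$, giving $w(\K)\simeq_k n^{-1+1/(2k)}$ and hence the claim via \eqref{gammaN}.

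The harder step is the multipartite analogue of Lemma \ref{lem:product:mean:width}, namely
\[ n^2 \lesssim_k \E\|G\|_\K \cdot \E\|G\|_{\K^\circ} \lesssim_k n^2 \log n. \]
I would again put $\K$ in an almost-$\ell$-position using Lemmas \ref{lem:irreducible} and \ref{lem:unique-ell} from the Appendices. Here the decomposition $\Hzero = E \oplus F$ is more elaborate: $E$ is spanned by tensors $\sigma_1\otimes\cdots\otimes\sigma_k$ with every $\tr\sigma_i=0$, and $F$ decomposes as a direct sum over all proper non-empty subsets $I\subsetneq\{1,\dots,k\}$ of the spaces of tensors whose $I$-marginals are traceless while the remaining factors equal $\Id$. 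The symmetries of $\K$ force the optimal $u$ to act as the identity on $E$ and an arbitrary positive map on $F$, so the ideal property of the $\ell$-norm reduces the $MM^*$-estimate to $\ell_\K(P_E)\ell_{\K^\circ}(P_E)\lesssim n^2\log n$. To pass from $P_E$ back to $\Id$ via the triangle inequality one needs a claim of the shape $\ell_\K(P_F)= o_k(\ell_\K(\Id))$ and similarly for $\K^\circ$; the crude outradius bound used in Claim \ref{claim} is too weak in the multipartite regime, and this is where the refined section estimate \emph{Lemma \ref{lem:section-of-SEP}} (promised in the remark at the end of Section \ref{section:estimate-s0}) does the work, giving sharp bounds on $w_G((\K\cap F)^\circ)$ and $w_G(P_F\K)$ for each proper coordinate-subset component of $F$. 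This is the main technical obstacle, but modulo that lemma everything is formal. Combining the two lemmas yields $\E\|G\|_\K \simeq_k n^{2-1/(2k)}$ up to a $\log n$ factor on the upper side, which via \eqref{sym-upper} and the definition of $s_0(k,d)$ produces \eqref{eq:bounds-s0multi}, since $n^{2-1/k}=d^{2k-1}$.

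For (b), I would copy Section \ref{section:threshold} verbatim. Since $s_0(k,d)/n \simeq_k d^{k-1} \to \infty$ for each fixed $k\geq 2$, along the sequence where $d=d_s$ is minimal with $s\leq(1-\e)s_0(k,d)$ (part (i)) or maximal with $s\geq(1+\e)s_0(k,d)$ (part (ii)) both $n$ and $s/n$ tend to infinity, so Proposition \ref{prop:MP-vs-GUE} applies with $K=\cS_0$ to give
\[ \E \bigl\| \rho - \tfrac{\Id}{n}\bigr\|_{\cS_0} \sim \sqrt{\frac{s_0(k,d)}{s}}, \]
which is bounded away from $1$ on each side of the threshold. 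Proposition \ref{prop:concentration} applied with $r\simeq_k 1/n$ then shows that $\|\rho-\Id/n\|_{\cS_0}$ concentrates about its median with Gaussian tail $\exp(-cs\eta^2)$ plus a residual $\exp(-cs)$, and the gap between median and expectation is $\lesssim d^{k/2}/\sqrt{s}=o(1)$ by the remark following Lemma \ref{lem:lipschitz-estimate}. Choosing $\eta$ so that $(1\pm\e)^{-1/2}\mp\eta$ sits strictly on the correct side of $1$ and absorbing the usual pre-exponential constants into $c(\e)$ yields both tail bounds $2\exp(-c(\e)s)$. The monotonicity tool of Lemma \ref{lem:monotonicity} (which carries over immediately: project each of the $k$ tensor factors) lets one extend the estimate from the minimal $d=d_s$ to all $d$ consistent with the hypothesis of each part, completing the proof.
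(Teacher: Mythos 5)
Your proposal follows the paper's own argument closely: the same definition of $s_0(k,d)$, the same symmetrization $\K=-\cS_0\cap\cS_0$, the multipartite analogues of Lemmas \ref{lem:mean-width-of-K} and \ref{lem:product:mean:width} via $\ell$-position and the $MM^*$-estimate, the correct identification of Lemma \ref{lem:section-of-SEP} as the key refinement, and the same reuse of Propositions \ref{prop:MP-vs-GUE} and \ref{prop:concentration} together with the monotonicity Lemma \ref{lem:monotonicity} for part (b). Two small discrepancies in detail are worth noting. First, the crude outradius/inradius bound from Claim \ref{claim} is \emph{not} too weak for $\ell_\K(P_F)$ in the multipartite setting: it gives $\ell_\K(P_F)\lesssim_k n\gamma_{\dim F}\simeq_k n^{2-1/k}=o_k(n^{2-1/(2k)})$, which already suffices. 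The refinement via Lemma \ref{lem:section-of-SEP} is needed only for $\ell_{\K^\circ}(P_F)$, where the crude bound yields $n^{1/2-1/k}$, which beats $n^{1/(2k)}$ only when $k=2$. Second, the paper does not decompose $F$ into the $2^k-2$ irreducible summands indexed by proper non-empty coordinate subsets, as you describe; it instead \emph{covers} $F$ by the $k$ (mutually overlapping, for $k\geq 3$) subspaces $F_j$ having $\R\Id$ in slot $j$, and uses monotonicity $\ell_{\K^\circ}(P_V)\leq\ell_{\K^\circ}(P_{V'})$ for $V\subset V'$ together with Lemma \ref{lem:section-of-SEP} applied to each $F_j$. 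Your finer decomposition would in principle work too, but would require a version of Lemma \ref{lem:section-of-SEP} for arbitrary coordinate subsets, whereas the paper's covering argument avoids that extra work.
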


The proof of Theorem \ref{threshold:multi} is completely parallel
to the bipartite case, except for one  point where a
 slightly finer analysis is required.
 We set $\SEP=\SEP((\C^d)^{\otimes k})=\cS\big((\C^d)^{\otimes k}\big)-\Id/n$,
 $\K=-\SEP \cap \SEP$ and  define the
threshold $s_0(k,d)$ as
\[ s_0(k,d) := \left(\frac{\E \|G\|_{\SEP}}{d^k} \right)^2 \sim w(\SEP^\circ)^2. \]

We first need to show that the estimates \eqref{eq:bounds-s0multi}
hold. We only sketch the proof. First, Proposition \ref{prop:symmetrization} carries
over verbatim to the multipartite setting, and implies (using the formula from Table
\ref{table-radii-multi}) that
\[ \E \|G\|_{\K^\circ} \simeq_k n^{\frac{1}{2k}}. \]
The multipartite version of Lemma \ref{lem:product:mean:width} is

\begin{lemma} \label{lem:multi}
In the notation of the present section, we have
\[ n^2 \lesssim \E \|G\|_{\K} \cdot \E \|G\|_{\K^\circ} \lesssim_k n^2 \log n.\]
\end{lemma}

\begin{proof}
We proceed as follows. Let $E \subset \Hzero$ be the subspace
spanned by the products of trace zero operators, and $F$ be the
orthogonal complement of $E$. We have $\dim E=(d^2-1)^k$ and $\dim
F = d^{2k}-1-(d^2-1)^k \simeq_k n^{2-2/k}$. Let $u:\Hzero \to
\Hzero$ be a linear map such that $u(\K)$ is in the
$\ell$-position. By the results in Appendices
\ref{app:ell-position} and \ref{app:irreducible} we may assume
that $u$ has the form
\[ u = P_{E} + ( {\bf 0}_E \oplus v ) \]
for some positive map $v:F \to F$. As in the bipartite case, the
$MM^*$-estimate and the ideal property of the $\ell$-norm imply
that $\ell_{\K}(P_E)\ell_{\K^\circ}(P_E) \lesssim n^2 \log n$, and
so Lemma \ref{lem:multi} will follow from the following

\begin{claim}   \label{claim:multi} 
$\ell_{\K}(P_{F})=
o_k(\ell_{\K}(\Id))=o_k(n^{2-\frac{1}{2k}}) $ and
$\ell_{\K^\circ}(P_{F})=
o_k(\ell_{\K^\circ}(\Id))$ $=o_k(n^{\frac{1}{2k}})$.
\end{claim}

To show the Claim, we note first that the inradius of $\K$ 
equals the inradius of $\SEP$ and is of order
$n^{-1}$ (from Table \ref{table-radii-multi}).  
Hence $\|\cdot\|_{\K} \lesssim_k n|\cdot|$ and so 
\[ \ell_{\K}(P_{F}) = \E \|P_{F}G\|_{\K} \lesssim_k
n \E |P_{F} G| = n\gamma_{\dim F} \simeq_k n^{2-1/k}. \] 
This implies the first part of the Claim. The second part requires a
little finer analysis. For $1 \leq j \leq k$, denote by $F_j$ the
following subspace of $\Hzero$
\[ F_j = \Big( \vphantom{\sum_i} \cM_{sa}(\C^d) \otimes \cdots \otimes \cM_{sa}(\C^d)
\otimes \R\Id \otimes \cM_{sa}(\C^d) \otimes \cdots \otimes
\cM_{sa}(\C^d) \Big) \cap \Hzero, \] where the factor $\R \Id$
appears at position $j$. Since $F \subset \bigoplus_j F_j$, it
suffices to prove that
$\ell_{\K^\circ}(P_{F_j})=o_k(n^{\frac{1}{2k}})$---it follows from
the ideal property that $\ell_{\K^\circ}(P_{V}) \leq
\ell_{\K^\circ}(P_{V'})$ whenever $V \subset V'$. By symmetry we
may assume $j=1$. We have
\begin{equation} \label{eq:ell-Kpolar-multi} \ell_{\K^\circ}(P_{F_1})
= \E \|P_{F_1}G\|_{\K^\circ} = w_G(P_{F_1}\K) \leq w_G(P_{F_1}
\SEP ).\end{equation} It turns out that the convex body
$P_{F_1}\SEP$ has a simple description.
\begin{lemma} \label{lem:section-of-SEP}
For every $k \geq 2$, we have
\[ P_{F_1}\SEP \left((\C^d)^{\otimes k}\right)
= \SEP \left((\C^d)^{\otimes k}\right)\cap F_1 = \Id/d \otimes
\SEP \left((\C^d)^{\otimes (k-1)}\right) ,\] with the notation $x
\otimes K = \{ x \otimes y \st y \in K \}.$
\end{lemma}
\begin{proof}
The inclusions $\supset$ are immediate. Conversely, let $\rho \in
\cS\left((\C^d)^{\otimes k}\right)$. Starting from a separable
decomposition
\[ \rho = \sum \lambda_i \rho_i^{(1)} \otimes \cdots \otimes \rho_i^{(k)}, \]
we obtain
\[ P_{F_1} \rho
= \sum \lambda_i \Id/d \otimes \rho_i^{(2)} \otimes \cdots \otimes
\rho_i^{(k)} - \Id/d^{k} = \Id/d \otimes \left( \sigma -
\Id/d^{k-1} \right),\] with $\sigma \in \cS \left((\C^d)^{\otimes
(k-1)}\right)$ defined as
\[ \sigma = \sum \lambda_i \rho_i^{(2)} \otimes \cdots \otimes \rho_i^{(k)}. \]
It follows that $P_{F_1}\SEP \left((\C^d)^{\otimes k}\right) =
P_{F_1}\cS \left((\C^d)^{\otimes k}\right) \subset \Id/d \otimes
\SEP \left((\C^d)^{\otimes (k-1)}\right)$.
\end{proof}

\begin{remark}
In the case $k=2$, the convex body $\SEP \left((\C^d)^{\otimes
(k-1)}\right)$ should be interpreted as $\cD_0(\C^d)$.
\end{remark}

We now return to the proof of Claim \ref{lem:multi}. 
From Lemma \ref{lem:section-of-SEP}, it is very easy to compute
$w_G(P_{F_1}\SEP)$. Since $\|\Id/d\|_{2}=1/\sqrt{d}$, the convex
bodies $P_{F_1}\SEP \left((\C^d)^{\otimes k} \right)$ and
$\frac{1}{\sqrt{d}} \SEP \left((\C^d)^{\otimes (k-1)}\right)$ are
congruent, hence have the same Gaussian mean width. Since $w_G
\left(\SEP \left((\C^d)^{\otimes (k-1)}\right)\right) \simeq_k
\sqrt{d}$ (see Table \ref{table-radii-multi}), we obtain from
\eqref{eq:ell-Kpolar-multi} that
\[ \ell_{\K^\circ}(P_{F_1}) \leq w_G\left(P_{F_1}\SEP((\C^d)^{\otimes k})\right)
= \frac{1}{\sqrt{d}} w_G \left(\SEP \left((\C^d)^{\otimes
(k-1)}\right)\right) \simeq_k 1, \] 
which completes the proof of the Claim, and hence that of 
 Lemma \ref{lem:multi}. \end{proof}
The
rest of the proof of Theorem \ref{threshold:multi} follows by mimicking the
argument in the bipartite case  given in Sections 
\ref{section:threshold} and \ref{section:estimate-s0}.

\section{Generic entanglement: an alternative approach}

\label{section:alternative-argument}

 In this section we sketch an alternative proof that entanglement
 on $\C^d \otimes \C^d$ is generic
 when $s$ is sufficiently smaller than  $d^3$ (Theorem \ref{threshold:separability}(i)).
The proof also yields a similar statement addressing two different
regimes of the the multipartite case, including the one dealt with
in Theorem \ref{threshold:multi}(i). In its present form,  the
argument does not show
 that separability is generic for larger $s$. However, we present it here
 since it leads to sharp estimates in the
 ``small ball'' regime (i.e., when the probability of separability is very small),
and since it is much more straightforward.

The proof is based on analyzing directly the density
\eqref{eq:formula-density} of  $\mu_{n,s}$ and specifically on the
following

 \bl \label{Lemma:comparison}
 Let  $s \geq n$ with $\log s \ll n$.
 Then there
 exists a universal constant $C>0$, such that for any measurable
 subset $\cK\subset \cD=\cD(\C^n)$,

 \begin{eqnarray*}
{\mu_{n,s}(\cK)}^{\frac{1}{m}} \leq C
 \sqrt{\frac{s}{n}} \;
{\mu_{n,n}(\cK)}^{\frac{1}{m}},
 \end{eqnarray*}
 where $m=n^2-1 = \dim \cD$.
 \el
Assuming the lemma and remembering that $\mu_{n,n}$ is just the
Lebesgue measure (the Hilbert-Schmidt volume), it is easy to
deduce (an estimate stronger than) the assertion (i) of Theorem
\ref{threshold:separability} from known results. Indeed, if
$n=d^2$ and $\cK = \cS=\cS(\C^d \otimes \C^d)$, we can read from
Table
 \ref{table-radii} that
 $$
\mu_{n,n}(\cS)^{\frac{1}{m}}  =
  \left(\frac{{\rm vol}\, \cS }{{\rm vol}\, \cD}\right)^{\frac{1}{m}}  =
  \frac {\vrad(\cS)} {\vrad (\cD)} \simeq \frac{n^{-3/4}}{n^{-1/2}} = n^{-1/4} .
 $$
Consequently,
 $$
 \mu_{n,s}(\cS) \leq \left(C_1 \sqrt{\frac{s}{n}} \; n^{-1/4}\right)^m =
 \left(C_1^2 \frac{s}{d^3} \right)^{m/2},
 $$
and so the probability of separability is very small if
$s<C_1^{-2} n^{3/2}=C_1^{-2} d^3$. 
(Note that the exponent is here of order $d^4$, as opposed to $d^3$ 
in Theorem \ref{threshold:separability}(i).)

It is also straightforward to obtain in the same way estimates on
the threshold value of $s$ for multipartite entanglement (i.e.,
for all $k\geq 2$). Sharp bounds for $\frac{{\rm vol}\, \cS }{{\rm
vol}\, \cD}$ in the multipartite case were derived in
\cite{Sz1,AS1}: \bprop \label{sepratio} Let $\cH =
\big(\C^d\big)^{ \otimes k}$ and let $ \cS(\cH)$ be the
corresponding set of $k$-partite separable states. Then
\begin{itemize}\item [(i)] $\left(\frac{{\rm vol}\, \cS(\cH) }{{\rm vol}\, \cD(\cH)}\right)^{\frac{1}{m}}
\lesssim (k \log k)^{1/2} \; n^{-\frac 12 +\frac1{2k}}$; \\
\item [(ii)] $\left(\frac{{\rm vol}\, \cS(\cH) }{{\rm vol}\,
\cD(\cH)}\right)^{\frac{1}{m}} \lesssim \left( \frac{d \hskip0.3mm
k \log k}{n^{1+ \beta_d}}\right)^{1/2}$, where $\beta_d = \log _d
(1+\frac{1}{d})-\frac{1}{d^2}\log _d(d+1)$.\end{itemize} \eprop
The two bounds in the Proposition reflect emphasis on two regimes,
fixed $k$ and large $d$ (small number of large subsystems) and
fixed $d$ and large $k$ (large number of small subsystems). When
combined with Lemma \ref{Lemma:comparison}, they lead immediately
to the following
\begin{theorem}
\label{threshold:alternative} In the notation of Proposition
\ref{sepratio}, let $s\geq n =d^k$ and let $\rho$ be a random
state on $\cH$ distributed according to the measure $\mu_{n,s}$.
Then $\P(\rho \in  \cS(\cH))$ is (exponentially in $m=n^2-1$)
small if
 $s \leq c(k)  n^{2-1/k}$ or   $s \leq c(k) c'(d) n^{2+\beta_d}$, where
 $c(k) \gtrsim (k \log k)^{-1}$  and $c'(d) \gtrsim d^{-1}$.
  \end{theorem}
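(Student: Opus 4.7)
The plan is to combine Lemma \ref{Lemma:comparison}, applied to $\cK = \cS(\cH)$, with the two volume-ratio estimates supplied by Proposition \ref{sepratio}, so as to upgrade those polynomial-in-$n$ volume ratios into exponentially-in-$m$ small probabilities, where $m = n^2 - 1$. The key observation is that under $\mu_{n,n}$ the distribution is normalized Hilbert--Schmidt volume on $\cD(\cH)$, and so
\[
\mu_{n,n}\bigl(\cS(\cH)\bigr)^{1/m} \;=\; \Bigl(\tfrac{\vol\,\cS(\cH)}{\vol\,\cD(\cH)}\Bigr)^{1/m}
\]
is exactly the quantity bounded in Proposition \ref{sepratio}.

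For the first regime I would feed Proposition \ref{sepratio}(i) into Lemma \ref{Lemma:comparison}, which yields
\[
\mu_{n,s}\bigl(\cS(\cH)\bigr)^{1/m} \;\lesssim\; \sqrt{\tfrac{s}{n}}\cdot (k\log k)^{1/2}\, n^{-\tfrac12 + \tfrac{1}{2k}} \;=\; \Bigl(\frac{C\,k\log k\cdot s}{n^{2-1/k}}\Bigr)^{1/2}.
\]
Choosing a small universal $c_0>0$, if $s \leq c_0\,(k\log k)^{-1}\,n^{2-1/k}$ the right-hand side is at most $1/2$, so $\mu_{n,s}(\cS(\cH)) \leq 2^{-m}$, which is the claimed exponential smallness in $m$. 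This gives the first bound with $c(k) \gtrsim (k\log k)^{-1}$.

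The second regime is entirely analogous, using instead Proposition \ref{sepratio}(ii):
\[
\mu_{n,s}\bigl(\cS(\cH)\bigr)^{1/m} \;\lesssim\; \Bigl(\frac{C\,k\log k\cdot d\cdot s}{n^{2+\beta_d}}\Bigr)^{1/2},
\]
which is at most $1/2$ as soon as $s \leq c_0\,(d\,k\log k)^{-1}\,n^{2+\beta_d}$; this yields the factorization $c(k)c'(d)$ with $c(k)\gtrsim (k\log k)^{-1}$ and $c'(d)\gtrsim d^{-1}$. One last routine check is that the hypothesis $\log s \ll n$ of Lemma \ref{Lemma:comparison} holds throughout both regimes, which is immediate since $s$ is polynomial in $n = d^k$ while $n \to \infty$.

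I do not anticipate any real obstacle in this deduction --- it is essentially algebraic once Lemma \ref{Lemma:comparison} and Proposition \ref{sepratio} are in hand. The hard part of the whole approach lies in Lemma \ref{Lemma:comparison} itself, whose proof requires analyzing the Wishart-type density \eqref{eq:formula-density}: one needs sharp (up to constants) estimates on the normalization $Z_{n,s}$ together with the elementary identity $\max_{\rho\in\cD}\det\rho = n^{-n}$, in order to control $\|d\mu_{n,s}/d\mu_{n,n}\|_\infty^{1/m}$ at the required order $\sqrt{s/n}$. That computation is a standard random-matrix calculation, independent of the separability question, and the condition $\log s \ll n$ enters precisely to absorb the polynomial factors coming from $Z_{n,s}$ into the $m$-th root.
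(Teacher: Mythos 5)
Your proposal is correct and matches the paper's own intended proof exactly: the paper introduces Lemma \ref{Lemma:comparison} and Proposition \ref{sepratio} and then states that "when combined ... they lead immediately to" Theorem \ref{threshold:alternative}, with only the bipartite case ($k=2$) worked out explicitly as a template. You carried out precisely that deduction, correctly identified $\mu_{n,n}(\cS(\cH))^{1/m}$ with the normalized volume ratio bounded in Proposition \ref{sepratio}, multiplied by the $C\sqrt{s/n}$ factor from Lemma \ref{Lemma:comparison}, and solved for the range of $s$ that makes the $m$-th root at most $1/2$. The remark about $\log s \ll n$ being satisfied since $s$ is at most polynomial in $n$ is the right (and only) point that needs checking, and your observation about where the hard work lives (in Lemma \ref{Lemma:comparison}) is exactly right.
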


 As the above bounds on the threshold values of $s$ may appear
 not-very-transparent, we will make them explicit in some special cases.
 First, in the bipartite case $k=2$  (two large subsystems), we again recover the
 bound of order $n^{3/2}=d^3$.  In the tripartite case $k=3$, entanglement
 is generic if  $s\leq c n^{5/3}$.   If $d=2$ and $k$ is large ($k$ qubits),
 the bound on the threshold value of $s$ is (modulo factors of smaller
 order) $n^{2+\beta_2}$, where $\beta_2 \approx  0.18872$
(we also have the {\em equality} $n^{2+\beta_2} = 2^k 3^{3k/4}$,
but the latter expression is again  not-so-transparent).  The
numerical constants implicit in the $\gtrsim$, $\lesssim$ notation
in the Proposition and in the Theorem are effectively computable
and can be recovered from the discussion following Theorems 1 and
2 in \cite{AS1} and from the proof of Lemma
\ref{Lemma:comparison}.

\begin{proof}[Proof of Lemma \ref{Lemma:comparison} (sketch)]  \
The argument follows the lines of \cite{Deping2010}, where related
questions were considered. Since, by the arithmetic-geometric mean
inequality, $\det \rho \leq n^{-n}$ for $\rho \in \cD(\C^n)$,
 formula \eqref{eq:formula-density} implies
$$
\mu_{n,s}(\cK)\! = \!\frac 1{Z_{n,s}}\! \int_\cK (\det \rho)^{s-n} 
d\rho \!\leq\! \frac 1{Z_{n,s}}\! \int_\cK n^{-n(s-n)} d\rho \!=
n^{-n(s-n)} \frac {Z_{n,n}} {Z_{n,s}} \mu_{n,n}(\cK) .
$$
Hence the Lemma reduces to showing that $\left(n^{-n(s-n)} \frac
{Z_{n,n}} {Z_{n,s}}\right)^{\frac{1}{m}} \lesssim \sqrt{\frac s n
}$ or, equivalently (we may replace $m=n^2-1$ in the exponent by
$n^2$ as long as $\log s \ll n$), \be \label{ratio} \left( \frac
{Z_{n,n}} {Z_{n,s}}\right)^{\frac{1}{n^2}}\lesssim \frac{s^{1/2}
n^{s/n}}{n^{3/2}}. \ee Explicit formulae for $Z_{n,s}$ are known
(see
 \cite{MML1, Zycz2, Zycz1, Ben1})
 \begin{eqnarray}Z_{n,s} =
 \frac{\sqrt{n}\, (2\pi)^{n(n-1)/2}}{\Gamma (sn)} \prod_{k=s-n+1}^s \Gamma(k) . 
 \label{avolume:D}
 \end{eqnarray}
 We point out that these quantities are sometimes referred to in
 the literature as ``the $\alpha$-volume'' (with $\alpha = s-n+1$).  Also,
 the normalization factors are often calculated for densities on the Weyl chamber
 $\big\{(\lambda_1,
  \cdots,\lambda_n)\in \R^n \, : \ \sum _{i=1}^n \lambda
  _i=1,\, \lambda_1\geq   \cdots\geq \lambda_n\geq 0\big\}$
  in the simplex of eigenvalues rather than for densities on the set of states;
 the two quantities differ by the factor $(2\pi)^{n(n-1)/2}/\prod _{j=1}^n \Gamma(j)$,
  equal to the measure of the corresponding flag manifold.

   What remains is a tedious but routine calculation based on the Stirling formula,
used here in the form $\ln \Gamma( x) = x \ln x -x + O(\ln x)$.
 We have
\begin{eqnarray*}
 \frac {\ln Z_{n,s}}{n^2}-\ln\sqrt{2\pi} & \sim
 & \frac 1{n^2} \left(\sum_{k=s-n+1}^s \ln \Gamma(k) - \ln \Gamma (sn)\right)\\
 & \sim & \frac 1{n^2} \left(\sum_{k=s-n+1}^s (k  \ln k -k) - sn \ln(sn)+sn \right)\\
  & \sim & \frac 1{n^2}\int_{k=s-n}^s( x  \ln x - x) \, dx  - \frac sn \ln(sn) +  \frac sn\\
  &=& \frac{\frac{s^2}2 \ln s\! -\! \frac{3s^2}4\! -\!
  \Big(\!\frac{(s-n)^2}2 \ln (s\!-\!n)\! -\! \frac{3(s-n)^2}4 \!\Big)}{n^2}\! -\!
  \frac sn \ln(sn)\!+\! \frac{s}n.
\end{eqnarray*}
We now use the bound $\ln (s-n) = \ln s + \ln (1-\frac ns) \leq
\ln s -\frac ns$ to obtain, after simplifications,
$$
 \frac {\ln Z_{n,s}}{n^2} \gtrsim \ln\sqrt{2\pi}  -\frac 14 -\frac sn \ln n -\frac 12 \ln s
$$
and hence, after exponentiating,
$$
({Z_{n,s}})^{\frac{1}{n^2}} \gtrsim n^{-s/n}s^{-1/2} .
$$
Similarly, but in a much simpler way, we are led to
$$
({Z_{n,n}})^{\frac{1}{n^2}}  \sim {\sqrt{2\pi}} \, e^{1/4}
n^{-3/2}.
$$
(This calculation was already performed in \cite{Sz1}, where the
equivalent formula $\vrad \cD(\C^n) \sim e^{-1/4} n^{-1/2}$ was
derived.) Combining the last two estimates we obtain
\eqref{ratio}.
\end{proof}

\section{Miscellaneous remarks and loose ends}  \label{misc}

\subsection{Threshold from zero to nonzero probability of separability} 
\label{sec:other-threshold}

In this paper we estimated the threshold for separability in terms
of the ancilla dimension, and showed that the probability of
entanglement changes dramatically from nearly one to nearly zero  
around this threshold. A seemingly related question (but actually
very different) is to ask for which ancilla dimensions the
probability of separability is {\em exactly} zero. Here is a summary of
our knowledge about this problem. (We do not claim originality.)

\begin{proposition} \label{never}
Let $\rho$ be a random state on $\C^d \otimes \C^d$ distributed
according to the probability measure $\mu_{d^2,s}$.
\begin{enumerate}
 \item[(i)] If $s \geq d^2$, then $0 < \P(\rho \textnormal{ is separable}) < 1$.
 \item[(ii)]  If $s \leq (d-1)^2$, then $\P(\rho \textnormal{ is separable}) =0$.
  \item[(iii)] If $d=2$,  then $0<\P(\rho \textnormal{ is separable})<1$ for $s\geq 3$
  and  $\P(\rho \textnormal{ is separable}) =0$ for $s\leq 2$.
\end{enumerate}
\end{proposition}
\begin{proof}
(i) follows from both $\cS$ and $\cD \setminus \cS$ having
nonempty interior and from the density of $\mu_{n,s}$
\eqref{eq:formula-density} being  strictly positive in the
interior of $\cD$ for $s\geq n$. (ii) is a simple combination of
Corollary 3.5 in \cite{Walgate}, which asserts that a random
subspace of dimension $s$ in $\C^d \otimes \C^d$  almost surely
contains no product vector when $s \leq (d-1)^2$, and of
\cite{Ho2}, which points out that a separable state must have a
product vector in its range. Next, the only instances of (iii)
which are not covered by the two preceding parts are $s=2, 3$;
these are slightly more delicate. If $s=3$, then the relevant
measure $\mu_{4,3}$ (note that $n=d^2=4$ here) is concentrated on
the boundary of $\cD$. However, since the eigenvalues of
$MM^\dagger$ (for a $4\times 3$ matrix $M$) are the same as those
of $M^\dagger M$ plus an additional $0$, and since the
distributions of $MM^\dagger$ and $M^\dagger M$ are unitarily
invariant, it follows that $MM^\dagger$ has a density with respect
to the surface measure which is (modulo a normalization factor)
$\det\big(M^\dagger M\big)$, and in particular is nonzero on a set
of full (surface) measure. On the other hand, it follows from
\cite{SBZ} that the probability of separability on the boundary of
$\cD$ (with respect to the surface measure) equals $\frac 12
\frac{\vol \cS}{\vol \cD} \in (0,\frac 12)$, which combined with
the preceding remark yields the conclusion. Similarly, if $d=2$
and $s=2$, the relevant measure $\mu_{4,2}$ is supported on the
set of states of rank (at most) $2$. The question of generic
separability of such states was studied in \cite{RW} (see also
\cite{Arv}).  While the measure considered in \cite{RW} is
apparently different from $\mu_{4,2}$, they are both induced by
parametrizations of the set of states of rank $2$ which are smooth
outside of a subset of lower dimension, and all such measures are
mutually absolutely continuous. Accordingly, our conclusion
follows from Corollary 4 in \cite{RW}.   (The authors thank Mary
Beth Ruskai for bringing the paper \cite{RW} to their attention.)
\end{proof}

\subsection{{The unbalanced case: $\cH = \C^{d_1} \otimes \C^{d_2}$, $d_1\neq d_2$}}
\label{unbalanced} 
A result analogous to Theorem
\ref{threshold:separability} holds, with the threshold
$s_0(d_1,d_2)$ verifying
$$
cn\min\{d_1,d_2\} \leq s_0(d_1,d_2) \leq Cn (\log n)^2
\min\{d_1,d_2\},
$$
where $n=d_1d_2$. To show this, let us try to retrace the
arguments from the balanced case. We may assume $d_1 \leq d_2$.
First, one checks that  the argument from  \cite{AS1} yields
$w(\cS)=w(\SEP) \simeq d_2^{1/2}/n = (n d_1)^{-1/2}$ (note that
if we are only interested in the bipartite case, the $\lesssim$
part follows rather easily from Lemma  2 in \cite{AS1}).   The
needed estimates for $w(\SEP^\circ)$ follow then word by word,
except that the upper estimate requires a slightly more careful
analysis which we detail now. If, as in the balanced case, we
denote $\K=-\SEP \cap \SEP$, we have
\[ w_G(\K) \simeq \sqrt{d_2}. \]
Let $E$, $F_1$ and  $F_2$ be the subspaces appearing in the proof
of Lemma \ref{lem:product:mean:width}. The conclusion will follow
if we prove that (for $i=1,2$)
 \begin{equation} \label{eq:unbalanced}
 \ell_{\K}(P_{F_i}) \leq \frac{1}{3} \ell_{\K}(\Id)
 = \frac{1}{3} w_G(\K^\circ)\ \ \  ; \ \ \ \ell_{\K^\circ}(P_{F_i}) \leq \frac{1}{3} \ell_{\K^\circ}(\Id)
 = \frac{1}{3} w_G(\K). \end{equation}
Indeed, since $\ell_{\K}(\Id) \leq \ell_{\K}(P_E) +
\ell_{\K}(P_{F_1}) + \ell_{\K}(P_{F_2})$, it follows from
\eqref{eq:unbalanced} that $\ell_{\K}(\Id) \leq 3 \ell_{\K}(P_E)$,
and similarly for $\ell_{\K^\circ}$, and therefore
\[ w_G(\K)w_G(\K^\circ)
= \ell_{\K}(\Id) \ell_{\K^\circ}(\Id) \leq 9 \ell_{\K}(P_E)
\ell_{\K^\circ}(P_E) \lesssim n^2 \log n \] by the
$MM^*$-estimate.

Using Proposition \ref{prop:ell-norm} and the unbalanced version
of Lemma \ref{lem:section-of-SEP}, we obtain
\[ \ell_{\K}(P_{F_1})= w_G((\K \cap F_1)^\circ)
= w_G \left(\left[ d_1^{-1/2}\cD_{\textnormal{sym}}(\C^{d_2})
\right]^\circ \right) \simeq \sqrt{d_1}\, d_2^{3/2}, \] where
$\cD_{\textnormal{sym}}=-\cD_0 \cap \cD_0$. Similarly,
\[ \ell_{\K^\circ}(P_{F_1}) = w_G(P_{F_1}\K) \leq
w_G (d_1^{-1/2}\cD(\C^{d_2})) \simeq\sqrt{d_2/d_1}. \] Note that
the equivalences $w_G(\cD_{\textnormal{sym}}^\circ) \simeq
w_G(\cD_0^\circ)$ and $w_G(\cD_{\textnormal{sym}}) \simeq
w_G(\cD)$, and the values of these parameters, were determined in
Proposition \ref{prop:symmetrization}, (i) and (iii).
Analogous estimates hold for $F_2$ with the role of
$d_1$ and $d_2$ exchanged.
 We conclude that the conditions \eqref{eq:unbalanced} are satisfied unless
 $d_1$ is smaller than some absolute constant.

On the other hand, if $2\leq \min\{d_1,d_2\} \leq C$, then
  $\cD_0 \subset (C^2-1) \SEP$
  (cf. \cite{BarnumGurvits2002}, Corollary 5; by looking at the
  Schmidt decomposition of a pure state it is easy to see
  that any bound valid with $d_1=d_2= C$ -- the case considered
  in \cite{BarnumGurvits2002} -- holds also when $ \min\{d_1,d_2\} \leq C$).
 Accordingly, the widths of $\cS, \SEP^\circ$ are the same as those of  $\cD, \cD_0^\circ$
 ``up to universal multiplicative constants''  and the calculation becomes trivial
 (in this case, there is no $\log$ factor in the upper bound for $s_0(d_1, d_2)$).

\subsection{Calculating the precise order of the mean width of $\cS^\circ$, or is the $\log$
necessary}\label{log} It is conceivable that the presence of
logarithmic factors in Lemma  \ref{lem:product:mean:width}, and
hence in Theorems \ref{threshold:separability} and
\ref{threshold:multi}, is due to the fact that we appeal to a
result about general convex bodies (Proposition \ref {thm:MMstar})
rather than calculate   $w(\SEP^\circ)$ directly (or to the lack
of precision in the upper bound from Proposition \ref
{thm:MMstar}). Due to the fundamental nature of the
separability/entanglement dichotomy, the question about the
precise order of   $w(\SEP^\circ)$ is interesting by itself. It
may be worthwhile to note that for the convex body
$L=\cD_0(\C^n)$, the product $w(L)w(L^\circ)$ is majorized by a
universal constant, even though the argument via $\ell$-position
necessarily leads to an upper bound which is  $\gtrsim \sqrt{\log
n}$.

\subsection{Thresholds for other ``standard'' properties of quantum states}\label{other}

Our method can be generalized to estimate thresholds for other
properties of random induced states (beyond separability),
provided the set of states with this property is a convex subset
$K \subset \cD$ and has some minimal invariance properties (such
as being fixed under conjugation with local unitaries). In any
such application one needs to estimate the mean width of
$K^\circ$, or at least the volume of $K$ in order to adapt the
argument from Section \ref{section:alternative-argument}.

One natural example of $K$ is the set 
${\mathcal{PPT}}(\C^d \otimes
\C^d)$ of states with positive partial transpose (the volume
radius and the mean width of which were estimated in \cite{AS1}
and shown to be much larger than those of $\cS$). Since
${\mathcal{PPT}} = \cD \cap T(\cD)$, where $T$ is the partial
transpose, it follows easily (cf. Proposition \ref{prop:symmetrization}(i)) that
$w\big({{\mathcal{PPT}}}_0^\circ \big) \leq 2 w(\cD_0^\circ) \simeq
d$, whence (cf. \eqref{eq:def-s0}) $s_0 \simeq d^2$. However, this
is less precise than the result from \cite{Aubrun2011} where it
was shown -- using completely different techniques -- that a sharp
threshold for PPT occurs at $s_0 \sim 4d^2$. (While a more careful
argument using concentration of measure shows that
$w\big({{\mathcal{PPT}}}_0^\circ \big)  \sim 2d$, in our approach we
will always lose an additional multiplicative constant when using
Proposition \ref{prop:MP-vs-GUE}. Similarly, while it is likely
that our argument based on majorization may be modified to show
{\em existence} of a sharp threshold also in the regime $s_0 \simeq
d^2$, it is not at all clear how to retrieve this way the exact
multiplicative factor $4$.)

A very interesting point is that there is a whole range of
parameters, when the ancilla dimension $s$ satisfies (for an
arbitrary $\e >0$ and an appropriate $c>0$)
\[ (4+\e) d^2 \leq s \leq cd^3, \]
where random induced states on $\C^d \otimes \C^d$ are -- for
large $d$ -- generically {\itshape bound entangled}, i.e.
entangled and PPT (hence non-distillable  \cite{hhh}).

\subsection{\mbox{Improving the probability estimates, 
and sharpness  of the threshold.}}
\label{sharpness}

Denote (as in Section \ref{section:threshold}) by $\pi_{d,s}$ the
probability that a random state on $\C^d \otimes \C^d$ with
distribution $\mu_{d^2,s}$ is separable.  In Section
\ref{section:threshold} we showed that, if $s \leq $ $(1-\e) s_0(d)$,
then \eqref{sbound} holds, i.e.,
$$ \pi_{d,s} \leq 2  \exp(-c(\e)s).$$

The above probability estimate is not optimal, both in its
dependence on $s$ and in the dependence on $\e$ that can be
retrieved from the argument. The following conjecture sounds
reasonable (the larger the environment is, the more uncommon
entanglement is) and would formally imply the bound  $\pi_{d,s}
\leq 2\exp(-c(\e)s_0(d))$, which better agrees with physical
heuristics and is formally stronger than both \eqref{sbound} and
the assertion of Theorem \ref{threshold:separability} (i).

\begin{conjecture}
For any $d \geq 2$, the function $s \mapsto \pi_{d,s}$ is
non-increasing.
\end{conjecture}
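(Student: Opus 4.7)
The plan is to reduce the conjecture to a positive-correlation statement and then attack it by a symmetrization-plus-majorization argument. In the regime $s \geq n = d^2$ where $\mu_{n,s}$ has density proportional to $(\det \rho)^{s-n}$ with respect to Hilbert--Schmidt, the family $\{\mu_{n,s}\}_{s \geq n}$ is a one-parameter exponential family with sufficient statistic $\log\det\rho$, and differentiating $\pi_{d,s} = Z_{n,s}^{-1}\int_\cS (\det\rho)^{s-n}\,d\rho$ in $s$ yields
\[ \frac{d}{ds}\pi_{d,s} \;=\; \mathrm{Cov}_{\mu_{n,s}}\!\bigl(\log\det\rho,\;\mathbf{1}_\cS(\rho)\bigr). \]
So (the physically natural reading of) the conjecture, on $s \geq n$, is equivalent to asserting that under $\mu_{n,s}$ the log-determinant and the indicator of separability are positively correlated. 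This is intuitively plausible: both quantities are maximized at $\Id/n$, and there is a Hilbert--Schmidt ball of separable states around $\Id/n$ (Gurvits--Barnum). The remaining range $s < n$, where $\mu_{n,s}$ is supported on singular states, has to be handled separately---using $\pi_{d,s}=0$ for $s\leq (d-1)^2$ (cf.\ Section~\ref{sec:other-threshold}), the task reduces to a finite list of small integers, which could be addressed by an approximation from above by $\mu_{n,s+\varepsilon}$.

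The correlation inequality would then be tackled by exploiting the unitary invariance of $\mu_{n,s}$ to move to the eigenvalue simplex. Writing $\rho = U \Lambda U^\dagger$ with $U$ Haar on $U(n)$ and $\Lambda = \diag(\lambda_1,\ldots,\lambda_n)$ distributed as $\nu_{n,s}$ with density proportional to $\prod_{i<j}(\lambda_i-\lambda_j)^2\prod_i \lambda_i^{s-n}$ on the simplex, the covariance factors as
\[ \mathrm{Cov}_{\nu_{n,s}}\!\bigl(\log\det\Lambda,\;\phi(\Lambda)\bigr), \qquad \phi(\Lambda) := \P_U\!\bigl(U\Lambda U^\dagger \in \cS\bigr). \]
Both functions on the simplex are symmetric, and both should be Schur-concave: for $\log\det\Lambda = \sum_i \log\lambda_i$ this is immediate from concavity, and for $\phi$ I would extract it from Hardy--Littlewood--P\'olya---if $\Lambda \prec \Lambda'$ then $\diag\Lambda$ is a convex combination $\sum_i p_i P_i \diag\Lambda' P_i^T$ of permutation conjugates, and, after an additional averaging over Haar, the convexity of $\cS$ should upgrade this to the inequality $\phi(\Lambda) \geq \phi(\Lambda')$ (the delicate point being that the union bound from convexity gives a joint intersection event rather than a single one, which has to be dealt with by a further randomization).

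The decisive, and genuinely hard, step is to deduce positive correlation of two Schur-concave functions under $\nu_{n,s}$. This is not a direct consequence of any standard FKG-type inequality, because the majorization order on the simplex is not a lattice and the Wishart density is not log-concave. The most promising route I see would combine the monotone-likelihood-ratio property of $\{\nu_{n,s}\}$ in $s$ (the ratio $\nu_{n,s_2}/\nu_{n,s_1} \propto (\det\Lambda)^{s_2-s_1}$ is monotone in $\det\Lambda$) with a layer-cake decomposition of $\phi$ into super-level sets of $\det\Lambda$; however, expressing $\phi$ as such a positive superposition does not appear to be available, and this is where I expect the argument to genuinely break down. As a consolation, the Gaussian coupling $G = [G_1\mid G_2']$ yielding $\rho_{s_2} = \alpha\rho_{s_1} + (1-\alpha)\tilde\rho_{s_2-s_1}$ with $\rho_{s_1}\sim\mu_{n,s_1}$ and $\tilde\rho_{s_2-s_1}\sim\mu_{n,s_2-s_1}$ independent gives, via convexity of $\cS$, the submultiplicative bound $\pi_{d,s_2} \geq \pi_{d,s_1}\pi_{d,s_2-s_1}$; this already yields effective monotonicity in the deep-separability regime $s \gg s_0(d)$ where $\pi_{d,s_2-s_1}\to 1$, and suffices to validate the improved tail bound $\pi_{d,s} \leq 2\exp(-c(\e) s_0(d))$ alluded to immediately after the conjecture.
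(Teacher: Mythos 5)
This statement is labeled a \emph{conjecture} in the paper, and the authors explicitly do \emph{not} prove it: the text following it reads ``While we do not prove this conjecture, the next simple lemma is a partial result in its direction,'' referring to Lemma~\ref{lem:monotonicity2} ($\pi_{2d,s} \leq \pi_{d,4s}$). So there is no ``paper proof'' to compare against. You correctly read through the paper's typo (``non-increasing'' contradicts the stated heuristic and the asserted consequence; it should be ``non-decreasing''), and your covariance formula
$\frac{d}{ds}\pi_{d,s} = \mathrm{Cov}_{\mu_{n,s}}\!\bigl(\log\det\rho,\ \mathbf{1}_\cS(\rho)\bigr)$
for $s\geq n$, together with the passage to the eigenvalue simplex by unitary invariance, is a correct and natural reduction. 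Your proposal is honest that it does not close the argument; let me be precise about where it fails.

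First, the Schur-concavity of $\phi(\Lambda) = \P_U(U\diag\Lambda\, U^\dagger\in\cS)$ is not established, and the route you sketch genuinely does not work. Writing $\diag\Lambda = \sum_\sigma p_\sigma P_\sigma\diag\Lambda' P_\sigma^T$ and using convexity of $\cS$ gives $\phi(\Lambda)\geq\P_U\bigl(\bigcap_\sigma\{UP_\sigma\diag\Lambda' P_\sigma^T U^\dagger\in\cS\}\bigr)$, and each event in the intersection has probability exactly $\phi(\Lambda')$; but the intersection probability is $\leq\phi(\Lambda')$, so this produces an inequality in the wrong direction. Since $\cS$ is invariant only under local unitaries $\cU(d)\times\cU(d)$, not under $\cU(n)$, the sets $\{U : U\diag\Lambda\,U^\dagger\in\cS\}$ have no obvious structure enabling a further randomization to repair this; I don't see how to turn it around, and neither the concavity nor quasi-concavity of $\phi$ on the simplex follows from convexity of $\cS$ (a Haar average of indicators of convex sets, even ones sharing the common point $\Id/n$, need not be quasi-concave). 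Second, even granting Schur-concavity of both $\log\det\Lambda$ and $\phi$, the needed correlation inequality for two Schur-concave functions under the eigenvalue density $\nu_{n,s}$ has no FKG-type justification, as you rightly observe: majorization is not a lattice order and the density is not log-supermodular. This is precisely why the statement is a conjecture.

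Your final paragraph is where something is actually \emph{wrong} rather than merely incomplete. The block-coupling identity $\rho_{s_2}=\alpha\rho_{s_1}+(1-\alpha)\tilde\rho_{s_2-s_1}$, with $\alpha,\rho_{s_1},\tilde\rho$ mutually independent, is correct, and so is the resulting submultiplicativity $\pi_{d,s_2}\geq\pi_{d,s_1}\,\pi_{d,s_2-s_1}$; this is a nice observation, different from and complementary to the paper's Lemma~\ref{lem:monotonicity2}. However, it gives only a \emph{lower} bound on $\pi_{d,s_2}$, while the bound $\pi_{d,s}\leq 2\exp(-c(\e)s_0(d))$ in the sub-threshold regime $s\leq(1-\e)s_0(d)$ is an \emph{upper} bound. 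Rearranging to $\pi_{d,s_1}\leq\pi_{d,s_2}/\pi_{d,s_2-s_1}$ is useless: to make the denominator close to $1$ you must take $s_2-s_1\gtrsim(1+\e')s_0(d)$, which forces $s_2$ above the threshold and so $\pi_{d,s_2}\approx 1$, yielding a trivial bound. The paper itself obtains the $2\exp(-c(\e)d^3)$ tail via Theorem~\ref{threshold:alternative} combined with Lemma~\ref{lem:monotonicity} (shrinking $d$, not growing $s$), and explicitly states that the stronger bound $2\exp(-c(\e)s_0(d))$ is contingent on the (open) conjecture. Your submultiplicativity does not close that gap.
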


While we do not prove this conjecture, the next simple lemma is a
partial result in this direction.

\begin{lemma} \label{lem:monotonicity2}
For every $d,s$, we have the inequality $\pi_{2d,s} \leq
\pi_{d,4s}$.
\end{lemma}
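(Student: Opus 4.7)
The plan is to exhibit a coupling between a sample of $\mu_{4d^2,s}$ and a sample of $\mu_{d^2,4s}$ under which separability of the former forces separability of the latter. This is in the same spirit as the proof of Lemma \ref{lem:monotonicity} (where the coupling was given by \eqref{eq:coupling}), but here one exploits a tensor reshuffling rather than an embedding of subspaces.

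Concretely, I would start from the identification $\C^{2d} = \C^2 \otimes \C^d$, which yields
\[ \C^{2d} \otimes \C^{2d} \otimes \C^s \;\cong\; \big(\C^d \otimes \C^d\big) \otimes \big(\C^2 \otimes \C^2\big) \otimes \C^s \;\cong\; \big(\C^d \otimes \C^d\big) \otimes \C^{4s}. \]
Let $|\psi\rangle$ be a Haar-random unit vector in this Hilbert space. Tracing out the $\C^s$ factor in the first description produces a random state $\rho_1$ on $\C^{2d}\otimes\C^{2d}$ with distribution $\mu_{4d^2,s}$, by the very definition of induced measures; tracing out the $\C^{4s}$ factor in the second description produces a random state $\rho_2$ on $\C^d\otimes\C^d$ with distribution $\mu_{d^2,4s}$. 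Since the uniform measure on the unit sphere of a Hilbert space depends only on its Hilbert space structure and not on any particular tensor decomposition, both marginal distributions are correctly described. By iterating partial traces, one has the pointwise relation $\rho_2 = \tr_{\C^2 \otimes \C^2}(\rho_1)$, where the two $\C^2$ factors are the ``auxiliary'' halves of the two $\C^{2d}$ tensor components.

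The second step is to observe that tracing out these two $\C^2$ factors is a local operation with respect to the bipartition $\C^{2d}\mid\C^{2d}$, since one $\C^2$ is traced on each side. Accordingly, if $\rho_1 = \sum_i p_i\, \sigma_i \otimes \tau_i$ is any separable decomposition on $\C^{2d} \otimes \C^{2d}$, then
\[ \rho_2 \;=\; \sum_i p_i \,\tr_{\C^2}(\sigma_i) \otimes \tr_{\C^2}(\tau_i) \]
is a separable decomposition on $\C^d \otimes \C^d$. Hence separability of $\rho_1$ implies separability of $\rho_2$, and
\[ \pi_{2d,s} \;=\; \P(\rho_1 \text{ separable}) \;\leq\; \P(\rho_2 \text{ separable}) \;=\; \pi_{d,4s}, \]
which is the desired inequality.

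There is no real obstacle in this argument; the only step requiring a moment of care is the verification that the reshuffling correctly transports the Haar measure on the sphere, which is immediate from its intrinsic (basis-free) description. Conceptually, the lemma simply records the fact that one can regard a $(2d,2d)$-system with $s$-dimensional environment as a $(d,d)$-system with the enlarged $4s$-dimensional environment $\C^2 \otimes \C^2 \otimes \C^s$, and partial tracing across local factors never creates entanglement.
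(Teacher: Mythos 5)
Your proof is correct and follows essentially the same route as the paper: identify $\C^{2d}$ with $\C^2 \otimes \C^d$, note that the partial trace over $\C^2 \otimes \C^2$ maps $\mu_{4d^2,s}$ to $\mu_{d^2,4s}$, and observe this is a local operation that preserves separability. The extra detail you supply — realizing both states as partial traces of the same Haar-random pure state on $\C^{2d}\otimes\C^{2d}\otimes\C^s \cong (\C^d\otimes\C^d)\otimes\C^{4s}$ and using commutativity of partial traces over disjoint factors — is exactly the justification the paper leaves implicit.
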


\begin{proof}
Identify $\C^{2d}$ with $\C^2 \otimes \C^d$, and let $\tau :
\cD(\C^{2d} \otimes \C^{2d}) \to \cD(\C^{d} \otimes \C^d)$ be the
partial trace over $\C^2 \otimes \C^2$. Let $\rho$ be a random
state on $\C^{2d} \otimes \C^{2d}$ with distribution
$\mu_{4d^2,s}$. Then $\tau(\rho)$ is a random state on $\C^d
\otimes \C^d$ with distribution $\mu_{d^2,4s}$, and
\[ \rho \text{ separable } \Longrightarrow \tau(\rho) \text{ separable.} \]
This shows the inequality $\pi_{2d,s} \leq \pi_{d,4s}$.
\end{proof}

There are other ways to improve the probability estimate in
various ranges. First, note that $\pi_{d,s} =0$ if $s \leq
(d-1)^2$ (see Section \ref{sec:other-threshold}). Second, Theorem
\ref{threshold:alternative} implies that, for some absolute
constants $c, c_1>0$, $\pi_{d,s} \lesssim \exp(-cd^4)$ whenever
$d^2 \leq s \leq c_1d^3$. This establishes in particular the bound
$\pi_{d,s}  \leq 2 \exp(-c(\e)d^3 )$ asserted in Theorem
\ref{threshold:separability} (i), except for the narrow range
$(d-1)^2 <s< d^2$. This exceptional range can be handled as
follows. Set $d_1=\lfloor (d+1)/2\rfloor$; then $\pi_{d,s} \leq
\pi_{d_1,s}$ by Lemma \ref{lem:monotonicity}. On the other hand,
$s> (d-1)^2 \geq d_1^2$  and (if $d$ is sufficiently large) $s <
d^2 \leq c_1d_1^3$, so Theorem \ref{threshold:alternative} applies
and shows that $\pi_{d_1,s} \lesssim \exp(-cd_1^4) \leq
\exp(-cd^4/16)$. Combining the estimates we conclude that the
bound $\pi_{d,s} \lesssim \exp(-cd^4)$ extends to the entire range
$s \leq c_1d^3$ (perhaps with a different constant $c$). When
combined with the argument from Section \ref{section:threshold},
this completes the proof of Theorem \ref{threshold:separability}
(i) as originally stated.

Estimates for probabilities in Theorem
\ref{threshold:separability} and similar statements translate
directly into assertions about sharpness of the
entanglement-separability threshold at $s_0=s_0(d)$:  the increase
in the ancilla dimension $s$ needed for the induced state to
``flip''  from generic entanglement to generic separability. As
stated,  Theorem \ref{threshold:separability}  asserts that the
increase is $o(s_0)$. Retracing the argument would allow to come
up with an explicit (and clearly suboptimal) bound, apparently of
the order of $s_0^\theta$ for some $\theta \in (0,1)$ (with
$\theta$ rather close to $1$).  On the other hand, finding precise
order is likely a difficult ``small ball'' problem (cf.
\cite{LiShao}). So, instead of pursuing such calculations, we
sketch a heuristic argument which  suggests the limits of our
approach,  which may be not far from the actual behavior.

Our analysis shows that the sharpness of the threshold $s_0$
depends on a combination of two effects: (i) the decay of
$\E\|\rho_{d^2,s}- \frac{\Id}{n}  \|_{\SEP}$ as a functions of $s$
and (ii) the concentration of $\|\rho_{d^2,s}- \frac{\Id}{n}
\|_{\SEP}$ around its mean (or median). We do know  (from
\eqref{MPorder}) that the former is approximately $\phi(s):=
\sqrt{\frac{s_0}{s}}$, and from Proposition
\ref{prop:concentration}  (see \eqref{concentration}) that
$\|\rho_{d^2,s}- \frac{\Id}{n}  \|_{\SEP}$ is concentrated on an
interval whose length is of order $\frac1{\sqrt{s}}$: choosing as
$\eta:=\frac{a}{\sqrt{s}}$, where $a$ is sufficiently large, makes
the right hand side of \eqref{concentration} small.  A simple
calculation shows now that the increase in  $s$ needed to reduce
the value of $\phi(s)$ from $1+\eta$ to $1-\eta$ is of order
$s^{1/2}$.  The lack of rigor in this calculation stems from the
approximation given by $\phi(s)$  not being known  to be valid up
to the precision of order $\eta$.

\subsection{Regularity of the threshold function, and proof of Corollary \ref{particles}.}
\label{proof:particles} Theorem \ref{threshold:separability} is a
statement about sharp transition
 from generic entanglement to generic separability as
the dimension of the ancilla $s$ increases while the dimension
$n=d^2$ of the system itself remains fixed. Clearly, this implies
that a similar phenomenon takes place also  with fixed $s$ and
variable $d$. However, without any additional information about
regularity of
 the threshold function $s_0(d)$,  one can not formally infer that -- for example -- 
 a sharp transition occurs also in this new setting.

While we do not have a complete picture of the regularity of
$s_0(d)$, or of the associated probabilities $\pi_{d,s}$,  we do
have {\em some} information
 (Lemmas \ref{lem:monotonicity} and \ref{lem:monotonicity2}), which is
 enough to deduce  Corollary \ref{particles}.

As noted already in the Introduction,  the setting  of Corollary
\ref{particles} (i.e.,   $N$ particles with $D$ levels each and
two subsystems of $k$ particles each)
 is modeled by a random induced  state on $\C^d \otimes \C^d$
with $d=D^k$  and $s = D^{N-2k}$.  Thus we need to show that the
sequence $p_k := \pi_{D^k,D^{N-2k}}$ has the following property

(i) {\sl for some small $\delta >0$ (which quantifies the ``near
$0$'' and ``near $1$'' probabilities) and for some $k_0\sim N/5$
(the threshold value of $k$), $p_k > 1- \delta$ if $k<k_0$  and
$p_k  < \delta$ if $k>k_0$. } 

\noindent Except for determining the value of $k_0$, this
is equivalent to the following 

(ii)  {\sl if  $p_k \leq 1-\delta$,
then $p_{k+1} < \delta$.} 

\noindent We note that a slight generalization of
Lemma \ref{lem:monotonicity2}, with $2$ replaced by an arbitrary
$D$ and $4$ by $D^2$,  implies that the sequence $(p_k)$ is
non-increasing; this is not necessary for our argument, but
nevertheless reassuring. Also, if $k$ is substantially  smaller
than $N/5$, then $N-2k$ is substantially larger than $3k$ and so
$s = D^{N-2k}$ is substantially larger than $D^{3k}=d^3$ and,
consequently -- by Theorem \ref{threshold:separability} -- $p_k$
is close to $1$. Similarly, if $k$ is substantially  larger than
$N/5$,  $p_k$ is close to $0$. Accordingly, there is no doubt that
the transition from $p_k \approx 1$ to $p_k \approx 0$  does
indeed occur as $k$ increases, and that it occurs when $k\sim
N/5$. The only point that needs to be made is the sharpness of the
transition.

To that end, note that -- by Theorem \ref{threshold:separability}
--- a property similar to (i) and (ii) holds for the sequence
$q_i:= \pi_{d,D^i}$ for any fixed $d$: if $D^i \leq (1-\e)
s_0(d)$, then  $\pi_{d,D^i} < \delta$ and if $D^i \geq (1+\e)
s_0(d)$, then $\pi_{d,D^i} >1- \delta$ (as long as $\delta >
2\exp(-c(\e)s)$ for the appropriate values of $s$), and $ (1-\e)
s_0(d) < D^i <  (1+\e) s_0(d)$ may happen at most for one value of
$i  \sim \log_D(s_0(d))\sim 3\log_D d$ (there will be no such $i$
at all, unless $\log_D(s_0(d))$  is close to an integer).

To show that the condition (ii) above is satisfied for the
sequence  $(p_k)$, we argue as follows. If $p_k =
\pi_{D^k,D^{N-2k}} \leq 1-\delta$, then -- by Lemma
\ref{lem:monotonicity2} -- also $\pi_{D^{k+1},D^{N-2k}} \leq
1-\delta$, and so the observation from the preceding paragraph
applied with $q_i=\pi_{D^{k+1},D^{i}}$ implies that
$\pi_{D^{k+1},D^{N-2k-1}} < \delta$ and, applied one more time
(this works  if $\delta \leq 1-\delta$, or if $\delta \leq 1/2$, which 
may be readily assumed),
 that $p_{k+1} = \pi_{D^{k+1},D^{N-2k-2}} < \delta$, as needed.

\appendix

\section{Majorization and $\iy$-Wasserstein distance}
\label{app:wasserstein}

We gather here some facts concerning the usual {\em modes of
convergence} from probability, {\em $\infty$-Wasserstein
distance}, and  the concept of {\em majorization}. They will be
used in the proof of Proposition  \ref{prop:MP-vs-GUE} in the next
appendix, but they are fairly general and independent of that
particular application; we believe that stating them separately
may be of reference value.

\begin{definition}
Let $\mu_1,\mu_2$ be probability measures on $\R$. The
$\infty$-Wasserstein distance is defined as
$$d_{\infty}(\mu_1,\mu_2):=\inf \|X_1-X_2\|_{L^\infty},$$
with infimum over all couples $(X_1,X_2)$ of random variables with
(marginal) laws $\mu_1$ and  $\mu_2$,  defined on a common
probability space. Similarly, if $Y_1, Y_2$ are real random variables,
we will mean  by $d_{\infty}(Y_1, Y_2)$  the $\infty$-Wasserstein
distance between the laws of $Y_1$ and $ Y_2$.
\end{definition}

Note the following inequality: whenever $f : \R \to \R$ is a
$L$-Lipschitz function and $X,Y$ are bounded random variables, we
have
\begin{equation} \label{eq:lipschitz-wasserstein} 
|\E f(X)-\E f(Y)| \leq L \hskip.5mm d_{\iy}(X,Y).
\end{equation}
The $\iy$-Wasserstein distance can be computed from cumulative
distribution functions: if $F_X(t)=\P(X \leq t)$, then
\[ d_{\iy}(X,Y) = \inf \{ \e>0 \st F_X(t-\e) \leq F_Y(t) \leq F_X(t+\e) \textnormal{ for all } t \in \R \} . \]

The following lemma is elementary and can be proved by using the
fact that the L\'evy distance metrizes the weak convergence for
probability measures (see e.g. \cite{galambos}, Section 4.3).

\begin{lemma} \label{lem:wasserstein}
Let $Z$ be a random variable distributed according to a measure
$\mu_Z$, with support {\bf equal} to some bounded interval
$[a,b]$. If  $(Y_n)$ is a sequence of random variables,  the
following are equivalent:
\begin{enumerate}
\item $ d_{\infty}(Y_n, Z) \to 0$,
\item $Y_n \to Z$ weakly and 
$\sup Y_n  \to  b$,  $\inf Y_n  \to  a$. \footnote{By $\inf$ and $\sup$
we really mean here {\em essential} $\inf$ and $\sup$.}
\end{enumerate}
\end{lemma}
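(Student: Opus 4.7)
The plan is to work throughout with the quantile-function representation
\[
d_\infty(X, Y) = \operatorname*{ess\,sup}_{u \in (0,1)} \bigl|F_X^{-1}(u) - F_Y^{-1}(u)\bigr|,
\]
which follows at once from the CDF characterization recalled just above the lemma and which is attained by the comonotone coupling $X = F_X^{-1}(U)$, $Y = F_Y^{-1}(U)$ with $U$ uniform on $(0,1)$. Extending the left-continuous quantile function to $[0,1]$ by $F^{-1}(0) := \operatorname*{ess\,inf}$ and $F^{-1}(1) := \operatorname*{ess\,sup}$ recasts the lemma as a statement about uniform convergence of monotone functions on a compact interval.

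For (1)$\Rightarrow$(2), I would plug in the comonotone coupling to obtain $|Y_n - Z| \leq d_\infty(Y_n, Z) \to 0$ almost surely. Testing against bounded Lipschitz functions yields weak convergence via \eqref{eq:lipschitz-wasserstein}. The upper bound $\sup Y_n \leq b + d_\infty(Y_n, Z)$ is immediate; for the matching lower bound, the hypothesis $\operatorname{supp} Z = [a,b]$ gives $\P(Z > b - \delta) > 0$ for every $\delta > 0$, hence $\P(Y_n > b - \delta - d_\infty(Y_n, Z)) > 0$, which forces $\operatorname*{ess\,sup} Y_n \geq b - \delta - d_\infty(Y_n, Z)$. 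Letting $\delta \to 0$ closes the loop, and the argument for $\inf Y_n \to a$ is symmetric.

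For (2)$\Rightarrow$(1), the first step is to extract continuity of $F_Z^{-1}$ from the support hypothesis. Since $\operatorname{supp} Z = [a,b]$ forces $F_Z$ to be strictly increasing on $[a,b]$ (any plateau of $F_Z$ would carve an open interval out of the support, contradicting the equality of supports), $F_Z^{-1}$ has no jumps on $(0,1)$; together with $F_Z^{-1}(0) = a$ and $F_Z^{-1}(1) = b$, it is continuous on all of $[0,1]$. Weak convergence then yields $F_{Y_n}^{-1}(u) \to F_Z^{-1}(u)$ for every $u \in (0,1)$ (the classical fact that quantiles converge at continuity points of the inverse), while the $\inf/\sup$ hypotheses supply the two endpoints $u = 0$ and $u = 1$.

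The remaining step, which I see as the only one requiring care, is to upgrade this pointwise convergence to uniform convergence on $[0,1]$. I would do this via the standard monotone-to-continuous argument: pick a partition $0 = u_0 < \cdots < u_N = 1$ on which $F_Z^{-1}$ oscillates by less than $\eta$, and use monotonicity of each $F_{Y_n}^{-1}$ to sandwich it between $F_Z^{-1} \pm 2\eta$ once $n$ is large enough that pointwise convergence at the $u_i$ is accurate to $\eta$. This gives $d_\infty(Y_n, Z) \leq 2\eta$ for $n$ large. The entire technical content of the lemma is thus the link between the support hypothesis and continuity of the quantile function, which is exactly where the equality $\operatorname{supp} Z = [a,b]$ (rather than mere containment in $[a,b]$) is indispensable; everything else is routine.
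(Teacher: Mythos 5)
Your proof is correct. The paper does not actually write out an argument for this lemma; it only records that the statement is elementary and can be deduced from the fact that the L\'evy distance metrizes weak convergence, with a pointer to Galambos. Your route is a self-contained alternative: you pass from $d_\infty$ to the sup-norm distance between quantile functions via the comonotone coupling, and then reduce the hard direction (2)$\Rightarrow$(1) to the classical fact that pointwise convergence of monotone functions on a compact interval to a \emph{continuous} limit is automatically uniform. This makes the role of the support hypothesis completely transparent: $\operatorname{supp}\mu_Z = [a,b]$ is exactly what forbids plateaus in $F_Z$, hence jumps in $F_Z^{-1}$, hence guarantees the (uniform) continuity of $F_Z^{-1}$ on $[0,1]$ needed for the Polya-type argument; the $\sup$/$\inf$ hypotheses supply the two endpoint evaluations $F_Z^{-1}(0)=a$, $F_Z^{-1}(1)=b$. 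The L\'evy-distance route the paper gestures at is essentially the CDF-side mirror image of this: one would observe $d_L \leq d_\infty$ to get (1)$\Rightarrow$weak convergence, and for (2)$\Rightarrow$(1) one would absorb the ``vertical $\epsilon$'' that L\'evy permits into the ``horizontal $\epsilon$'' of $d_\infty$ using the strict steepness of $F_Z$ on $[a,b]$, together with separate tail arguments from the $\sup/\inf$ hypotheses. Same content, phrased in terms of $F$ rather than $F^{-1}$; your quantile formulation is arguably cleaner because the endpoint control is absorbed seamlessly into the $[0,1]$-compactification rather than handled as a side case.
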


Note that the hypothesis on the support is crucial: the
equivalence fails if the support is not connected.

Next, we will relate the $\infty$-Wasserstein distance to the
concept of {\em majorization}, usually defined for  vectors in
$\R^n$. Given  $x\in \R^n$ we will  denote by $x^\downarrow$ the
non-increasing rearrangement of $x$.   If $x,y \in \R^{n,0}$ (the
hyperplane of sum $0$ vectors in $\R^n$), we write $x \prec y$ if,
for every $k \in \{1,\dots,n\}$, we have
\[ \sum_{i=1}^k x^\downarrow_i \leq \sum_{i=1}^k y^\downarrow_i \]
(note that for $k=n$ we always have equality).  The following is
well-known (see \cite{bhatia}, Section II).

\begin{proposition} \label{prop:majorization}
For  $x,y \in \R^{n,0}$ the following are equivalent.
\begin{enumerate}
\item $x \prec y$,
\item whenever $\phi$ is a permutationally invariant
convex function on $\R^{n,0}$, then  $\phi(x) \leq \phi(y)$,
\item For every $t \in \R$, we have $ \sum_{i=1}^n |x_i-t| \leq \sum_{i=1}^n |y_i-t|$,
\item $x$ can be written as a convex combination of coordinatewise  permutations of $y$.
\end{enumerate}
\end{proposition}

We will need a {\em quantitative} version of the concept of
majorization. If $x,x' \in \R^{n,0}$ with $x' \neq 0$, we denote
by $\delta(x,x')$ the smallest non-negative constant $t$ such that $x
\prec t  x'$. In other words, $\delta(\cdot,x')$ is the gauge
associated to the convex body obtained as the convex hull of
coordinatewise permutations of $x'$. The quantity $\delta(x,x')$ should
not be thought of as a distance, but rather as the
 norm of a certain operator. We have the inequality $\delta(x,x'') \leq
\delta(x,x')\delta(x',x'')$. More generally, if $\phi$ is 
any $1$-homogeneous convex function on $\R^{n,0}$ 
(for example, the gauge of a convex body) 
which is permutationally-invariant, then 
\be \label{majorization}
\phi(x) \leq \delta(x,x') \phi(x').
\ee
We can rephrase the concept of majorization in the language of
 ``$n$-point empirical measures,'' i.e.,
the probability measures of the form   $\nu_x = n^{-1}
\sum_{k=1}^n \delta_{x_k}$. The restriction requiring that $x\in
\R^{n,0}$ translates into the underlying random variable having
zero mean. Note that (by Proposition \ref{prop:majorization})
\[\delta(x,y)\!=\delta(y,x) \!= 1\!\! \iff \!\!y 
\textnormal{ is a coordinatewise  permutation of } x \!\!\iff\!\! \nu_x\!=\!\nu_y. \]
The following key observation connects majorization and
convergence in $\iy$-Wasserstein distance, and may be of independent interest.

\begin{proposition}  \label{lemma:convergence-to-Z}
Let $Z$ be a non-constant bounded random variable with mean $0$
and with distribution $\mu_Z$. Then, for every $\e>0$, there
exists $\eta>0$ (depending on $Z$ and $\e$) such that for every
$n$ and for all vectors $x,y \in \R^{n,0}$ satisfying
$d_{\iy}(\nu_x,\mu_Z) \leq \eta$ and $d_{\iy}(\nu_y,\mu_Z) \leq
\eta$, we have $\delta(x,y) \leq 1+\e$.
\end{proposition}

\begin{proof}
Denote $a=\inf Z$ and $b =\sup Z$ (the hypotheses on $Z$ imply
$a<0<b$). Let $\e >0$, and $x,y \in \R^{n,0}$ such that
$d_{\iy}(\nu_x,\mu_Z) \leq \eta$ and $d_{\iy}(\nu_y,\mu_Z) \leq
\eta$. We must choose $\eta$ to ensure that $\delta(x,y) \leq
1+\e$ or, equivalently, $x \prec (1+\e)y$.  By Proposition
\ref{prop:majorization}, this is equivalent to
$$
\sum_{i=1}^n |x_i-t| \leq \sum_{i=1}^n |(1+\e)y_i-t|  \quad \hbox{
for all } t \in \R
$$
(where $x = \big(x_i\big)_{i=1}^n$ and $y =
\big(y_i\big)_{i=1}^n$)  or, in the language of $n$-point
empirical measures,
\begin{equation}
f_0(t) := \int |u-t| \, d\nu_{x}(u) \leq \int |(1+\e)u-t| \,
d\nu_{y}(u) =: g_0(t).  \label{deffngn}
\end{equation}
We first note that these conditions are automatically satisfied if
$t\geq \max_i x_i$. Indeed, we have then  $|u-t| = t-u$ for  all
$u$ in the support of $\nu_x$ and so the first integral above
equals $\int t d\nu_x - \int u d\nu_x = t-0=t$. On the other hand,
we always have $ |(1+\e)u-t|  \geq  t-(1+\e)u$ and so the second
integral is at least $\int t d\nu_y - \int (1+\e)u d\nu_y =
t-(1+\e)0=t$. Similar argument applies when $t\leq \min_i x_i$.

If we choose $\eta \leq \frac{\e}{2}\min(-a,b)$, then we have
$(1+\e/2)a \leq \min_i x_i \leq  \max_i x_i \leq (1+\e/2)b$ and
therefore it suffices to show the inequality $f_0 \leq g_0$ on the
interval $\big((1+\e/2)a,(1+\e/2)b\big)$. To that end, we compare
the functions $f_0$ and $g_0$ with the functions
\[ f(t) = \E | Z-t | = \int |u-t| \, d\mu_{Z}(u) ,\]
\[ g(t) = \E | (1+\e)Z-t | = \int |(1+\e)u-t| \, d\mu_{Z}(u) .\]
Proposition \ref{lemma:convergence-to-Z} will now follow from the
following lemma.

\begin{lemma} \label{lem:fg}
In the above notation, $f(t) < g(t)$ for $t \in
\big((1+\e)a,(1+\e)b\big)$.
\end{lemma}

Indeed, since $f$ and $g$ are continuous, there is a number
$\theta >0$ such that $f(t)<g(t)-\theta$ for every $t \in
\big[(1+\e/2)a,(1+\e/2)b\big]$. On the other hand, by
\eqref{eq:lipschitz-wasserstein}, we have $|f_0(t)-f(t)| \leq
\eta$ and $|g_0(t)-g(t)| \leq (1+\e)\eta$ for any $t \in \R$. It
remains to choose $\eta > 0$ such that $(2+\e)\eta \leq \theta$ to
guarantee that $f_0 \leq g_0$ on $\big[(1+\e/2)a,(1+\e/2)b\big]$.
\end{proof}

\begin{proof}[Proof of Lemma \ref{lem:fg}]

Assume first $t\geq0$.  We have
$$
g(t)-\!f(t) \!=\!\!\! \int\!\!\! \big( |(1+\e)u-t|-|u-t| \big) d\mu_{Z}(u)\! =\!\!\!
\int\!\!\! \big( |(1+\e)u-t|-|u-t| +\e u\big) d\mu_{Z}(u) .
$$
It is now elementary to check that the last integrand is $0$ if
$u\leq  t/(1+\e)$ and strictly positive if $u >  t/(1+\e)$.
Accordingly, the integral is always nonnegative, and it is
strictly positive if the interval $ (t/(1+\e), \infty)$ intersects
the support of $\mu_Z$, that is exactly when $t/(1+\e) < b$, or
$t<(1+\e)b$.  The case  $t\leq 0$ is handled similarly (or by a
change of variable $u=-v$).
\end{proof}

\section{GUE approximation to induced states: Proposition
 \ref{prop:MP-vs-GUE}} \label{app:marcenko-pastur-vs-GUE}

The strategy of the proof is as follows. First, we gather known
facts from Random Matrix Theory which assert that, for the regime
in question (i.e., $n, \frac ns \to \infty$), the appropriately
normalized ensembles $\big(G_n\big)$ and $\big(\rho_{n,s} -
\frac{\Id}{n}\big)$ converge (in probability) to the same limit in
the sense of non-commutative probability. Then we will show that
-- in the same regime -- the expectations of the gauges
$\|\cdot\|_K$ must be asymptotically the same for both ensembles,
which is essentially the assertion of Proposition
\ref{prop:MP-vs-GUE}; this part of the argument will be based on
the material from Appendix \ref{app:wasserstein}.

We first set some notation. Recall that $\cM^{sa}_n$ is the space
of self-adjoint operators on $\C^n$ and $\cM^{sa,0}_n$ -- the
subspace of self-adjoint trace $0$ operators. 
We denote by $\spec(A) \in \R^n$ the spectrum of an operator $A
\in \cM^{sa}_n$ (ordered in the increasing order for definiteness,
but this is irrelevant). Note that $A \in \cM^{sa,0}_n \iff
\spec(A) \in \R^{n,0}$ (the hyperplane of sum zero vectors in $\R^n$). 
Conversely, if $x \in \R^{n}$, we denote
by $\diag(x) \in \cM_n^{sa}$ the diagonal matrix built from $x$.

Recall that the standard {\em semicircular distribution} (or Wigner
distribution) is the probability distribution $\mu_{sc}$ with
density
\[ \frac{1}{2\pi} \sqrt{4-x^2} {\bf 1}_{|x| \leq 2} .\]
Note that a random variable $Z$ with semicircular distribution
satisfies the hypotheses of Lemmas  \ref{lem:wasserstein} and
Proposition \ref{lemma:convergence-to-Z}.

We are now in a position to state the needed facts from Random
Matrix Theory. Recall that the $n$-point empirical measure
associated to a vector $x=(x_k) \in \R^n$ is the measure $\nu_x=
n^{-1} \sum_{k=1}^n \delta_{x_k}$.

\begin{proposition} \label{pr:RMT1}
For every $n$, let $G_n$ be an $n \times n$ $\textnormal{GUE}^0$
random matrix, and let $\nu_{\spec (n^{-1/2}G_n)}$ be   the
rescaled empirical spectral distribution. When $n$ tends to
infinity, the sequence $\big(\nu_{\spec (n^{-1/2}G_n)}\big)$
converges to $\mu_{sc}$, in probability, with respect to the
$\iy$-Wasserstein distance.
\end{proposition}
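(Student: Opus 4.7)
The plan is to reduce the claim to the classical Wigner semicircle law for the standard $\textnormal{GUE}$ ensemble (which does not impose the trace-zero constraint), and then use Lemma \ref{lem:wasserstein} to upgrade weak convergence plus edge convergence into convergence in $d_\infty$.

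First, I would show that the trace-zero constraint is a negligible perturbation. As recalled in the excerpt, $G_n = G'_n - \frac{\tr G'_n}{n}\,\Id$ for a standard $n \times n$ $\textnormal{GUE}$ matrix $G'_n$, so the spectrum of $n^{-1/2}G_n$ is that of $n^{-1/2}G'_n$ shifted uniformly by the scalar $\alpha_n := \frac{\tr G'_n}{n^{3/2}}$. The diagonal entries of $G'_n$ are i.i.d.\ $N(0,1)$, so $\tr G'_n \sim N(0,n)$ and thus $\alpha_n \to 0$ in probability. Coupling the two spectra through the identity map of indices gives
\[ d_\infty\bigl(\nu_{\spec(n^{-1/2}G_n)},\,\nu_{\spec(n^{-1/2}G'_n)}\bigr) \leq |\alpha_n|, \]
which tends to $0$ in probability. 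Hence it suffices to prove the proposition with $G'_n$ in place of $G_n$.

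For $G'_n$, I would invoke two classical facts about the standard $\textnormal{GUE}$. The first is Wigner's semicircle law (in probability): $\nu_{\spec(n^{-1/2}G'_n)} \to \mu_{sc}$ weakly in probability (see e.g.\ \cite{agz}). The second is convergence of the spectral edges in probability: $\lambda_{\max}(n^{-1/2}G'_n) \to 2$ and $\lambda_{\min}(n^{-1/2}G'_n) \to -2$, which follows from the classical Bai--Yin type estimates for $\textnormal{GUE}$ (or from the much finer Tracy--Widom result). Since $\mu_{sc}$ has support \emph{equal} to the bounded interval $[-2,2]$, Lemma \ref{lem:wasserstein} then converts weak convergence together with convergence of $\sup$ to $2$ and $\inf$ to $-2$ into convergence in $d_\infty$ (passing along almost surely convergent sub-subsequences to handle the probabilistic upgrade, which is routine). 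Combined with the reduction above, this proves the proposition. The main obstacle is not conceptual but bibliographic: one needs the in-probability formulations of both Wigner's law and of edge convergence, both of which are standard and can be found in \cite{agz}.
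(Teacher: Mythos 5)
Your proof is correct and follows essentially the same route as the paper: both reduce the claim via Lemma \ref{lem:wasserstein} to weak convergence plus convergence of spectral edges, and both pass between $\textnormal{GUE}^0$ and standard $\textnormal{GUE}$ using the representation $G_n = G'_n - \frac{\tr G'_n}{n}\Id$. Your handling of the $\textnormal{GUE}^0$ reduction is slightly more streamlined (a single $d_\infty$ coupling bound by the vanishing scalar shift $\alpha_n$, transferring weak convergence and edge convergence simultaneously), whereas the paper argues the two properties for $\textnormal{GUE}^0$ separately, invoking Jensen's inequality for the edge bound; either way the substance is the same.
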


 Convergence in probability means that for any $\e>0$, $\lim_{n \to
\iy} \P( d_{\iy}(\mu_n,\mu_{sc}) $ $> \e) = 0.$

\begin{proposition} \label{pr:RMT2}
For every $n,s$, let $A_{n,s}=\rho_{n,s} - \frac{\Id}{n}$, where
$\rho_{n,s}$ is a random state with distribution $\mu_{n,s}$, and
let $\nu_{\spec ( \sqrt{ns} A_{n,s)}}$ be the rescaled empirical
spectral distribution. When $n$ and $s/n$ tend to infinity, the
sequence $\big(\nu_{\spec (\sqrt{ns} A_{n,s)}}\big)$ converges to
$\mu_{sc}$, in probability, with respect to the $\iy$-Wasserstein
distance.
\end{proposition}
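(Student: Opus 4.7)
The plan is to realize $\rho_{n,s}$ as a normalized complex Wishart matrix and combine three ingredients: (i) the Marchenko--Pastur theorem in its scaling limit as the aspect ratio tends to zero, (ii) edge-of-spectrum concentration for Wishart matrices, and (iii) Lemma~\ref{lem:wasserstein}.

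First I would write $\rho_{n,s} = WW^\dagger / \tr(WW^\dagger)$, where $W$ is an $n \times s$ matrix with i.i.d.\ $N_\C(0,1)$ entries. Setting $\beta := ns/\tr(WW^\dagger)$, the algebraic identity
\[
\sqrt{ns}\, A_{n,s} \;=\; \beta\,\sqrt{s/n}\,\Big(\tfrac{1}{s}WW^\dagger - \Id\Big) \;+\; (\beta-1)\sqrt{s/n}\,\Id
\]
reduces the problem to the spectral asymptotics of $\sqrt{s/n}\bigl(\tfrac{1}{s}WW^\dagger - \Id\bigr)$. Since $\tr(WW^\dagger)$ is $\chi^2$-distributed with $2ns$ real degrees of freedom, Gaussian concentration gives $\beta = 1 + O_p((ns)^{-1/2})$, hence the scalar summand has operator norm $O_p(1/n)$ and will be $d_\iy$-negligible.

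Next I would invoke the classical fact that the empirical spectral distribution of $\tfrac{1}{s}WW^\dagger$ converges in probability to the Marchenko--Pastur law $\mu_{MP}^{c}$ with parameter $c = n/s$, supported on $[(1-\sqrt{c})^2,(1+\sqrt{c})^2]$. A direct change-of-variables computation on its density shows that its pushforward under $x \mapsto c^{-1/2}(x-1)$ converges weakly to the standard semicircle $\mu_{sc}$ on $[-2,2]$ as $c \to 0$. Under our regime $s/n \to \infty$, this yields weak convergence in probability of the empirical spectrum of $\sqrt{s/n}\bigl(\tfrac{1}{s}WW^\dagger - \Id\bigr)$ to $\mu_{sc}$.

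To upgrade the convergence from weak to $d_\iy$ I would invoke Lemma~\ref{lem:wasserstein}: its second condition requires that the largest and smallest eigenvalues converge to the endpoints $\pm 2$. By the Bai--Yin theorem together with standard operator-norm concentration for Gaussian matrices (e.g.\ the Davidson--Szarek estimate), the extreme eigenvalues of $\tfrac{1}{s}WW^\dagger$ are $(1\pm\sqrt{c})^2 + o_p(1)$, and after the $\sqrt{s/n}$-rescaling they become $\pm 2 + \sqrt{n/s} + o_p(1) \to \pm 2$. Factoring in the multiplicative $\beta = 1+o_p(1)$ and the $O_p(1/n)$ scalar shift only perturbs all eigenvalues by an amount vanishing uniformly in probability, so the conclusion follows. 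The main obstacle I anticipate is a tidy treatment of the MP-to-semicircle scaling limit with enough uniformity that weak convergence plus edge control really does feed into Lemma~\ref{lem:wasserstein} to yield ``in probability'' $d_\iy$-convergence, as opposed to the more familiar statement of convergence in law.
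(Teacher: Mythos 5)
Your proposal is correct and follows the same strategy as the paper's proof: apply Lemma~\ref{lem:wasserstein} to reduce $d_\iy$-convergence to weak convergence plus convergence of the extreme eigenvalues, and then obtain both ingredients from the Wishart representation $\rho_{n,s}=WW^\dagger/\tr(WW^\dagger)$. Where the paper simply cites \cite{BaiYin} for the weak limit and \cite{HT,CNY} for the edge control (which is precisely what resolves the uniformity-in-$c=n/s$ concern you flag at the end), you usefully make explicit the $\beta$-decomposition and the $c\to 0$ scaling limit of the Marchenko--Pastur density, details the paper leaves implicit.
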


\begin{proof}[References for Propositions \ref{pr:RMT1} and \ref{pr:RMT2}]
By Lemma \ref{lem:wasserstein}, convergence to the semi-circle
distribution with respect to the $\iy$-Wasserstein distance is
equivalent to weak convergence and convergence of extreme
eigenvalues. These statements appear separately in random matrix
literature.

Wigner's famous result (the semi-circle law, see \cite{agz}) is that 
$\nu_{\spec(n^{-1/2}G_n)}$ converges weakly, in probability, towards
$\mu_{sc}$. The standard setting is the case of GUE matrices,
while here we consider $\textnormal{GUE}^0$ matrices (i.e., GUE
matrices conditioned to have trace $0$), but it is easily checked
that this doesn't affect the limit distribution.

The statement about convergence of limit eigenvalues is also
well-known. For example, it is known that $\E \|G_n\|_{\iy} \leq
2\sqrt{n}$ (a proof for $\textnormal{GUE}$ matrices can be found
in Appendix H of \cite{Sz1}, and this extends to
$\textnormal{GUE}^0$ via Jensen's inequality). Concentration of
measure (e.g., the Gaussian version of Lemma  \ref{lemma:Levy})
implies then that
\begin{equation} \label{eq:tailestimate} 
\P( \|G_n/\sqrt{n}\|_{\iy} \geq 2+t) \leq \exp(-nt^2/2) .
\end{equation}

Some readers may be surprised by  Proposition \ref{pr:RMT2} since 
$\rho_{n,s}$ is a (rescaled) Wishart ensemble, which is 
known to have  the  Marchenko--Pastur law as the limiting spectral distribution
\cite{Marchenko-Pastur 1967}. 
However, that limit is obtained when $n,s \to \infty$ with the ratio 
$s/n \to \beta$ for some $\beta >0$, while here we have $s/n \to \infty$. 
For a non-rigorous but convincing and simple calculation, 
the reader may verify that when $\beta \to \infty$, the (appropriately rescaled) 
Marchenko--Pastur density does indeed converge to a 
(non-centered) semi-circular density. 

For a rigorous argument, we refer to  \cite{BaiYin}, where
the weak convergence towards the semicircular distribution is
proved (one should point out that this statement appeared
implicitly already in \cite{Marchenko-Pastur 1967}). The statement
about limit eigenvalues, as well as tail inequalities analogous to
\eqref{eq:tailestimate}, can be deduced from the techniques from
\cite{HT} (see also \cite{CNY}, Theorem 2.7). 

We point out that while the (rather difficult) memoir \cite{HT} is the 
ultimate reference on the subject of rectangular complex Gaussian matrices,   
and while it is indispensable for recovering very sharp results, 
the estimates we need here can be obtained by much simpler methods. 
The primary reason we are invoking \cite{HT} is that, historically, 
the topic of  rectangular random matrices was studied only in the real case 
because of its relevance to statistics. However, most -- 
but not all (cf. \cite{ds-handbook}, \S 2.3) -- 
arguments  carry over to the complex case; 
in particular,  the simple and elegant proof  from \cite{silverstein} 
would be worthwhile to analyze in this regard. 
Finally, let us note that once the behavior of the limit eigenvalues 
is determined, the tail inequalities of the type \eqref{eq:tailestimate} 
follow from the Gaussian concentration (see the comments following 
Lemma \ref{lemma:Levy}). 
\end{proof}

For every integer $n$, denote by $X_n^{\text{sc}} \in \R^{n,0}$
the ``ideally semicircular'' vector, i.e. the vector
$X_n^{\text{sc}}=(X_{n,1}^{\text{sc}},\dots,X_{n,n}^{\text{sc}})$
such that
\[ F_{\text{sc}}(X_{n,k}^{\text{sc}})= \frac{2k-1}{2n}, \]
where $F_{\text{sc}}$ is the cumulative distribution function of
standard semicircular distribution (note that the sum of
coordinates is indeed zero). Obviously (see the beginning of 
Appendix \ref{app:wasserstein}), the sequence
$\big(\nu_{X_n^{\text{sc}}}\big)$ converges to $\mu_{sc}$ in the
$\iy$-Wasserstein distance.

Let $K$ be a convex body in $\cM^{sa,0}_n$ (containing $0$ in the
interior), with $\|\cdot\|_K$ the corresponding gauge function.
Define a gauge $\phi_K$ on $\R^{n,0}$ by setting
\[ \phi_K(x) = \int_{\cU(n)} \| U \diag(x) U^\dag \|_K dU , \]
where the integral is taken with respect to the (normalized) Haar
measure on the unitary group. Let $G_n$ be an $n \times n$
$\text{GUE}^0$ random matrix, and let $U \in \cU(n)$ be a
Haar-distributed random unitary matrix independent from $G_n$. By
unitary invariance, $G_n$ has the same distribution as $U^\dagger
\diag(\spec(G_n)) U$. Therefore, we have
\[ \E \phi_K(\spec(G_n)) = \E \|G_n\|_K .\]

Since $\phi_K$ is convex and invariant under permutation of
coordinates, it follows from \eqref{majorization} that  
$\phi_K(x) \leq \delta(x,y) \phi_K(y)$ for every
$x,y \in \R^{n,0}$ . In particular, if we introduce the random
variables $\alpha_n=\delta
(\spec(\frac{1}{\sqrt{n}}G_n),X_n^{\text{sc}} )$ and
$\beta_n=\delta (X_n^{\text{sc}},\spec(\frac{1}{\sqrt{n}}G_n))$,
we have
\[ \sqrt{n} \phi_K(X_n^{\text{sc}}) \beta_n^{-1} \leq \phi_K(\spec(G_n))
 \leq \sqrt{n} \phi_K(X_n^{\text{sc}}) \alpha_n .\]

Taking expectation, we obtain
\begin{equation} \label{ineq-alphabeta1}
\sqrt{n} \phi_K(X_n^{\text{sc}}) \E \beta_n^{-1} \leq \E \|G_n\|_K
\leq \sqrt{n} \phi_K(X_n^{\text{sc}}) \E \alpha_n .
\end{equation}
Recall that $A_{n,s} \!=\! \rho_{n,s}-\frac{\Id}{n}$, where $\rho_{n,s}$ is a
random state with distribution $\mu_{n,s}$, and introduce the random
variables $\alpha'_{n,s}\!\!=\!\delta
(\spec(\!\sqrt{ns}A_{n,s}),X_n^{\text{sc}} )$ and
$\beta'_{n,s}\!\!=\!\delta (X_n^{\text{sc}},\spec(\!\sqrt{ns}A_{n,s})\!)$.
Since the distribution of $A_{n,s}$ is unitarily invariant, the
same argument applies verbatim, and yields the inequalities
\begin{equation} \label{ineq-alphabeta2}
\frac{1}{\sqrt{ns}} \phi_K(X_n^{\text{sc}}) \E (\beta'_{n,s})^{-1}
\leq \E \|\rho_{n,s}-\frac{\Id}{n}\|_K \leq \frac{1}{\sqrt{ns}}
\phi_K(X_n^{\text{sc}}) \E \alpha'_{n,s} .\end{equation}

By combining \eqref{ineq-alphabeta1} and \eqref{ineq-alphabeta2},
one obtains the following inequalities for the constants $C_{n,s}$
and $c_{n,s}$ introduced in Proposition \ref{prop:MP-vs-GUE}
\[ \frac{\E (\beta'_{n,s})^{-1}}{\E \alpha_n} \leq  c_{n,s} \leq C_{n,s} 
\leq \frac{\E \alpha'_{n,s}}{\E \beta_n^{-1}}\]

It is now immediate to deduce Proposition \ref{prop:MP-vs-GUE}
from the following lemma.
%\end{proof}

\begin{lemma}  \label{lem:majorizationRMT}
In the notation introduced above, we have
\[ \lim_{n \to \iy} \E \alpha_n = \lim_{n \to \iy} \E \beta_n^{-1} = \lim_{n,s/n \to \iy} \E \alpha'_{n,s} =
 \lim_{n,s/n \to \iy} \E (\beta'_{n,s})^{-1} = 1 .\]
\end{lemma}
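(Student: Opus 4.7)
The strategy is to combine convergence in probability (which will follow from the tools already assembled) with a uniform moment bound, and then conclude by uniform integrability.

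\emph{Stage 1: convergence in probability.} The standard semicircular law $\mu_{sc}$ satisfies the hypotheses of Lemma \ref{lemma:convergence-to-Z}: it is bounded, centered, and has connected support $[-2,2]$. By the very construction of $X_n^{sc}$ as the quantile vector, the deterministic sequence $\nu_{X_n^{sc}}$ converges to $\mu_{sc}$ in $d_\iy$. Propositions \ref{pr:RMT1} and \ref{pr:RMT2} supply the same convergence in probability for the rescaled empirical spectral distributions of $G_n$ and of $A_{n,s}$. Applying Lemma \ref{lemma:convergence-to-Z} with $Z\sim\mu_{sc}$ gives, in the respective asymptotic regimes, that $\alpha_n, \beta_n, \alpha'_{n,s}, \beta'_{n,s}$ all converge to $1$ in probability; the reciprocals $\beta_n^{-1}$ and $(\beta'_{n,s})^{-1}$ therefore do as well.

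\emph{Stage 2: uniform integrability.} To upgrade to convergence of expectations, I will dominate each of the four quantities by a constant times the operator norm of the relevant scaled random matrix, and show that these operator norms are uniformly bounded in $L^2$ in the relevant parameters. The key deterministic claim is: there exist an absolute constant $C$ and an integer $n_0$ such that for every $n\geq n_0$ and every $x\in\R^{n,0}\setminus\{0\}$,
\[ \delta(x, X_n^{sc}) \leq C \|x\|_\iy \qquad \textnormal{and}\qquad \delta(X_n^{sc}, x)^{-1} \leq C \|x\|_\iy . \]
Taking $x=\spec(n^{-1/2}G_n)$ bounds both $\alpha_n$ and $\beta_n^{-1}$ by $C\|G_n\|_\iy/\sqrt{n}$, whose every moment is uniformly controlled by the tail estimate \eqref{eq:tailestimate}. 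Taking $x=\spec(\sqrt{ns}A_{n,s})$ bounds $\alpha'_{n,s}$ and $(\beta'_{n,s})^{-1}$ by $C\sqrt{ns}\,\|A_{n,s}\|_\iy$, and the analogous tail control in the Wishart regime $s/n\to\iy$ is provided by \cite{HT,CNY}. Combined with Stage 1, this yields the desired $L^1$ convergence to $1$.

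\emph{Proof of the key estimate, and main obstacle.} Unwinding the definition,
\[ \delta(x, X_n^{sc}) = \max_{1\leq k\leq n-1}\frac{\sum_{i=1}^k x^\downarrow_i}{\sum_{i=1}^k X_{n,i}^{sc,\downarrow}}. \]
The numerator is controlled elementarily: $\sum_{i=1}^k x^\downarrow_i \leq k\|x\|_\iy$ is trivial, and the dual bound $\sum_{i=1}^k x^\downarrow_i = -\sum_{i=k+1}^n x^\downarrow_i \leq (n-k)\|x\|_\iy$ uses $\sum_i x_i=0$, giving $\sum_{i=1}^k x^\downarrow_i \leq \min(k,n-k)\|x\|_\iy$. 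The substantive part is the matching deterministic lower bound $\sum_{i=1}^k X_{n,i}^{sc,\downarrow}\gtrsim \min(k,n-k)$, uniformly in $n$ and $k$. This reduces to the edge behavior of $\mu_{sc}$: since its density vanishes like $\sqrt{2-t}$, one has $2-F_{sc}^{-1}(1-u)=\Theta(u^{2/3})$ as $u\to 0^+$, hence $\int_{F_{sc}^{-1}(1-u)}^2 t\,d\mu_{sc}(t)=\Theta(u)$; a Riemann sum comparison converts this into the required linear lower bound near both edges, while in the bulk the partial sum is of order $n$. The second inequality of the claim is immediate from $\delta(X_n^{sc},x)\geq X_{n,1}^{sc,\downarrow}/x^\downarrow_1$ together with $X_{n,1}^{sc,\downarrow}\to 2$. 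Thus the only non-routine step is this deterministic edge calculation for the semicircle; everything else is a standard combination of Random Matrix Theory inputs with the abstract material of Appendix \ref{app:wasserstein}.
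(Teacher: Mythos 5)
Your proof is correct and follows the same two-stage route as the paper: convergence in probability of $\alpha_n,\beta_n,\alpha'_{n,s},\beta'_{n,s}$ to $1$ via Propositions \ref{pr:RMT1}, \ref{pr:RMT2} and Lemma \ref{lemma:convergence-to-Z}, then uniform integrability from the deterministic bound $\delta(x,X_n^{\text{sc}})\lesssim\|x\|_\iy$ combined with the operator-norm tail estimate \eqref{eq:tailestimate}. The only surface differences are that the paper obtains that deterministic bound by factoring through the balanced $\pm1$ vector $z$ (using $\delta(x,z)=\|x\|_\iy$ and $\delta(z,X_n^{\text{sc}})\leq C$, which is precisely your partial-sum inequality $\sum_{i\leq k}X_{n,i}^{\text{sc},\downarrow}\gtrsim\min(k,n-k)$; a cruder bound than your $u^{2/3}$ edge asymptotics already suffices there), and that it handles $\E\beta_n^{-1}$ via the shortcut $\alpha_n\beta_n\geq1$ rather than your separate estimate on $\delta(X_n^{\text{sc}},x)^{-1}$.
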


\begin{proof}[Proof of Lemma \ref{lem:majorizationRMT}]
Propositions \ref{pr:RMT1} and \ref{pr:RMT2} combined with Proposition 
\ref{lemma:convergence-to-Z} imply that
$\alpha_n$, $\beta_n$, $\alpha'_{n,s}$ and $\beta'_{n,s}$ converge in
probability to $1$ when $n$ and $s/n$ tend to infinity.

The convergence of $\E \alpha_n$ will follow if we show that, say,
$\sup_n \E \alpha_n^2 < +\iy$. Introduce a vector $z \in \R^{n,0}$
with $\lfloor n/2 \rfloor$ coordinates equal to $1$ and $\lfloor
n/2 \rfloor$ coordinates equal to $-1$ (if $n$ is odd, the
remaining coordinate is necessarily $0$). It is easily checked
that $\delta(z,X_n^{\text{sc}})$ is bounded by an absolute
constant $C$. Moreover, for any $x \in \R^{n,0}$, we have
$\delta(x,z)=\|x\|_{\iy}$. It follows that
\[ \alpha_n \leq C \frac{1}{\sqrt{n}} \| G_n \|_{\iy} \]
and the problem is reduced to showing that $\sup_n \E
\|G_n/\sqrt{n}\|_{\iy}^2 < +\iy$, which follows easily from the
tail estimates \eqref{eq:tailestimate}.

Since $\alpha_n \beta_n \geq 1$,  convergence of $\E \beta_n^{-1}$
follows from the convergence of $\E \alpha_n$. The convergence of
$\E \alpha'_{n,s}$ and $\E (\beta'_{n,s})^{-1}$ is proved along
the same lines.
\end{proof}
\begin{remark} The above arguments are quite flexible and 
become particularly simple if we do not
aim at showing that $\delta(\cdot,\cdot) \sim 1$, but at an
estimate $\delta(\cdot,\cdot) \lesssim 1$. For example, the
estimates stated in the second remark following Proposition
\ref{prop:MP-vs-GUE} can be deduced from the known bounds on $\E
\| \cdot \|_{\iy}$ (such as \eqref{eq:tailestimate}) for the
ensembles in question and the following elementary fact:
 if $x, y \in \R^{n,0}$ are such that $\max_j |x_j| \leq \frac Cn \sum_j|y_j|$, then $x  \prec 2Cy$.
\end{remark}

\section{Irreducible subspaces for the isometry group of $\cS$} \label{app:irreducible}

Fix an integer $d>1$. Denote by $H$ the (real) vector space
$\cM_d^{sa}$, $H_0=\Hzero$ the hyperplane of trace zero matrices,
and $H_1=H_0^\perp$ the one-dimensional space of scalar matrices.
(Note that this notation is not identical to the one used in the
main body of the paper.)

The unitary group $\cU(\C^d)$ acts on $H$ by conjugacy: the action
of an element $U \in \cU(\C^d)$ is given by
\[  A \mapsto UAU^\dagger . \]

Similarly, for any integer $k$, the direct product $\mU(\C^d)^k$
acts on $H^{\otimes k}$: the action of a $k$-tuple
$(U_1,\dots,U_k)$ being given by
\begin{equation} \label{eq:action}
A \mapsto (U_1 \otimes \cdots \otimes U_k) A(U_1^\dagger \otimes
\cdots \otimes U_k^\dagger)
\end{equation}
(this construction is called {\em external tensor product}  in
representation theory).

 When a group $\Gamma$ acts on a (real or complex) vector space $E$ (i.e., if $\mL(E)$
 is the space of linear operators on $E$ and
 $\pi : \Gamma \to \mL(E)$ is  a representation), one
 says that a nonzero subspace $F \subset E$ is {\itshape invariant} 
 if, for every $\gamma \in \Gamma,
 \pi(\gamma) F \subset F$. 
 We say that a nonzero subspace $F \subset E$ is {\itshape irreducible} if,
 for every nonzero $x \in  F$,
 \[ F=\Span \{ \pi(\gamma) x \st \gamma \in \Gamma \}; \]
 in other words, if $F$ is invariant, but no proper subspace of $F$ is.

\begin{lemma} \label{lem:irreducible}
Consider the action of $\mU(\C^d)^k$ on $H^{\otimes k}$ given by
\eqref{eq:action}. A subspace $E \subset H^{\otimes k}$ is
irreducible if and only if it has the form
\begin{equation} \label{eq:elementary-subspace}
H_{i_1} \otimes \cdots \otimes H_{i_k} \end{equation} for some
choice of $(i_1,\dots,i_k) \in \{0,1\}^k$. Moreover, a subspace $E
\subset H^{\otimes k}$ is invariant if and only if it is the
direct sum of subspaces of the form
\eqref{eq:elementary-subspace}.
\end{lemma}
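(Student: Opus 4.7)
The plan is to reduce the lemma to the standard representation-theoretic fact that an external tensor product of absolutely irreducible representations of compact groups is absolutely irreducible, with inequivalent factors producing inequivalent products. Once this is in hand, the $2^k$ elementary subspaces will exhaust $H^{\otimes k}$ by a dimension count, and complete reducibility (since $\cU(\C^d)^k$ is compact) will take care of the bookkeeping.

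First I would verify that $H = H_0 \oplus H_1$ is the decomposition of $H$ into irreducible $\cU(\C^d)$-subrepresentations under conjugation. Invariance of both summands is immediate (conjugation preserves trace and fixes scalars). The one-dimensional space $H_1$ is trivially irreducible. For $H_0$, given nonzero $A \in H_0$ with spectral decomposition $A = \sum_i \lambda_i |e_i\rangle\langle e_i|$, the conjugation orbit of $A$ contains $\sum_i \lambda_i |f_i\rangle\langle f_i|$ for every orthonormal basis $(f_i)$; suitable linear combinations of these conjugates produce all differences $|\psi\rangle\langle\psi| - |\phi\rangle\langle\phi|$ of rank-one projectors, and such differences span $H_0$ since rank-one projectors span $\cM_d^{sa}$. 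Moreover $H_0$ and $H_1$ are inequivalent (their real dimensions $d^2-1$ and $1$ differ, as $d \geq 2$) and both absolutely irreducible: their complexifications are, respectively, the complex adjoint representation of $\cU(\C^d)$ on trace-zero matrices (a classical complex irreducible) and the trivial one-dimensional representation.

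Next I would apply the external tensor product principle with $(V_1, \dots, V_k)$ ranging over $\{H_0, H_1\}^k$. This produces $2^k$ pairwise inequivalent absolutely irreducible $\cU(\C^d)^k$-invariant subspaces $H_{i_1}\otimes\cdots\otimes H_{i_k}$, and their dimensions $\prod_j \dim H_{i_j}$ sum to $(1 + (d^2-1))^k = d^{2k} = \dim H^{\otimes k}$, so their direct sum is everything. Uniqueness of the isotypic decomposition then forces every invariant subspace to be a direct sum of some subcollection of these elementary subspaces, and the only irreducibles among such direct sums are the individual elementary subspaces themselves, establishing both assertions of the lemma.

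The main obstacle is the representation-theoretic input in the third paragraph, but it is entirely standard and reduces (after passing to the complexification) to applying Schur's lemma to the commutant of the $\cU(\C^d)^k$-action: for a tensor product of absolutely irreducibles this commutant collapses to the base field, which gives absolute irreducibility of the product. Everything else is elementary bookkeeping.
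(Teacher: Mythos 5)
Your proposal is correct and takes essentially the same route as the paper: decompose $H$ into the irreducible $\cU(\C^d)$-modules $H_0$ and $H_1$, apply the external-tensor-product theorem for compact groups to obtain irreducibility of the $2^k$ elementary subspaces, check pairwise inequivalence, and conclude via uniqueness of the isotypic decomposition, with the real-versus-complex subtlety handled by complexification (your appeal to absolute irreducibility). The only divergence is in the sub-argument that $H_0$ is irreducible: the paper works directly over $\C$, diagonalizing the Hermitian or anti-Hermitian part of $A$, averaging over diagonal unitaries, and invoking irreducibility of the $\mathfrak{S}_d$-action on the sum-zero hyperplane, whereas you use the real spectral decomposition to produce differences of rank-one projectors and then separately cite irreducibility of the complex adjoint representation --- both routes are valid and of comparable length.
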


Before we prove Lemma \ref{lem:irreducible}, let us state basic
results from representation theory (see \cite{Serre}).

\begin{proposition} \label{thm-representations}
The following results are valid for {\bfseries complex}
representations of compact groups.
\begin{enumerate}
\item[(i)] For every representation $\pi : \Gamma \to \mL(E)$, there exists a decomposition
of $E$ as direct sum
\begin{equation} \label{eq:decomposition} E = \bigoplus_{\alpha} E_\alpha ,\end{equation}
where $(E_\alpha)$ are irreducible subspaces.
\item[(ii)] If, moreover, there do not exist indices $\alpha \neq \alpha'$ such that
the subrepresentations of $\pi$ into $\mL(E_\alpha)$ and
$\mL(E_{\alpha'})$ are isomorphic, then the decomposition in
\eqref{eq:decomposition} is unique.
\item[(iii)] If $\pi_1 : \Gamma_1 \to \mL(E_1)$ and $\pi_2 : \Gamma_2 \to \mL(E_2)$
are two irreducible representations, then the (external) tensor
product representation
\[ \pi_1 \otimes \pi_2 : \Gamma_1 \times \Gamma_2 \to \mL(E_1 \otimes E_2) \]
is irreducible.
\end{enumerate}
\end{proposition}

The statements (i), (ii) and (iii) appear in \cite{Serre} as
Theorems 2, 8 and 10 respectively (note that although these
theorems are stated in \cite{Serre} for finite groups, they remain
valid for compact groups, as noted in Chapter 4).

\begin{proof}[Proof of Lemma \ref{lem:irreducible}]
Let us switch to the complex field: denote by $E^\C$ the
complexification of a real vector space $E$. Note that $H^\C$
naturally identifies with the space of complex $d \times d$
matrices, and $(H^{\otimes k})^\C$ naturally identifies with
$(H^\C)^{\otimes k}$.

We first check that the subspaces $H_0^\C$ and $H_1^\C$ are
irreducible for the action of $\mU(\C^d)$ on $H$. Obviously
$H_1^\C$ is irreducible. Let $A \in H_0^\C$ with $A \neq 0$ and
consider the (complex) space
\[ \mF = \Span \{ U^\dagger AU \st U \in \mU(\C^d) \} .\]
We must show that $\mF=H_0^\C$. Note that $\mF$ necessarily
contains a matrix with the diagonal non identically zero (since
one may diagonalize either the Hermitian or the anti-Hermitian
part of $A$, and one of them is nonzero). Averaging over all
diagonal unitary matrices shows then that $\mF$ contains a nonzero
diagonal matrix. Since the symmetric group $\mathfrak{S}_{d}$ acts
irreducibly on the hyperplane of sum zero vectors in $\C^{d}$, we
deduce that $A$ contains every diagonal trace zero matrix, and
therefore every Hermitian or skew-Hermitian trace zero matrix, so
that $\mF=H_0^\C$.

By Proposition \ref{thm-representations}(iii), it follows that for
every $k \in \N$ and every $(i_1,\dots,i_k) \in \{0,1\}^k$, the
subspace
\[ E_{(i_1,\dots,i_k)} = H^\C_{i_1} \otimes \cdots \otimes H^\C_{i_k} \]
is an irreducible subspace in $(H^{\C})^{\otimes k}$ for the
action of $\mU(\C^d)^k$. Moreover, the $2^k$ corresponding
subrepresentations are pairwise non-isomorphic (indeed, for every
$1 \leq j \leq k$, the index $i_j$ can be retrieved by looking at
the subgroup corresponding to the $j$th copy of $\mU(\C^d)$: its
action on $E_{(i_1,\dots,i_k)}$ is trivial iff $i_j=1$).
 By Proposition \ref{thm-representations}(ii), this is the unique decomposition of
$(H^\C)^{\otimes k}$ into irreducible subspaces, and any invariant
subspace is obtained as the direct sum of some of these
irreducible subspaces.

Note that a subspace $E \subset H^{\otimes k}$ is invariant if and
only if its complexification $E^\C \subset (H^\C)^{\otimes k}$ is
invariant. This implies the second part of Lemma
\ref{lem:irreducible}. The first part follows since the
irreducible subspaces are minimal in the lattice of invariant
subspaces.
\end{proof}

\section{The $\ell$-position} \label{app:ell-position}

{\bfseries Important:} In this section we restrict our attention
to {\bf symmetric} convex bodies.
 While most of the concepts and  results generalize to the non-symmetric case,
 it is not known whether the central result,  Proposition \ref{thm:MMstar}, holds in that setting.

Our presentation of $\ell$-position is standard and follows
closely the book \cite{Pisier1989}. Let $K \subset \R^m$ be a
symmetric convex body, and let $T : \R^m \to \R^m$ be a linear
operator. We define the quantity $\ell_K(T)$ as
\[ \ell_K(T) = \E \|T(G)\|_K, \]
where $G$ denotes the standard Gaussian vector in $\R^m$ (i.e. the
coordinates are independent $N(0,1)$ random variables). If there
is no ambiguity about  the underlying convex body, we write $\ell$
instead of $\ell_K$.  The following proposition collects
elementary properties of this concept.

\begin{proposition} \label{prop:ell-norm} If $K \subset \R^m$ is a symmetric convex body, then
\begin{enumerate}
 \item[(i)]  $\ell_K(\cdot)$ is a norm on $\mL(\R^m)$,
 \item[(ii)]   $\ell_K(T_1T_2) \leq \ell(T_1)_K \|T_2\|_{op}$ and
 $\ell_K(T_1T_2) \leq \ell_K(T_2) \|T_1\|_{op}$ (this is called
 the ideal property),
 \item[(iii)]
$ \ell_K(\Id) = w_G(K^\circ) = \gamma_m w(K^\circ) \sim \sqrt{m}\,
w(K^\circ)$ and  $\ell_K(u) = \ell_{u^{-1}K}(\Id)$,
 \item[(iv)] If $P_E$ denotes the orthogonal projection on a subspace $E \subset \R^m$, then
 \[ \ell_K(P_E)=w_G((K \cap E)^\circ)=w_G(P_EK^\circ),  \]
 where by $(K \cap E)^\circ$ we mean the polar inside $E$.
\end{enumerate}
\end{proposition}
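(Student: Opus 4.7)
The plan is to verify the four assertions in turn. All four boil down to combining the defining formula $\ell_K(T) = \E\|TG\|_K$ with the duality $\|x\|_K = \sup_{y \in K^\circ}\langle x,y\rangle$ (valid because $K$ is symmetric and closed, so $K^{\circ\circ} = K$) and with rotational invariance of the standard Gaussian $G$. There is no deep obstacle; the only part that needs some care is the ideal property (ii), whose ``main'' inequality rests on a short contraction argument via an auxiliary Gaussian.

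For (i), non-negativity and positive homogeneity are immediate from the corresponding properties of the gauge $\|\cdot\|_K$ applied pointwise; the triangle inequality follows by writing $(T_1+T_2)G = T_1G + T_2G$, applying the gauge triangle inequality, and taking expectations. To show $\ell_K(T)=0 \Rightarrow T=0$, use that $\|\cdot\|_K$ is equivalent to $|\cdot|$ on $\R^m$; then $\ell_K(T)=0$ forces $TG=0$ almost surely, and since $G$ has a density of full support this forces $T=0$.

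For (ii), after rescaling one may assume $\|T_2\|_{op}\leq 1$, so $\Id - T_2T_2^* \succeq 0$ and $R := \sqrt{\Id - T_2T_2^*}$ is well-defined. Introducing an independent standard Gaussian $G'$, the vector $T_2G + RG'$ has covariance $\Id$ and is therefore distributed as $G$. Hence
$$\ell_K(T_1) = \E\|T_1(T_2G + RG')\|_K \geq \E\|T_1T_2G\|_K = \ell_K(T_1T_2),$$
where the inequality is Jensen's, applied conditionally on $G$ and using that $T_1RG'$ is centred. This yields the first inequality. The second inequality is proved in the same spirit, now with $T_1$ acting on the codomain: under the appropriate interpretation of the operator norm on $(\R^m,\|\cdot\|_K)$ one has pointwise $\|T_1x\|_K \leq \|T_1\|_{op}\|x\|_K$, which integrated against the law of $T_2G$ gives the claim.

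For (iii), the duality $\|x\|_K = \sup_{y\in K^\circ}\langle x,y\rangle$ converts $\ell_K(\Id) = \E\|G\|_K$ into $\E\sup_{y\in K^\circ}\langle G,y\rangle = w_G(K^\circ)$ by \eqref{defn-mean-width}; the remaining equalities and asymptotics follow from \eqref{gammaN} and the two-sided bound \eqref{gammaNbounds}. The identity $\ell_K(u) = \ell_{u^{-1}K}(\Id)$ is a consequence of the elementary gauge relation $\|u(x)\|_K = \|x\|_{u^{-1}K}$, itself immediate from $u(x)\in tK \iff x\in t\,u^{-1}K$. Finally for (iv), note that for $v\in E$ the gauge $\|v\|_K$ coincides with the gauge $\|v\|_{K\cap E}$ computed inside $E$, so
$$\ell_K(P_E) = \E\|P_EG\|_K = \E\|P_EG\|_{K\cap E} = w_G((K\cap E)^\circ),$$
using that $P_EG$ is a standard Gaussian in $E$ and that Gaussian mean width is intrinsic, as noted after \eqref{defn-mean-width}. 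The alternative form $w_G(P_EK^\circ)$ is obtained directly by swapping $P_E$ across the inner product in \eqref{defn-mean-width}: $\E\sup_{y\in K^\circ}\langle P_EG,y\rangle = \E\sup_{y\in K^\circ}\langle G, P_Ey\rangle = w_G(P_EK^\circ)$.
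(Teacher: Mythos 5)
Your proof is correct, and since the paper omits the verification of Proposition~\ref{prop:ell-norm} (referring the reader to \cite{Pisier1989}), there is no competing argument in the text to compare against; what you do is the standard treatment. The Gaussian-comparison step in (ii) via the decomposition $G \stackrel{d}{=} T_2G + \sqrt{\Id - T_2T_2^*}\,G'$ followed by conditional Jensen is exactly the classical device, and the remaining parts are routine applications of the support-function duality $\|x\|_K = \sup_{y\in K^\circ}\langle x,y\rangle$ together with rotation invariance of the Gaussian.

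One point worth making explicit, which you correctly sensed with the phrase ``under the appropriate interpretation of the operator norm'': the second ideal inequality $\ell_K(T_1T_2)\leq \ell_K(T_2)\|T_1\|_{op}$ is \emph{false} if $\|T_1\|_{op}$ is read as the Euclidean operator norm. For example, with $K=[-M,M]\times[-1,1]$ in $\R^2$, $T_2=\diag(1,1/M)$ and $T_1$ the rotation by $\pi/2$, one has $\|T_1\|_{op}=1$ while $\ell_K(T_1T_2)\to\sqrt{2/\pi}$ and $\ell_K(T_2)\to 0$ as $M\to\infty$. The inequality holds only when $\|T_1\|_{op}$ is the operator norm of $T_1$ as a map on $(\R^m,\|\cdot\|_K)$, i.e.\ as a map between the Banach spaces rather than the underlying Hilbert spaces -- the distinction between the two ``sides'' of the ideal property that is visible in \cite{Pisier1989}. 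You handled this correctly, and it is also consistent with how the paper actually uses the ideal property: the only instances invoked in the proofs (e.g.\ $\ell_\K(uP_E)\leq\ell_\K(u)$ and $\ell_{\K^\circ}(P_V)\leq\ell_{\K^\circ}(P_{V'})$ for $V\subset V'$) are all of the first, Hilbert-space-side type, where $\|\cdot\|_{op}$ is genuinely Euclidean.
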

\noindent Before proceeding,  let us point out that the more
common definition of the $\ell$-norm is via the second moment,
namely $(\E \|T(G)\|_K^2)^{1/2}$. However, the two expressions are
equivalent.

\begin{proposition} \label{prop:gaussian-moments}
For any symmetric convex body $K \subset \R^m$ and for any linear
operator $T : \R^m\to \R^m$, we have
\[ \E \|T(G)\|_K \leq (\E \|T(G)\|_K^2)^{1/2}\leq \sqrt{\frac{\pi}{2}} \E \|T(G)\|_K .\]
\end{proposition}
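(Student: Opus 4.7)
The plan is to verify the two inequalities in turn. The left-hand inequality $\E\|T(G)\|_K \leq (\E\|T(G)\|_K^2)^{1/2}$ is immediate from Cauchy--Schwarz applied to the random variable $\|T(G)\|_K$ and the constant $1$ (or equivalently from Jensen's inequality applied to the convex function $t \mapsto t^2$), and uses neither the Gaussianity of $G$ nor the fact that $\|\cdot\|_K$ is a norm.

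For the right-hand inequality, the first step is to observe that $N(x) := \|T(x)\|_K$ is a seminorm on $\R^m$: it is positively homogeneous, and its subadditivity follows from the symmetry and convexity of $K$ (i.e., from $\|\cdot\|_K$ being a norm). The claim thus reduces to showing that, for any seminorm $N$ on $\R^m$ and any standard Gaussian vector $G$, $\bigl(\E N(G)^2\bigr)^{1/2} \leq \sqrt{\pi/2}\, \E N(G)$. This is the sharp $L^1$--$L^2$ Kahane--Khintchine-type inequality for Gaussian seminorms. The constant is optimal, as can be seen already in dimension one: taking $N(x) = |x|$ and $Z$ a standard one-dimensional Gaussian, one has $\E Z^2 = 1$ and $\E |Z| = \sqrt{2/\pi}$, so that $\bigl(\E Z^2\bigr)^{1/2}/\E|Z| = \sqrt{\pi/2}$ with equality.

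The main obstacle is securing this sharp constant in arbitrary dimension $m$. A natural strategy is to exploit rotational invariance of the Gaussian distribution together with the subadditivity of $N$: for independent copies $G, G'$ and $\theta \in [0, \pi/2]$, the vector $G\cos\theta + G'\sin\theta$ is again standard Gaussian, and one may attempt to reduce the estimate to the one-dimensional extremal case above via a slicing or conditioning argument (for instance, by decomposing along the direction of the subgradient of $N$ at $G$). Alternatively, since only the existence of a universal equivalence constant is needed for the applications in this paper, one may simply invoke the result as a classical fact about Gaussian moment comparison for seminorms (see, e.g., \cite{Pisier1989}). It is worth noting that soft approaches such as the Gaussian Poincar\'e inequality would yield a bound of the correct form, but with a constant governed by the Lipschitz constant of $N$, which need not be comparable to $\E N(G)$; hence that route cannot recover the sharp value $\sqrt{\pi/2}$.
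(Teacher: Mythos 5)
Your treatment is essentially the same as the paper's in structure --- the left inequality by Jensen/Cauchy--Schwarz, the reduction to seminorms $N(x)=\|Tx\|_K$, and then an appeal to an external reference for the nontrivial right inequality --- but the reference you give for the sharp constant is wrong, and that is where the real content lies. The bound $(\E N(G)^2)^{1/2} \leq \sqrt{\pi/2}\,\E N(G)$ for a general seminorm is \emph{not} a classical fact and is not in Pisier's 1989 book: what is classical (and what \cite{Pisier1989} or \cite{flatness} give you, via hypercontractivity or Gaussian concentration) is moment equivalence with an unspecified universal constant. The sharp constant $\sqrt{\pi/2}$ is a 1999 theorem of Lata{\l}a and Oleszkiewicz, deduced from their $S$-inequality, and this is precisely what the paper cites (\cite{LO}, Corollary 3). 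Your correct observation that the one-dimensional Gaussian saturates the bound identifies the extremal case but does not prove that it is extremal.

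The alternative you sketch --- independent copies $G,G'$, rotate by $\theta$, and ``slice or condition along the subgradient direction'' --- is offered only tentatively and does not amount to a proof. Making the reduction to the one-dimensional case rigorous for a general seminorm is exactly the difficulty; the rotational-invariance idea by itself is what underlies the soft, non-sharp Khinchine--Kahane estimate, and promoting it to the sharp constant requires the Lata{\l}a--Oleszkiewicz symmetrization argument, which is genuinely different from anything in your sketch. Your closing remark that the Gaussian Poincar\'e inequality cannot recover $\sqrt{\pi/2}$ is a good observation and shows you see why the sharp version is delicate; but as written, the right inequality with the stated constant is neither proved nor correctly attributed. To fix this, replace the citation to \cite{Pisier1989} with \cite{LO} (Corollary 3), as the paper does, or else carry out the $S$-inequality argument in full.
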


Proposition \ref{prop:gaussian-moments} (which we do not really
need here, but include for clarity) is a special case of Corollary
3 in \cite{LO}. If we do not insist on obtaining the optimal
constant $\sqrt{\pi/2}$, the result is more elementary
(essentially a special case of the so-called Khinchine--Kahane
inequality) and extends to non-symmetric convex bodies (see, e.g.,
\cite{flatness}, Lemma 3.3).

We now consider the maximization problem
\begin{equation} \label{maximization} \max_{\ell(T) \leq 1} \det T .\end{equation}
By compactness, the maximum is attained and it is obviously
strictly positive. Since, for $V \in O(m)$ and for any $T$, we
have $\ell(T) = \ell(TV)$, it follows that it is enough to
restrict our attention to $T$'s that are positive definite (PD).
We claim that, under the PD restriction, the solution $T_0$ 
to \eqref{maximization} is unique. 
Indeed, if we had any other solution $T_1$, it would
follow that $T=(T_0+T_1)/2$ verifies, on the one hand, $\ell(T)
\leq 1$ and, on the other hand, $\det T > (\det T_0)^{1/2}  (\det
T_1)^{1/2}= \det T_0$ (by strict log-concavity of $\det$ over
PD), a contradiction. Clearly, $\ell (T_0)~=~1$.

If the maximum above is attained when $T$ is a multiple of
identity, one says that $K$ is in the $\ell$-position. For every
symmetric convex body $K$, there is a linear transformation $T$
such that $TK$ is in the $\ell$-position; moreover $T$ is unique
up to rotations and homotheties.

We will take advantage of the uniqueness of the $\ell$-position
through the following lemma
\begin{lemma} \label{lem:unique-ell}
Let $K \subset \R^m$ be a symmetric convex body and $\Gamma$ be
the
 isometry group of $K$ (i.e. the set of orthogonal transformations $U$ such that $UK=K$).

Then there is a linear map $T$ such that $TK$ is in the
$\ell$-position and which has the form
\[ T = \sum_i \lambda_i P_{E_i}, \]
for some $\lambda_i>0$ and some subspaces $(E_i)$ which are
invariant under the action of $\Gamma$.
\end{lemma}

\begin{proof}
Let $T \geq 0$ be the unique solution to the maximization problem
\eqref{maximization}. For any $\gamma \in \Gamma$ and $x \in
\R^m$, we have $\| \gamma(Tx) \|_K = \|Tx\|_K$,  hence
$\ell(\gamma T)=\ell(T)$. Since $\ell(TV)=\ell(T)$ for any
orthogonal transformation $V$, it follows that $\ell(\gamma
T\gamma^{-1})=\ell(T)$. Uniqueness of the solution implies that
$\gamma T\gamma^{-1}=T$. Write
\[ T= \sum_i \lambda_i P_i, \]
where $\lambda_i>0$ are distinct positive numbers and $P_i$
pairwise orthogonal projectors. Given $i$, we have $\gamma
P_i\gamma^{-1}=P_i$ for all $\gamma \in \Gamma$, in particular
 the range of $P_i$ is invariant under the action of $\Gamma$.
\end{proof}

We will use the following theorem which compares the mean width of
$K$ and the mean width of $K^\circ$ whenever $K$ is a convex body
in the $\ell$-position.

\begin{proposition} \label{thm:MMstar}
Let $K \subset \R^m$ be a symmetric convex body in the
$\ell$-position. Then
\[ \ell_K(\Id)\, \ell_{K^\circ}(\Id) \lesssim m \log m ,\]
or, equivalently,
\[ w(K) \,w(K^\circ) \lesssim \log m .\]
\end{proposition}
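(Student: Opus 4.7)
The plan is to deduce Proposition \ref{thm:MMstar} from Pisier's $MM^*$-estimate, combined with his K-convexity theorem and John's theorem; these are classical results from asymptotic geometric analysis (see \cite{Pisier1989}) that I would invoke rather than reproduce, since their proofs rely on substantial machinery from the theory of Gaussian processes on Banach spaces.

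First I would verify that the two displayed inequalities in the statement are equivalent. By Proposition \ref{prop:ell-norm}(iii), $\ell_K(\Id) = \gamma_m w(K^\circ)$, and the same identity applied with $K^\circ$ in place of $K$ (using $K^{\circ\circ}=K$, which holds since $K$ is a symmetric convex body) yields $\ell_{K^\circ}(\Id) = \gamma_m w(K)$. Since $\gamma_m^2 \simeq m$ by \eqref{gammaNbounds}, the products on the two sides differ by a factor $\simeq m$, so the two inequalities are equivalent up to absolute constants. It therefore suffices to prove the bound $w(K)\cdot w(K^\circ) \lesssim \log m$.

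The heart of the argument is Pisier's $MM^*$-inequality. Consider the normed space $X = (\R^m, \|\cdot\|_K)$, whose dual is $X^* = (\R^m, \|\cdot\|_{K^\circ})$: by definition $w(K^\circ)$ equals the average on the unit sphere of the norm of $X$, and $w(K)$ the analogous average for $X^*$. The key inequality, which holds precisely because $K$ is in $\ell$-position, is
\[ w(K) \cdot w(K^\circ) \lesssim \kappa(X), \]
where $\kappa(X)$ denotes the K-convexity constant of $X$, i.e.\ the operator norm of the Gaussian projection acting on $L_2(X)$. The proof exploits the extremality property that characterizes the $\ell$-position, together with a duality/factorization argument relating the Gaussian projection to the $\ell$-norm; this is the main technical step and the principal obstacle, but I would simply quote the result.

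To conclude I would invoke Pisier's K-convexity theorem, which asserts that $\kappa(Y) \lesssim \log(1 + d_{BM}(Y, \ell_2^m))$ for every $m$-dimensional normed space $Y$, where $d_{BM}$ denotes the Banach--Mazur distance. Combined with John's theorem $d_{BM}(X, \ell_2^m) \leq \sqrt{m}$, this yields $\kappa(X) \lesssim \log m$, which via the previous step gives $w(K)\cdot w(K^\circ) \lesssim \log m$ and completes the proof.
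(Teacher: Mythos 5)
Your proposal is correct and takes essentially the same route as the paper: the paper proves Proposition \ref{thm:MMstar} by citing the combination of the Figiel--Tomczak-Jaegermann inequality (bounding $w(K)w(K^\circ)$ by the $K$-convexity constant in the $\ell$-position) with Pisier's $O(\log d_{BM}(X,\ell_2^m))$ bound on the $K$-convexity constant and John's theorem, exactly the chain you spell out, and refers the reader to Chapter 3 of \cite{Pisier1989} for the details. Your preliminary check that the two displayed inequalities are equivalent via Proposition \ref{prop:ell-norm}(iii) and $\gamma_m^2 \simeq m$ is also correct.
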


This deep result is known in the Asymptotic Geometric Analysis as the
``$MM^*$-estimate". It follows by combining results of Figiel and
Tomczak-Jaegermann with those of Pisier, including in particular
sharp bounds on the so-called ``$K$-convexity constant.'' See
chapter 3 in \cite{Pisier1989} for a complete proof and
references.

Incidentally, while $O(\log m)$ is the optimal general upper bound
for the $K$-convexity constant mentioned above (see
\cite{Bourgain1984}), to the best of our knowledge it is not known
 whether it gives the correct order in Proposition  \ref{thm:MMstar}.
The pair $(\ell_1^m,\ell_{\infty}^m)$ gives an example for which
$w(K)w(K^\circ)$ is of order $\sqrt{\log m}$. In the non-symmetric
case, the $m$-dimensional simplex $\Delta$ is an example with
$w(\Delta)w(\Delta^\circ)$ $\simeq \log m$, but known general upper
bounds for non-symmetric convex bodies are much weaker
\cite{flatness, Rudelson}.

Note that the lower bound $ w_G(K) w_G(K^\circ) \geq \gamma_m^2 \sim m$ 
(or, equivalently,  $
w(K) w(K^\circ) \geq 1$) follows simply from
\begin{eqnarray*}
\gamma_m &=&  \E |G| \leq \E \big(\|G\|^{1/2}_K\|G\|^{1/2}_{K^\circ}\big) \leq
\left( \E \|G\|_K \right)^{1/2}\left( \E \|G\|_{K^\circ}
\right)^{1/2}\\&=&\gamma_m w(K^\circ)^{1/2}w(K)^{1/2} .\end{eqnarray*}

{\small 
\vskip6mm \noindent {\bf Acknowledgements.} Part of this research was performed
during the Fall of 2010 while SJS and DY
visited the Fields Institute in Toronto, Canada
(``Thematic Program on Asymptotic Geometric Analysis"), and while
GA and SJS visited Institut Mittag-Leffler
(``Quantum Information Theory'' programme).
Thanks are due to these institutions and to their staff  for their hospitality, and to 
fellow participants for many inspiring interactions. 
SJS thanks Harsh Mathur for providing insight into physicist's view 
of quantum theory. 

The research of GA was supported in part by the
{\itshape Agence Nationale de la Recherche} grants ANR-08-BLAN-0311-03 
and ANR 2011-BS01-008-02. 

The research of SJS was supported in part by grants from the
{\itshape National Science Foundation (U.S.A.)}, from the {\itshape U.S.-Israel
Binational Science Foundation}, and by the second ANR grant listed under GA. 

The research of DY was 
supported by the Fields Institute, the {\itshape NSERC} Discovery Accelerator
Supplement Grant \#315830 (from Carleton University), and by  {\itshape ERA} 
and  {\itshape NSERC} discovery grants (from the University of Ottawa), while
DY held a Fields-Ontario postdoctoral fellowship.
}

%\small

\bibliographystyle{plain}

\small

\vskip 10mm \noindent  Guillaume Aubrun, {\small \em Institut
Camille Jordan, Universit\'{e} Claude Bernard Lyon 1, 43 Boulevard
du 11 Novembre 1918, 69622 Villeurbanne CEDEX, France.}\\
{\small \tt E-mail: aubrun@math.univ-lyon1.fr}

\vskip 2mm \noindent  Stanis\l aw  Szarek,
{\small \em Case Western Reserve University,
Cleveland, Ohio 44106-7058, USA} {\small and} \\  {\small \em
Institut de Math\'ematiques de Jussieu, Universit\'{e}  Pierre et Marie Curie, 75005 Paris, France }\\
{\small\tt Email: szarek@cwru.edu}

\vskip 2mm \noindent Deping Ye, {\small \em School of Mathematics
and Statistics, Carleton University, Ottawa, ON, K1S5B6, Canada,}
 {\small and {\em The Fields Institute, Toronto, ON, M5T3J1, Canada.}} 
 
\noindent {\small Current address:  {\em Department of
Mathematics and Statistics, Memorial University
of Newfoundland, St. John's, Newfoundland, Canada A1C 5S7}} \\
 {\small \tt Email: deping.ye@mun.ca}

\end{document}